\documentclass[11pt]{article}
\usepackage[margin=1in]{geometry}
\usepackage{fullpage}
\usepackage{amsmath,amssymb,amsthm,mathtools,bbm}
\usepackage{natbib}
\usepackage{authblk}
\usepackage[colorlinks=true,linkcolor=blue,citecolor=blue,urlcolor=blue]{hyperref}
\usepackage[capitalise,noabbrev]{cleveref}
\crefname{assumption}{Assumption}{Assumptions}

\makeatletter
\let\oldappendix\appendix
\renewcommand{\appendix}{%
  \oldappendix
  \crefalias{section}{appendix}%
  \crefalias{subsection}{appendix}%
  \crefalias{subsubsection}{appendix}%
}
\makeatother

\crefname{appendix}{Appendix}{Appendices}
\Crefname{appendix}{Appendix}{Appendices}

\usepackage{booktabs}
\usepackage{enumitem}
\usepackage{xcolor}
\usepackage{tikz}
\usetikzlibrary{arrows.meta, positioning}

\usepackage{aliascnt}
\newtheorem{theorem}{Theorem}[section]
\newaliascnt{proposition}{theorem}
\newtheorem{proposition}[proposition]{Proposition}
\aliascntresetthe{proposition}
\newaliascnt{lemma}{theorem}
\newtheorem{lemma}[lemma]{Lemma}
\aliascntresetthe{lemma}
\newaliascnt{corollary}{theorem}
\newtheorem{corollary}[corollary]{Corollary}
\aliascntresetthe{corollary}
\newaliascnt{assumption}{theorem}
\newtheorem{assumption}[assumption]{Assumption}
\aliascntresetthe{assumption}
\newaliascnt{definition}{theorem}
\newtheorem{definition}[definition]{Definition}
\aliascntresetthe{definition}
\newaliascnt{remark}{theorem}
\newtheorem{remark}[remark]{Remark}
\aliascntresetthe{remark}
\newaliascnt{example}{theorem}
\newtheorem{example}[example]{Example}
\aliascntresetthe{example}
\crefname{proposition}{Proposition}{Propositions}
\crefname{lemma}{Lemma}{Lemmas}
\crefname{corollary}{Corollary}{Corollaries}
\crefname{definition}{Definition}{Definitions}
\crefname{remark}{Remark}{Remarks}
\crefname{example}{Example}{Examples}

\newcommand{\indep}{\perp\!\!\!\!\!\!\perp}
\newcommand{\E}{\mathbb{E}}

\newcommand{\R}{\mathbb{R}}
\newcommand{\Var}{\mathrm{Var}}
\newcommand{\Cov}{\mathrm{Cov}}

\newcommand{\cB}{\mathcal{B}}

\newcommand{\cV}{\mathcal{V}}
\newcommand{\cT}{\mathcal{T}}
\newcommand{\cN}{\mathcal{N}}
\newcommand{\cH}{\mathcal{H}}
\newcommand{\cG}{\mathcal{G}}
\newcommand{\cF}{\mathcal{F}}
\newcommand{\cL}{\mathcal{L}}
\DeclareMathOperator*{\argmin}{arg\,min}

\title{Causal Robustness from Synthetic Environments}
\author[1]{Linying Yang}
\author[2]{Cyrill Scheidegger}
\author[3]{Xinwei Shen}
\author[2]{Peter Bühlmann}
\affil[1]{Department of Statistics, University of Oxford}
\affil[2]{Seminar for Statistics, ETH Zurich}
\affil[3]{Department of Statistics, University of Washington}
\date{\today}

\begin{document}
\maketitle

\begin{abstract}
We study robust prediction when the training data come from a single observational distribution. Causality-oriented methods for robust prediction usually use heterogeneity across observed environments both to learn an invariant prediction function and to determine the perturbations against which it is robust. We show that both can be recovered from one distribution under a structural additive-noise mechanism for the response. Under the structural model considered here, the causal prediction function has constant conditional mean squared error given its prediction. We generate synthetic environments by reweighting the covariate distribution along the prediction and use a Cramér--von Mises criterion to measure departures from this invariance. The resulting penalised risk admits a regularised worst-case representation, linking the criterion to robustness against covariate shifts induced by reweighting observations according to their predicted values. Under an additional orthogonality condition, we identify the causal prediction function as the unique risk minimiser subject to this invariance restriction. We establish consistency and an explicit rate, derive a goodness-of-fit test for the invariance restriction, and characterise the population target under misspecification. With multiple observed environments, the construction applies environment by environment and can be combined with Distributional Robustness via Invariant Gradients (DRIG) to control two complementary classes of shifts.
\end{abstract}

\section{Introduction}\label{sec:introduction}

Prediction functions chosen solely for their performance under the training distribution can lose accuracy after a distribution shift. Distributionally robust optimisation (DRO) addresses this problem by minimising the worst-case population risk over a class $\mathcal{P}$ of plausible test distributions \citep{esfahani2018data,meinshausen2018causality,duchi2021learning}. For a loss function $\ell$, its general form is
\begin{equation*}
\min_{g\in\cG}\ \sup_{P\in\mathcal{P}}\ \E_P\big[\ell(Y,g(X))\big].
\end{equation*}
Here $g$ ranges over a function class $\cG$, and $\mathcal{P}$ is the perturbation class against which the prediction function is protected. We use the squared loss $\ell(y,t)=(y-t)^2$. If $\mathcal P$ contains only the training distribution, replacing the population expectation by its empirical counterpart gives least-squares empirical risk minimisation (ERM).

A standard DRO construction takes $\mathcal{P}$ to be a ball $\{Q:D(Q,P)\leq\rho\}$ around the training distribution $P$, for a chosen distance or divergence $D$. Such a ball describes the size of a shift without modelling how it arises. Group DRO instead takes the worst case over predefined groups \citep{sagawa2020distributionally}, so the possible test distributions are built from these groups. A distance ball large enough to cover relevant structured shifts may contain perturbations that cannot arise under the data-generating mechanism. A smaller ball may omit relevant shifts.

Causal models can restrict the perturbation class by specifying which structural equations may change and which remain invariant. Existing methods learn such invariances from multiple observed environments or from so-called anchor variables, which are extensions of instrumental variables. Invariant causal prediction uses shifts across environments to find a subset $S$ for which $P(Y\,|\,X_S)$ is stable \citep{peters2016causal}. Anchor regression \citep{rothenhausler2021anchor} and Distributional Robustness via Invariant Gradients \citep[DRIG;][]{shen2026causality} use anchor variables or observed environments to learn directions of shift. Anchor regression penalises the residual component explained by the anchor variables, while DRIG penalises excess risk across environments. Their linear robustness results protect against shifts along the learned directions. Recovery of the causal prediction function requires additional identification conditions. Causal Dantzig \citep{rothenhausler2019causal} targets this function by requiring the expected product of each covariate and the prediction residual to be invariant across environments. Invariant risk minimisation seeks a representation for which the same predictor is risk-optimal across observed environments \citep{arjovsky2019invariant}, whereas risk extrapolation penalises variation of the risks across environments \citep{krueger2021out}. All these methods rely on heterogeneity supplied by environments or anchor variables.

We provide detailed reviews in \Cref{sec:related}.

\subsection{Contributions}
We ask what can be learned when no environmental labels or anchor variables are available. This requires an implication of the causal model in \Cref{ass:structural} that is visible in one observational distribution. We construct synthetic environments by reweighting this distribution and develop a distributionally robust method that enforces invariance across the resulting distributions. Under the conditions in \Cref{sec:ident}, the method recovers the causal prediction function. \Cref{rem:gaussian_linear} clarifies their scope under joint Gaussianity. Even when the causal prediction function is not identified, the criterion at finite $\lambda$ can still provide robustness against covariate reweightings based on the prediction. The following list gives a technical outline of our results.
\begin{itemize}
\item \emph{Synthetic environments and the CvM penalty.} For a candidate prediction function $g$ of $Y$, we reweight the covariates according to $g(X)$. Under these reweightings, the conditional MSE given the prediction is unchanged while the marginal distribution of the prediction varies. A Cram\'er--von Mises (CvM) penalty aggregates the resulting risk differences (\Cref{sec:reweight,sec:cvm}).

\item \emph{Distributional robustness.} We add the CvM penalty to prediction risk, with a regularisation parameter $\lambda$ that trades fit under the observational distribution against departure from invariance. For finite $\lambda>0$, the resulting objective has a regularised worst-case representation over perturbations that reweight observations according to their predictions. The perturbation class is therefore induced by the CvM construction rather than specified externally (\Cref{sec:dro}).

\item \emph{Identification and statistical guarantees.} As $\lambda\to\infty$, the penalty enforces constant conditional MSE, or equivalently invariant risk across the synthetic environments. Under an orthogonality condition and suitable regularity conditions, the population minimisers then approach the causal prediction function $g_0$. We prove consistency and a rate for the empirical estimator and derive a goodness-of-fit test for this invariance restriction (\Cref{sec:ident,sec:theory}).

\item \emph{Multiple observed environments.} When multiple environments are observed, we compute the CvM penalty separately within each environment and combine it with DRIG, thereby exploiting the observed heterogeneity. The two terms protect against different classes of shifts (\Cref{sec:multienv}).

\item \emph{Empirical performance.} In synthetic experiments satisfying \Cref{ass:structural}, the method recovers the causal prediction function $g_0$ and remains accurate under interventions on variables causally downstream of $Y$. On the RPE1 Perturb-seq data of \citet{replogle2022mapping}, following the experimental setup of \citet{shen2026causality}, the estimator combining DRIG with the environment-weighted CvM penalty attains the lowest worst-case error in most settings.
\end{itemize}

\subsection{\texorpdfstring{Benchmark methods and related work}{Benchmark methods and related work}}\label{sec:related}

We state the objectives of anchor regression and DRIG because both are experimental benchmarks, and the DRIG objective is combined with our penalty in \Cref{sec:multienv}. Their robustness classes also clarify how the methods differ from our reweighting construction, as illustrated in \Cref{fig:intervention_dags}. We use linear prediction functions $X^\top\beta$ because the cited worst-case guarantees for both methods are established under linear structural models. The DRIG training objective itself can be evaluated over a broader prediction class; this does not by itself extend its structural robustness theorem beyond the linear setting.

Anchor regression assumes a linear structural model $W = \mathrm B W + \mathrm{M} A + \varepsilon$, where $W$ collects the observed variables $(X,Y)$ together with any hidden confounders and the exogenous anchor variable $A$ acts through the term $\mathrm{M} A$. Here $\mathrm B$ and $\mathrm M$ are fixed coefficient matrices describing, respectively, the effects among the components of $W$ and the effects of $A$ on $W$. With $\Gamma_A$ the $L^2$ projection onto $\mathrm{span}(A)$, so that $\mathrm{Id} - \Gamma_A$ projects onto its orthogonal complement, anchor regression solves
\begin{equation}\label{eq:anchor_obj}
\hat\beta_\gamma
=
\arg\min_\beta
\E\big[\big((\mathrm{Id}-\Gamma_A)(Y-X^\top\beta)\big)^2\big]
+
\gamma\,\E\big[\big(\Gamma_A(Y-X^\top\beta)\big)^2\big].
\end{equation}
The objective weights the residual's component in $\mathrm{span}(A)$ by $\gamma$; at $\gamma=1$ it is ordinary least squares. Provided zero anchor-projected residual is attainable, the penalty drives the residual orthogonal to the anchor variable as $\gamma\to\infty$, recovering a prediction function invariant to shifts induced by the anchor variable \citep[Lemma~2]{rothenhausler2021anchor}. Such a shift replaces the term $\mathrm{M} A$ by a vector $\delta$ in the column space of $\mathrm{M}$, independent of the noise $\varepsilon$ though not of $A$, giving the perturbed system $W = \mathrm B W + \delta + \varepsilon$. The minimiser $\hat\beta_\gamma$ minimises the worst-case prediction risk over these shifts whose second moment satisfies $\E[\delta\delta^\top] \preceq \gamma\,\E\big[(\mathrm{M} A)(\mathrm{M} A)^\top\big]$, so larger $\gamma$ protects against larger shifts. The independence between $\delta$ and $\varepsilon$ is what the worst-case identity uses \citep[Theorem~1 and Section~8.5]{rothenhausler2021anchor}.

DRIG uses several observed environments with distributions $\{P_e\}_{e\in\mathcal E}$, one observational and the others interventional. For each $e$, $W_e=(X_e,Y_e)$ has distribution $P_e$. Its robustness result assumes, in every environment, the linear structural model $W_e=\mathrm BW_e+\varepsilon_e$, with $(\mathrm{Id}-\mathrm B)$ invertible and $\mathrm B$ invariant across environments.
Changes in the distribution of $\varepsilon_e$ induce changes in $W_e$ through the invariant structural equations. Let $e_0$ be the observational environment and let $\omega_e\geq0$ with $\sum_e\omega_e=1$. Writing $S_e=\E_{P_e}[\varepsilon_e\varepsilon_e^\top]$, DRIG assumes $S_{e_0}\preceq S_e$ for every $e$ \citep[Assumption~1]{shen2026causality}. In particular, $\sum_e\omega_e(S_e-S_{e_0})\succeq0$. Its population objective is
\begin{equation}\label{eq:drig_explicit}
\mathcal{L}_{\mathrm{DRIG}}(\beta; \gamma)
=
\E_{P_{e_0}}\big[(Y - X^\top\beta)^2\big]
+
\gamma \sum_{e \in \mathcal{E}} \omega_e
\Big(
\E_{P_e}\big[(Y - X^\top\beta)^2\big]
-
\E_{P_{e_0}}\big[(Y - X^\top\beta)^2\big]
\Big),
\end{equation}
the reference risk plus $\gamma$ times the weighted excess risk across environments. It equals the reference risk at $\gamma=0$ and the pooled risk at $\gamma=1$. For a random perturbation vector $\delta$, let $Q_\delta$ be the distribution induced by $W=(\mathrm{Id}-\mathrm B)^{-1}\delta$. The linear DRIG objective satisfies
\begin{equation}\label{eq:drig_dro}
\begin{aligned}
\mathcal{L}_{\mathrm{DRIG}}(\beta; \gamma)
&=
\sup_{Q \in \mathcal{B}_\gamma} \E_Q\big[(Y - X^\top\beta)^2\big], \\
\text{where}\quad \mathcal{B}_\gamma
&=
\Big\{ Q_\delta : \E[\delta\delta^\top] \preceq S_{e_0} + \gamma \sum_{e \in \mathcal{E}} \omega_e (S_e - S_{e_0}) \Big\}.
\end{aligned}
\end{equation}
Thus larger $\gamma$ permits stronger interventions in directions estimated from the observed environments \citep[Theorem~3]{shen2026causality}. Replacing $X^\top\beta$ by a general $g(X)$ still defines a valid training objective. The identity \eqref{eq:drig_dro}, however, is the robustness theorem for the linear structural model and linear prediction function above.

Both methods are used as benchmarks, and the DRIG objective in \eqref{eq:drig_explicit} enters the combined estimator in \Cref{sec:multienv}. DRIG evaluates risks under the observed heterogeneous distributions $P_e$ in \eqref{eq:drig_explicit}. Our synthetic environments are not substituted for these distributions; they arise from reweighting a single distribution according to the prediction, as illustrated in \Cref{fig:intervention_dags}.

Our construction is also related to covariate shift adaptation and importance weighting \citep{shimodaira2000improving,sugiyama2007covariate}, which likewise reweight the covariate distribution. Those methods target a specified test distribution. We instead generate a family of synthetic environments without specifying a target distribution. A related criterion is the conditional variance penalty of \citet{heinze2021conditional}. Their method uses a discrete identifier such as the person identity, and groups observations sharing the same response and identifier. It penalises within-group variation of the prediction or loss. Our groups are replaced by reweightings based on the prediction and require no identifier.

\subsection{Setup}\label{sec:setup}
Let $X\in\R^p$ be the covariate vector and $Y\in\R$ the response, with $\E[Y^2]<\infty$. For a random variable $Z$ with finite second moment, write $\|Z\|_P^2=\E_P[Z^2]$. For a measurable function $f$ of $X$, write $\|f\|_{P_X}^2=\E_{P_X}[f(X)^2]$; the space $L^2(P_X)$ consists of those functions with finite norm. For a measurable prediction function $g:\R^p\to\R$, write $F_g=g(X)$ for the resulting prediction random variable.
The prediction error is $Y-g(X)$, and $R_g=(Y-g(X))^2$ denotes its square. The conditional MSE given the prediction is
\begin{align*}
\sigma_g^2(t) = \E\big[R_g \,\big|\, F_g = t\big],
\end{align*}
and the marginal MSE $\bar\sigma_g^2=\E[R_g]$ is the risk, also denoted by $\cL(g)$. Let $P_g$ denote the distribution of $F_g$, and let $H_g(s)=P(F_g\leq s)$ be its distribution function.
We retain the subscript $g$ throughout.\footnote{We index these quantities by $g$, since the analysis compares prediction functions under the reference distribution $P$. They also depend on $X$ and $P_X$, but we suppress this dependence in the notation because $X$ and $P_X$ are held fixed as $g$ varies.} The conditional MSE is not in general a conditional variance; they agree when $\E[Y\,|\, g(X)]=g(X)$. We consider the following prediction class.
\begin{definition}[Prediction function class]\label{def:Gclass}
$\cG$ is a class of measurable functions $g : \R^p \to \R$ such that, for every $g \in \cG$, $g(X) \in L^2(P_X)$, $\bar\sigma_g^2 > 0$, and $H_g$ is continuous.
\end{definition}

The causal mechanism for $Y$ is specified as follows.
\begin{assumption}[Structural additive noise model]\label{ass:structural}
We assume $Y \gets g_0(X) + \varepsilon$ with $g_0 \in \cG$, $\varepsilon \indep g_0(X)$, $\E[\varepsilon] = 0$, and $\Var(\varepsilon) = \sigma^2 > 0$.
\end{assumption}

We use the assignment symbol to specify the structural mechanism for $Y$, rather than merely an equality under the observational distribution. This structural interpretation cannot be tested from one distribution alone. In a standard additive-noise structural equation model, $g_0$ is the structural function of the parents of $Y$, and independence of the noise from those parents implies $\varepsilon\indep g_0(X)$. This condition, imposed in \Cref{ass:structural}, is weaker than $\varepsilon\indep X$. Thus $X$ may contain descendants of $Y$, provided $g_0$ does not depend on their coordinates. Hidden confounding is also allowed when $\varepsilon$ remains independent of $g_0(X)$, even if it is correlated with other components of $X$.

Under \Cref{ass:structural}, the prediction error of $g_0$ is $\varepsilon$, independent of $g_0(X)$. Hence its conditional MSE is constant given the predicted value. In particular, $\sigma_{g_0}^2(t)=\Var(\varepsilon)=\sigma^2$ for $P_{g_0}$-almost every $t$. This motivates the class defined as follows.

\begin{definition}[Invariant prediction functions]\label{def:gi} The class of invariant prediction functions is
\begin{align}\label{eq:invariance_restriction}
    \cG_I \coloneqq \big\{g \in \cG : \sigma_g^2(t) \equiv c_g\ \mbox{for}\ P_g\mbox{-almost every}\ t, \ \mbox{for some constant}\ c_g \big\}.
\end{align}
\end{definition}

Since $\sigma_{g_0}^2$ is constant, $g_0\in\cG_I$. \Cref{sec:reweight} shows that $g\in\cG_I$ exactly when its risk is invariant across the synthetic environments generated by reweighting through the prediction $g(X)$.  Although constant conditional MSE restricts the prediction class, it constrains only the conditional second moment of the prediction error $Y-g(X)$, allowing its conditional distribution to vary in other respects. Other prediction functions may also satisfy this restriction, so it does not by itself identify $g_0$. We therefore impose the additional orthogonality condition in \Cref{ass:ortho} to identify $g_0$ from $P$ alone.

The main notation is collected below.
\begin{center}
\small
\begin{tabular}{@{}ll@{}}
\toprule
Symbol & Definition \\
\midrule
$F_g=g(X)$ & Prediction random variable induced by $g$ \\
$R_g=(Y-g(X))^2$ & Squared prediction error \\
$H_g(s)=P(F_g\leq s)$ & Distribution function of the prediction \\
$\sigma_g^2(t)=\E[R_g\,|\,F_g=t]$ & Conditional MSE given the prediction \\
$\bar\sigma_g^2=\E[R_g]=\cL(g)$ & Marginal MSE, or prediction risk \\
$U_g=H_g(F_g)$ & Rank of the prediction \\
$\psi_g(u)$ & Centred conditional MSE on the rank scale \\
$\Psi_g(u)=\int_0^u\psi_g(v)\,\mathrm{d}v$ & Cumulative conditional-MSE deviation \\
$\cT(g)=\int_0^1\Psi_g(u)^2\,\mathrm{d}u$ & CvM penalty \\
\bottomrule
\end{tabular}
\end{center}

\section{Reweighting and Synthetic Environments}\label{sec:reweight}

\Cref{def:gi} defines $\cG_I$ from a single observational distribution $P$, requiring no external environments. Yet the name \emph{invariant} points to behaviour across environments; we justify it by generating such environments from $P$, reweighting its covariate distribution.

For a nonnegative weight $w$ with $\E[w(X)]=1$, define $P_w$ by $\mathrm dP_w/\mathrm dP=w(X)$. Such a weight depends on $x$ but not on $y$.

\begin{proposition}\label{prop:reweight}
Suppose the weight factors as $w(x) = \tilde w(s(x))$ through a summary $V = s(X)$ of the covariates. Then, as regular conditional distributions, reweighting leaves the conditional distribution of $Y$ invariant given $V$ and given the full covariate vector $X$:
\begin{align*}
P_w(Y \,|\, V) = P(Y \,|\, V) \qquad \text{and} \qquad P_w(Y \,|\, X) = P(Y \,|\, X),
\end{align*}
each equality holding for $P_w$-almost every value of the conditioning variable.
More generally, if $w$ depends only on $X_{S_w}$, then $P_w(Y\,|\, X_S)=P(Y\,|\, X_S)$ for every $S\supseteq S_w$.
\end{proposition}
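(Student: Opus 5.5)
The plan is to prove the general statement first: if $w$ depends only on $X_{S_w}$, then $P_w(Y\,|\,X_S)=P(Y\,|\,X_S)$ for every $S\supseteq S_w$. The two displayed equalities then follow as special cases. Given $X$, take $S=\{1,\dots,p\}$. Given $V=s(X)$, apply the same argument with $X_S$ replaced by $V$. The only property used is that $w(X)$ is $\sigma(V)$-measurable, since $w=\tilde w\circ s$. So I will write one argument for a generic conditioning variable $Z$ (either $V$ or $X_S$) for which $w(X)=\varphi(Z)$ with $\varphi\ge 0$ measurable and $\E[\varphi(Z)]=1$.

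The core step is a Bayes-type change-of-measure computation. Let $\kappa(z,\cdot)$ be a regular conditional distribution of $Y$ given $Z$ under $P$. I will verify directly that $\kappa$ is also a version of $P_w(Y\,|\,Z)$. Take bounded measurable $h$ and a measurable set $B$ in the range of $Z$. Then
\begin{align*}
\E_{P_w}\big[h(Y)\mathbbm 1\{Z\in B\}\big]
&=\E_P\big[\varphi(Z)\mathbbm 1\{Z\in B\}h(Y)\big]
=\E_P\Big[\varphi(Z)\mathbbm 1\{Z\in B\}\int h\,\mathrm d\kappa(Z,\cdot)\Big]\\
&=\E_{P_w}\Big[\mathbbm 1\{Z\in B\}\int h\,\mathrm d\kappa(Z,\cdot)\Big].
\end{align*}
The first and last equalities use $\mathrm dP_w/\mathrm dP=w(X)=\varphi(Z)$. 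The middle equality is the tower property under $P$, and it uses that $\varphi(Z)\mathbbm 1\{Z\in B\}$ is $\sigma(Z)$-measurable. It is justified by monotone convergence applied to the truncations $\varphi\wedge n$, since $\varphi$ may be unbounded. This is exactly the defining property of $P_w(Y\,|\,Z)$, so $\kappa$ is a regular conditional distribution of $Y$ given $Z$ under $P_w$.

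The last step is the almost-sure qualification. Regular conditional distributions are unique only up to null sets of the conditioning variable's law under the relevant measure. Uniqueness under $P_w$ therefore gives equality of $P_w(Y\,|\,Z)$ and $P(Y\,|\,Z)$ for $P_w$-almost every $z$. Because $P_w\ll P$, any $P$-version of $P(Y\,|\,Z)$ is well defined $P_w$-a.s., so the statement is meaningful as written. Existence of $\kappa$ follows from $Y$ being real-valued. The converse, $P$-a.s. equality, can fail where $\varphi=0$, which is why the statement is phrased $P_w$-a.s.

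I expect the main obstacle to be only this measure-theoretic bookkeeping: handling unbounded $\varphi$ and stating the null-set convention correctly. The key structural point is that the density is measurable with respect to the conditioning $\sigma$-field. That measurability is exactly where $S\supseteq S_w$, or the factorisation through $V$, is used.
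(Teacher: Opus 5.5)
Your proposal is correct and takes essentially the same route as the paper. Both arguments apply the tower property under $P$ to the $\sigma(T)$-measurable density $\bar w(T)$, for a generic statistic $T$ through which $w$ factors, and then specialise $T$ to $V$, $X$ and $X_S$. The difference is packaging: the paper equates two expressions to get $\bar w(T)\{\E_w[\phi(Y)\,|\,T]-\E[\phi(Y)\,|\,T]\}=0$ $P$-a.s. and then uses $P_w(\bar w(T)>0)=1$. You instead verify directly that the $P$-kernel satisfies the defining identity under $P_w$ and invoke uniqueness, which makes the passage to regular conditional distributions more explicit.
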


These reweighted distributions are synthetic covariate-shift environments. Only the covariate marginal changes, while both $P(Y\,|\,X)$ and $P(Y\,|\,V)$ are preserved. The proof is provided in \Cref{app:reweight}. This differs from anchor regression and DRIG (\Cref{fig:intervention_dags}), where shifts enter structural equations and may affect $X$, $Y$, and hidden variables jointly. Our construction generates covariate shifts from the available distribution rather than from observed interventions.

\begin{figure}[!tb]
\centering
\begin{tikzpicture}[
    >={Stealth[length=2mm]},
    every node/.style={font=\small},
    var/.style={circle, draw, minimum size=8mm, inner sep=1pt, fill=gray!10},
    hid/.style={circle, draw, dashed, minimum size=8mm, inner sep=1pt},
    instr/.style={circle, draw=gray, dashed, text=gray, minimum size=8mm, inner sep=1pt},
    shiftedge/.style={->, thick},
    rwt/.style={-{Stealth[length=2.4mm]}, double, double distance=1pt, thick},
    ev/.style={font=\scriptsize},
]
% Panel (a): anchor variable / DRIG -- observed source perturbs X, Y, H
\begin{scope}[local bounding box=panelA]
  \node[hid] (Ha) at (0.9,1.35) {$H$};
  \node[var] (Aa) at (-2.2,0) {$A,e$};
  \node[var] (Xa) at (0,0) {$X$};
  \node[var] (Ya) at (1.8,0) {$Y$};
  \draw[shiftedge] (Aa) -- (Xa);
  \draw[shiftedge] (Aa) to[bend left=18] (Ha);
  \draw[shiftedge] (Aa) to[bend right=32] (Ya);
  \draw[->,dashed] (Ha) -- (Xa);
  \draw[->,dashed] (Ha) -- (Ya);
  \draw[->,thick] (Xa) -- (Ya);
  \node[ev] at (-0.2,-1.3) {$A$ or $e$ perturbs $X,Y,H$};
  \node[font=\bfseries\small] at (-0.2,-1.95) {(a) Anchor regression / DRIG};
\end{scope}
% Panel (b): ours -- reweight the covariate distribution through g(X); H present but untouched
\begin{scope}[shift={(5.0cm,0)}, local bounding box=panelB]
  \node[hid] (Hb) at (0.9,1.35) {$H$};
  \node[var] (Xb) at (0,0) {$X$};
  \node[var] (Yb) at (1.8,0) {$Y$};
  \draw[rwt] (0,-1.05) -- (Xb);
  \draw[->,dashed] (Hb) -- (Xb);
  \draw[->,dashed] (Hb) -- (Yb);
  \draw[->,thick] (Xb) -- (Yb);
  \node[ev] at (0,-1.28) {reweight $w(g(X))$};
  \node[ev] at (1.9,-0.6) {$P(Y \,|\, X)$ fixed};
  \node[font=\bfseries\small] at (0.9,-1.95) {(b) Synthetic environments};
\end{scope}
\end{tikzpicture}
\caption{The distribution shifts underlying anchor regression and DRIG (a) and our construction (b). In (a), an \emph{observed} exogenous source (the anchor variable $A$, or a change of environment $e$) can perturb the structural equations for $X$, for $Y$, or for the hidden confounder $H$. In (b), reweighting through the prediction $g(X)$ alters the covariate distribution while $P(Y \,|\, X)$ is preserved. Hidden confounding does not affect the robustness guarantee, only the identification of $g_0$, as discussed in \Cref{rem:confounding}.}
\label{fig:intervention_dags}
\end{figure}

By definition, $g \in \cG_I$ exactly when its conditional MSE $\sigma_g^2(t)$ is constant along the prediction. We therefore consider weights of the form $w(x)=\tilde w(g(x))$, where $\tilde w\geq0$ and $\E[\tilde w(F_g)]=1$. Applying \Cref{prop:reweight} with $V=g(X)$ gives $P_w(Y\,|\, g(X))=P(Y\,|\, g(X))$. Thus $\sigma_g^2(t)=\E[(Y-t)^2\,|\, F_g=t]$ is unchanged, while $\tilde w$ reweights the marginal distribution of $F_g$. The risk under $P_w$ is
\begin{align*}
    \E_{P_w}[R_g] = \E_P\big[\tilde w(F_g)\,R_g\big] = \E_P\big[\tilde w(F_g)\,\sigma_g^2(F_g)\big].
\end{align*}
This risk equals $\bar\sigma_g^2$ for every such $\tilde w$ if and only if $\sigma_g^2(F_g)$ is almost surely constant. Thus, $g\in\cG_I$ exactly when its \emph{risk is invariant under every reweighting through $g(X)$}. This scalar characterisation avoids an exhaustive search over covariate subsets.

One way to check whether a given $g$ lies in $\cG_I$ is to compute its largest violation over normalised weights, formulated as
\begin{align}
\label{eq:adver_search}
    \sup_{\tilde w \geq 0,\,\E[\tilde w(F_g)] = 1} \Big|\E\big[\tilde w(F_g)\,R_g\big] - \bar{\sigma}_g^2\Big|,
\end{align}
an adversarial search over a class of admissible weights.  The supremum is zero precisely when $g \in \cG_I$, so it detects invariance exactly. The search is impractical, though, because it depends on the choice of weight class and ranges over infinitely many $\tilde w$.

A single one-parameter family of weights, however, characterises $\cG_I$, so the adversarial search in \eqref{eq:adver_search} is unnecessary. For thresholds $s$ with $P(F_g \le s) > 0$, consider the indicator weights on the prediction scale,
\begin{align*}
    \tilde w_{g,s}(t) = \frac{\mathbbm{1}\{t \leq s\}}{P(F_g \leq s)}.
\end{align*}
For the weight $\tilde w_{g,s}$, the marginal risk under the reweighted distribution is
\begin{align*}
    \E\big[\tilde w_{g,s}(F_g)\,R_g\big] = \frac{\E\big[R_g\,\mathbbm{1}\{F_g \leq s\}\big]}{P(F_g \leq s)} = \E\big[R_g \,|\, g(X) \leq s\big],
\end{align*}
the risk of $g$ conditional on predictions not exceeding $s$. Reweighting by the normalised indicator thus amounts to conditioning on this event. Setting this conditional risk equal to $\bar{\sigma}_g^2$ at every $s$ gives $\E\big[R_g\,\mathbbm{1}\{F_g \leq s\}\big] = \bar{\sigma}_g^2\,P(F_g \leq s)$. This equality holds for every $s$ exactly when $g \in \cG_I$, so the adversarial search over all weights reduces to this one-parameter family, as formulated below.

\begin{proposition}[Indicator weights are a complete class]\label{prop:indicators}
$g \in \cG_I$ if and only if $\E\big[R_g \, \mathbbm{1}\{F_g \leq s\}\big] = \bar{\sigma}_g^2\,P(F_g \leq s)$ for all $s \in \R$.
\end{proposition}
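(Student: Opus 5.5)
Both directions reduce to identities for the finite signed measure $\mu_g(B)=\E[(\sigma_g^2(F_g)-\bar\sigma_g^2)\mathbbm{1}\{F_g\in B\}]$ on the Borel sets of $\R$. First note that $R_g$ is integrable, since $g(X)\in L^2(P_X)$ and $\E[Y^2]<\infty$. Hence $\sigma_g^2(F_g)=\E[R_g\mid F_g]$ is a well-defined, integrable version of the conditional expectation. By the tower property,
\[
\E\big[R_g\,\mathbbm{1}\{F_g\le s\}\big]=\E\big[\sigma_g^2(F_g)\,\mathbbm{1}\{F_g\le s\}\big],
\]
so the condition in \Cref{prop:indicators} is equivalent to $\mu_g((-\infty,s])=0$ for all $s\in\R$.

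For the forward direction, suppose $g\in\cG_I$, so that $\sigma_g^2(t)=c_g$ for $P_g$-almost every $t$. Taking expectations gives $c_g=\E[\sigma_g^2(F_g)]=\E[R_g]=\bar\sigma_g^2$. The density of $\mu_g$ with respect to $P_g$ therefore vanishes $P_g$-almost everywhere. Hence $\mu_g\equiv 0$, and in particular $\mu_g((-\infty,s])=0$ for every $s$, which is the stated identity.

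For the converse, assume $\mu_g((-\infty,s])=0$ for all $s$. Write $\mu_g=\mu_g^+-\mu_g^-$ as the difference of two finite positive measures. These are finite because $\sigma_g^2(F_g)$ is integrable. The two measures $\mu_g^+$ and $\mu_g^-$ then agree on the $\pi$-system of half-lines $\{(-\infty,s]:s\in\R\}$, which generates the Borel $\sigma$-algebra. They also have equal total mass, obtained by letting $s\to\infty$ and using dominated convergence. Dynkin's $\pi$-$\lambda$ theorem then gives $\mu_g^+=\mu_g^-$ on all Borel sets, so $\mu_g\equiv0$.

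It remains to pass from $\mu_g\equiv 0$ to constancy of the conditional MSE. Choose $B=\{t:\sigma_g^2(t)>\bar\sigma_g^2\}$, and then the analogous set with the inequality reversed. This yields $\sigma_g^2(F_g)=\bar\sigma_g^2$ almost surely, that is, $\sigma_g^2(t)=\bar\sigma_g^2$ for $P_g$-almost every $t$. Hence $g\in\cG_I$ with $c_g=\bar\sigma_g^2$.

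The only delicate point is the uniqueness-of-measures step: extending equality on half-lines to equality on all Borel sets. Once we note the integrability of $\sigma_g^2(F_g)$, so that $\mu_g$ is a finite signed measure, this is routine. Continuity of $H_g$ is not needed for this argument.
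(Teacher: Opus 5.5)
Your proof is correct and follows essentially the same route as the paper. The tower property reduces the condition to equality of two finite measures on half-lines, uniqueness of measures extends this equality to all Borel sets, and testing on the sets where $\sigma_g^2-\bar\sigma_g^2$ has a strict sign gives constancy $P_g$-almost everywhere. The only cosmetic difference is that the paper compares the two positive measures $A\mapsto\E[\sigma_g^2(F_g)\mathbbm{1}\{F_g\in A\}]$ and $A\mapsto\bar\sigma_g^2P(F_g\in A)$ directly, rather than passing through a signed measure and its decomposition.
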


The proof is given in \Cref{app:indicators}. This equivalence reduces the characterisation of $\cG_I$ to a one-parameter family of moment conditions, which the next section aggregates over $s$ into a single penalty.

\section{The CvM Penalty}\label{sec:cvm}
In \Cref{prop:indicators}, each threshold $s$ determines a separate condition. We combine these conditions by squaring each violation and integrating against the distribution of the prediction. The result is a single measure of how far $g$ departs from invariance, the Cram\'er--von Mises (CvM) criterion:

\begin{align}\label{eq:cvm}
\cT(g) \coloneqq \int \big(\E\big[R_g\,\mathbbm{1}\{F_g \leq s\}\big] - \bar{\sigma}_g^2P(F_g \leq s)\big)^2 \mathrm{d}P_g(s).
\end{align}

The criterion thus weights each threshold $s$ by the probability mass the prediction $F_g$ places near it, so thresholds in the tails of the prediction contribute less.

The two terms inside the parentheses are each $\bar{\sigma}_g^2$ times a distribution function of the prediction. One is the ordinary distribution function $H_g(s) = P(F_g \le s)$; the other is the squared-error-weighted $\widetilde{H}_g(s) \coloneqq \E\big[R_g\,\mathbbm{1}\{F_g \le s\}\big]/\bar{\sigma}_g^2$, which reweights the distribution of the prediction by $R_g$. Thus
\begin{align*}
\cT(g) = \bar{\sigma}_g^4 \int \big(\widetilde{H}_g(s) - H_g(s)\big)^2 \,\mathrm{d}P_g(s).
\end{align*}
The integral is the Cram\'er--von Mises distance between $\widetilde H_g$ and $H_g$ \citep{anderson1952asymptotic}, scaled by $\bar\sigma_g^4$. The two distribution functions coincide exactly when $\sigma_g^2$ is constant, giving the following equivalence.

\begin{proposition}[CvM penalty and invariance]\label{prop:completeness}
$\cT(g) = 0$ if and only if $g \in \cG_I$.
\end{proposition}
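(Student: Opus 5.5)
The plan is to reduce the statement to \Cref{prop:indicators}. Set
\[
D_g(s)\coloneqq \E\big[R_g\,\mathbbm{1}\{F_g\le s\}\big]-\bar\sigma_g^2 H_g(s),
\]
so that $\cT(g)=\int D_g(s)^2\,\mathrm dP_g(s)$. By \Cref{prop:indicators}, $g\in\cG_I$ if and only if $D_g\equiv 0$ on $\R$. The task is therefore to show that $\int D_g^2\,\mathrm dP_g=0$ if and only if $D_g$ vanishes everywhere.

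The direction ``$\Leftarrow$'' is immediate. If $g\in\cG_I$, then $D_g(s)=0$ for every $s$ by \Cref{prop:indicators}, and the integrand is identically zero, so $\cT(g)=0$.

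For ``$\Rightarrow$'', suppose $\cT(g)=0$. Then $D_g(s)=0$ for $P_g$-almost every $s$, and the work is to upgrade this to every $s\in\R$. The tool is right-continuity of $D_g$. The map $s\mapsto \E[R_g\mathbbm 1\{F_g\le s\}]$ is right-continuous by dominated convergence, since $R_g$ is integrable: $\E[R_g]\le 2\E[Y^2]+2\E[g(X)^2]<\infty$. The map $H_g$ is a distribution function, so it is also right-continuous; in fact it is continuous by \Cref{def:Gclass}. Hence $D_g$ is right-continuous. It is also constant on any interval where $P_g$ puts no mass, because both terms depend on $s$ only through the events $\{F_g\le s\}$.

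Now take an arbitrary $s\in\R$ and split into cases.
\begin{enumerate}
\item If $H_g(s)=0$, then $D_g(s)=0$ trivially, since both terms involve only the null event $\{F_g\le s\}$.
\item If $H_g(s)=1$, then $D_g(s)=\E[R_g]-\bar\sigma_g^2=0$.
\item Otherwise, I would consider two subcases. If $s$ lies in the support of $P_g$, every interval $(s,s+\delta)$ has positive mass, since $P_g$ has no atoms. Such an interval therefore contains points where $D_g=0$, and right-continuity gives $D_g(s)=0$. If $s$ is not in the support, let $s^*=\inf\{t\ge s:\ t\in\operatorname{supp}P_g\}$. This point exists and satisfies $H_g(s^*)=H_g(s)<1$, because $H_g(s)<1$ forces support to the right of $s$. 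On $[s,s^*]$ the events $\{F_g\le t\}$ agree almost surely, so $D_g(s)=D_g(s^*)$. Since $s^*$ is a support point, the previous subcase gives $D_g(s^*)=0$.
\end{enumerate}
Thus $D_g\equiv0$, and \Cref{prop:indicators} yields $g\in\cG_I$.

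The main obstacle is this last step: passing from ``zero $P_g$-a.e.'' to ``zero everywhere'', and in particular handling gaps in the support of $P_g$. An alternative route avoids that bookkeeping by working on the rank scale. Continuity of $H_g$ makes $U_g=H_g(F_g)$ uniform on $(0,1)$, so that $\cT(g)=\int_0^1 \Psi_g(u)^2\,\mathrm du$ with $\Psi_g$ continuous in $u$. Then $\cT(g)=0$ forces $\Psi_g\equiv0$ on $[0,1]$. Differentiating, $\psi_g=0$ a.e., which says $\sigma_g^2(F_g)=\bar\sigma_g^2$ almost surely, i.e.\ $g\in\cG_I$. I would present the direct argument above and mention the rank-scale version as the cleaner alternative.
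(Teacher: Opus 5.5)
Your reduction to \Cref{prop:indicators}, the forward direction, and the plan of upgrading ``$D_g=0$ for $P_g$-a.e.\ $s$'' to ``$D_g\equiv0$'' are all fine. The first subcase of step~3, however, rests on a false claim. A point of $\operatorname{supp}P_g$ need not have positive mass in every interval $(s,s+\delta)$, even when $P_g$ is atomless. Atomlessness only says that the mass of $(s-\delta,s+\delta)$ is carried by $(s-\delta,s)\cup(s,s+\delta)$; it need not lie in the right half. For example, let $P_g$ be uniform on $[0,1]\cup[2,3]$ and take $s=1$. Then $s\in\operatorname{supp}P_g$ and $H_g(s)=1/2$, yet $(1,1+\delta)$ is $P_g$-null for every $\delta<1$. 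So the $P_g$-a.e.\ information gives no zeros of $D_g$ to the right of $s$, and right-continuity alone says nothing at $s$. At the point $s^\ast$ of your second subcase the claim does hold, because $(s,s^\ast)$ is null and the mass near $s^\ast$ must therefore lie to its right. The failure is confined to right endpoints of support components.

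The repair is short. For any $s$ with $H_g(s)<1$, put $s^\ast=\sup\{t\ge s: H_g(t)=H_g(s)\}$, which is finite. Then $\{s<F_g\le s^\ast\}$ is null, so $D_g(s)=D_g(s^\ast)$. By construction every interval $(s^\ast,s^\ast+\delta)$ has positive $P_g$-mass, so it contains zeros of $D_g$, and right-continuity gives $D_g(s^\ast)=0$. This single case replaces both of your subcases.

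Alternatively, note that $D_g$ is continuous, not merely right-continuous: $P(F_g=s)=0$ makes $t\mapsto\E[R_g\mathbbm{1}\{F_g\le t\}]$ left-continuous as well. Zeros of $D_g$ accumulating at a support point from either side then suffice, and the gaps are handled as in your second subcase. With either fix, your direct route is a valid alternative to the paper's. It stays on the prediction scale and uses only \Cref{prop:indicators}, at the cost of this gap bookkeeping.

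Your rank-scale sketch is exactly the paper's proof. There $H_g$ collapses the gaps of $P_g$, so the pushforward is Lebesgue measure on $[0,1]$, which has full support, and ``$\Psi_g=0$ a.e.\ and continuous'' immediately gives $\Psi_g\equiv0$. But in that sketch you assert, rather than prove, the two steps the paper spends its effort on. The first is the identity $\cT(g)=\int_0^1\Psi_g(u)^2\,\mathrm du$, obtained via $D_g(s)=\Psi_g(H_g(s))$ in \Cref{lem:quantile_form}. The second is the transfer of $\psi_g=0$ a.e.\ back to $\sigma_g^2(t)=\bar\sigma_g^2$ for $P_g$-a.e.\ $t$, which uses $F_g=H_g^{-1}(U_g)$ almost surely.
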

The proof is given in \Cref{app:completeness}. By \Cref{prop:completeness}, $\cT(g)$ is zero exactly on $\cG_I$ and strictly positive otherwise, so it measures departure from invariance. We therefore propose the CvM-penalised objective
\begin{align*}
    \cL_{\mathrm{CvM}}(g; \lambda) \coloneqq \E\big[\big(Y - g(X)\big)^2\big] + \lambda\,\cT(g),
\end{align*}
in which $\lambda \geq 0$ controls the trade-off between prediction risk and departure from invariance. Write the set of population minimisers as
\begin{align}\label{eq:estimator}
    \cG_\lambda^\ast \coloneqq \argmin_{g \in \cG}\,\cL_{\mathrm{CvM}}(g; \lambda),
\end{align}
and let $g_\lambda$ denote any element of $\cG_\lambda^\ast$. The set need not be a singleton. In \Cref{sec:ident} we give conditions under which its elements approach the causal function $g_0$ as $\lambda \to \infty$.

\section{Robustness and identification}\label{sec:dro}
The same penalty has two roles. For finite $\lambda>0$, it yields robustness against prediction-dependent reweightings. As $\lambda\to\infty$, it enforces the invariance restriction used for identification. At $\lambda=0$, the population objective reduces to MSE minimisation; its empirical counterpart is least-squares ERM. 

\subsection{Perturbation class}\label{sec:dro:class}
To state the finite-$\lambda$ robustness result, we first define the perturbation family represented by the CvM penalty. Put $U_g=H_g(F_g)$, the rank of the prediction, so that $U_g\sim\mathrm{Uniform}(0,1)$. The perturbations are indexed by the Hilbert space
\begin{align}\label{eq:H}
\cH = \Big\{\, h : [0,1] \to \R \,\Big|\, h(1) = 0, h \text{ absolutely continuous}, \int_0^1 (h'(u))^2\,\mathrm{d}u < \infty \,\Big\},
\end{align}
equipped with $\langle h_1,h_2\rangle_\cH=\int_0^1h_1'(u)h_2'(u)\,\mathrm du$. Here $u$ is a point in $[0,1]$. Since $U_g\sim\mathrm{Uniform}(0,1)$, every integrable function $f$ satisfies $\int_0^1 f(u)\,\mathrm{d}u=\E[f(U_g)]$.

To each $h \in \cH$ we associate a normalised signed perturbation $Q_h$ on $\R^p \times \R$, which depends on the prediction function $g$ through the rank $H_g(g(x))$, with density ratio
\begin{align}\label{eq:Qh}
\frac{\mathrm{d}Q_h}{\mathrm{d}P}(x,y)=
w_h(x)
\coloneqq
1 + h\big[H_g\big(g(x)\big)\big] - \int_0^1 h(u)\,\mathrm{d}u.
\end{align}
Because $U_g \sim \mathrm{Uniform}(0,1)$,
\begin{align*}
\E_P\big[w_h(X)\big]=1 + \E\big[h(U_g)\big] - \int_0^1 h(u)\,\mathrm{d}u
=1,
\end{align*}
so $Q_h$ has total mass one. Since $w_h$ depends only on $x$, $Q_h$ factorises as
\begin{align*}
Q_h(\mathrm dx,\mathrm dy)=w_h(x)P_X(\mathrm dx)P_{Y\,|\, X}(\mathrm dy\,|\, x).
\end{align*}
Thus only the covariate marginal is reweighted. If $w_h\geq0$, $Q_h$ is a probability distribution; otherwise it is a signed measure.

\subsection{Distributional robustness equivalence}\label{sec:dro:rkhs}
For $Q_h$ in \eqref{eq:Qh}, consider the risk difference $\E_{Q_h}[R_g]-\E_P[R_g]$.  Define the centred conditional MSE on the quantile scale by $\psi_g(u)=\sigma_g^2(H_g^{-1}(u))-\bar\sigma_g^2$, where $H_g^{-1}(u)=\inf\{t:H_g(t)\geq u\}$. The next result represents the risk difference and the CvM penalty in the same Hilbert space.

\begin{proposition}[The CvM penalty as a squared RKHS norm]\label{thm:rkhs}
The space $\cH$ in \eqref{eq:H} is a reproducing kernel Hilbert space (RKHS) with kernel $k_0(v_1,v_2) = 1 - \max(v_1,v_2)$ for $v_1,v_2 \in [0,1]$. For every $g \in \cG$, the kernel mean embedding $\mu_g \coloneqq \int_0^1 \psi_g(v)\,k_0(v,\,\cdot\,)\,\mathrm{d}v$ belongs to $\mathcal{H}$ and satisfies, for every $h \in \cH$,
\begin{align}\label{eq:risk-change}
\E_{Q_h}[R_g] - \E_P[R_g] = \int_0^1 \psi_g(u)\,h(u)\,\mathrm{d}u = \langle h, \mu_g \rangle_\cH,
\end{align}
and
\begin{align*}
\cT(g) = \big\|\mu_g\big\|_\mathcal{H}^2 = \int_0^1 \Psi_g(u)^2\,\mathrm{d}u,
\qquad \Psi_g(u) \coloneqq \int_0^u \psi_g(v)\,\mathrm{d}v.
\end{align*}
\end{proposition}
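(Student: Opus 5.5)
The plan is to verify the reproducing property directly, then compute the embedding explicitly, and finally convert both the risk difference and the CvM integral to the rank scale.

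\textbf{Step 1: the reproducing kernel.} For $v\in[0,1]$ the section $k_0(v,\cdot)=1-\max(v,\cdot)$ is absolutely continuous, vanishes at $1$, and has derivative $-\mathbbm{1}\{u>v\}$, which is square integrable. Hence $k_0(v,\cdot)\in\cH$. For $h\in\cH$ we have
\[
\langle h,k_0(v,\cdot)\rangle_\cH=-\int_v^1 h'(u)\,\mathrm du=h(v)-h(1)=h(v).
\]
Completeness of $\cH$ follows because $h\mapsto h'$ is an isometry onto $L^2[0,1]$, with inverse $f\mapsto -\int_\cdot^1 f$. Note that $\|\cdot\|_\cH$ is a norm, not merely a seminorm, since $h(1)=0$.

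\textbf{Step 2: the embedding $\mu_g$.} First I need $\psi_g\in L^1[0,1]$. Since $U_g$ is uniform and $H_g^{-1}(U_g)=F_g$ almost surely, we get
\[
\int_0^1|\psi_g|\,\mathrm du=\E\big|\sigma_g^2(F_g)-\bar\sigma_g^2\big|\le 2\bar\sigma_g^2<\infty.
\]
Computing directly,
\[
\mu_g(w)=\int_0^1\psi_g(v)\big(1-\max(v,w)\big)\,\mathrm dv .
\]
Its derivative in $w$ is $-\int_0^w\psi_g=-\Psi_g(w)$ almost everywhere, and $\mu_g(1)=0$. Since $\int_0^1\psi_g=0$, $\Psi_g$ is bounded by $\|\psi_g\|_{L^1}$, so $\mu_g\in\cH$ with
\[
\|\mu_g\|_\cH^2=\int_0^1\Psi_g^2 .
\]
The reproducing property combined with Fubini then gives $\langle h,\mu_g\rangle_\cH=\int\psi_g h$. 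Alternatively, this follows by integrating by parts: $\int h'(-\Psi_g)=\int h\psi_g$, using $h(1)=0$ and $\Psi_g(0)=0$.

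\textbf{Step 3: the risk difference.} We have
\[
\E_{Q_h}[R_g]-\E_P[R_g]=\E\Big[\Big(h(U_g)-\int h\Big)R_g\Big].
\]
Because $U_g$ is a function of $F_g$, this equals $\E\big[(h(U_g)-\int h)\,\sigma_g^2(F_g)\big]$. Now use $\sigma_g^2(F_g)=\sigma_g^2(H_g^{-1}(U_g))$ almost surely, which holds by continuity of $H_g$. Subtract $\bar\sigma_g^2$, which costs nothing since $h(U_g)-\int h$ has mean zero, to obtain $\int_0^1\psi_g h$.

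Integrability needs care here, and it is the main technical obstacle. Elements of $\cH$ are bounded, with $|h(u)|\le\|h\|_\cH$ by Cauchy--Schwarz, so $h(U_g)R_g$ is integrable. The a.s. identity $H_g^{-1}(H_g(F_g))=F_g$ is the subtle point: it fails only on flat stretches of $H_g$, which carry zero $P_g$-mass.

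\textbf{Step 4: the CvM identity.} Substitute $s$ with $u=H_g(s)$. Since $P_g\circ H_g^{-1}$ is uniform, for $P_g$-a.e. $s$
\[
\E\big[R_g\mathbbm{1}\{F_g\le s\}\big]-\bar\sigma_g^2H_g(s)=\E\big[(\sigma_g^2(F_g)-\bar\sigma_g^2)\mathbbm{1}\{U_g\le H_g(s)\}\big]=\Psi_g(H_g(s)).
\]
This uses $\{F_g\le s\}=\{U_g\le H_g(s)\}$ almost surely, again a consequence of continuity of $H_g$. Integrating $\Psi_g(H_g(s))^2$ against $\mathrm dP_g(s)$ then equals $\int_0^1\Psi_g(u)^2\,\mathrm du$, which completes the proof.
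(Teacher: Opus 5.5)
Your proposal is correct and follows the paper's proof closely: you check the reproducing property of $k_0(\cdot,u)$ the same way, show $\mu_g'=-\Psi_g$ with $\mu_g(1)=0$ and integrate by parts, compute the risk change by conditioning on the rank, and use the substitution $u=H_g(s)$ that the paper isolates in \Cref{lem:quantile_form}. Your completeness argument, via the isometry $h\mapsto h'$ onto $L^2[0,1]$, is a small addition that the paper leaves implicit; the identity $\int_0^1\psi_g=0$, which you invoke without proof, follows from iterated expectation, $\E[\sigma_g^2(F_g)]=\bar\sigma_g^2$.
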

The proof is in \Cref{app:rkhs}. The two identities provide the required link. The CvM criterion is the squared RKHS norm of $\mu_g$, while the inner product of $\mu_g$ with $h$ gives the change in risk under the corresponding reweighting. This is an instance of representing a distributional discrepancy by an RKHS norm \citep{gretton2012kernel,sejdinovic2013equivalence}.

For every $\lambda>0$, completing the square gives
\begin{align}
\label{eq:complete_square}
\langle h,\mu_g\rangle_\cH-\frac{\|h\|_\cH^2}{4\lambda}
=\lambda\|\mu_g\|_\cH^2-\frac{\|h-2\lambda\mu_g\|_\cH^2}{4\lambda}\leq\lambda\|\mu_g\|_\cH^2,
\end{align}
with equality only at $h=2\lambda\mu_g$. By \Cref{thm:rkhs}, the inner product is the change in risk in \eqref{eq:risk-change}. Taking the supremum therefore turns $\E_P[R_g]+\lambda\cT(g)$ into a regularised worst-case risk over $\{Q_h\}$.

\begin{theorem}[Distributional robustness equivalence]\label{thm:dro}
For all $\lambda>0$ and $g\in\cG$,
\begin{equation}\label{eq:dro}
    \E_P[R_g] + \lambda\,\cT(g)
= \sup_{h \in \cH} \bigg\{\E_{Q_h}\big[\big(Y - g(X)\big)^2\big] - \frac{\|h\|_\cH^2}{4\lambda}\bigg\},
\end{equation}
where $Q_h$ is the normalised perturbation defined in \eqref{eq:Qh}, and the supremum is attained at the unique maximiser $h^\ast = 2\lambda\mu_g$. Since \eqref{eq:dro} holds for every $g$, the CvM-penalised estimator $g_\lambda$ of \eqref{eq:estimator} also minimises the worst-case risk:
\begin{equation}\label{eq:dro_minimax}
g_\lambda \in \argmin_{g\in\cG} \sup_{h \in \cH} \bigg\{\E_{Q_h}\big[\big(Y - g(X)\big)^2\big] - \frac{\|h\|_\cH^2}{4\lambda}\bigg\}.
\end{equation}
\end{theorem}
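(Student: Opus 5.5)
The plan is to reduce \eqref{eq:dro} to the completing-the-square identity \eqref{eq:complete_square} by means of \Cref{thm:rkhs}. Fix $\lambda>0$ and $g\in\cG$. For each $h\in\cH$, the representation \eqref{eq:risk-change} gives $\E_{Q_h}[R_g]=\E_P[R_g]+\langle h,\mu_g\rangle_\cH$. Here $\E_{Q_h}[R_g]$ is understood as $\E_P[w_h(X)R_g]$, which is well defined even when $Q_h$ is signed. I will check this briefly: $h$ is bounded on $[0,1]$ because it is absolutely continuous with $h(1)=0$, so $w_h$ is bounded and $R_g$ is integrable. Subtracting the penalty $\|h\|_\cH^2/(4\lambda)$, the objective inside the supremum becomes
\[
\E_P[R_g]+\langle h,\mu_g\rangle_\cH-\frac{\|h\|_\cH^2}{4\lambda}.
\]
The term $\E_P[R_g]$ does not depend on $h$.

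Next I apply \eqref{eq:complete_square}. It rests only on the Hilbert-space identity $\|h-2\lambda\mu_g\|^2=\|h\|^2-4\lambda\langle h,\mu_g\rangle+4\lambda^2\|\mu_g\|^2$, which I will verify in one line. It gives the upper bound $\lambda\|\mu_g\|_\cH^2$ for every $h$. Since $\mu_g\in\cH$ by \Cref{thm:rkhs}, the choice $h^\ast=2\lambda\mu_g$ is admissible and attains the bound. Hence the supremum equals $\E_P[R_g]+\lambda\|\mu_g\|_\cH^2$, and this is a maximum. The identity $\cT(g)=\|\mu_g\|_\cH^2$ from \Cref{thm:rkhs} then yields \eqref{eq:dro}.

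For uniqueness, the gap between the bound and the objective is exactly $\|h-2\lambda\mu_g\|_\cH^2/(4\lambda)$. Because $\langle\cdot,\cdot\rangle_\cH$ is a genuine inner product on $\cH$, this gap vanishes only when $h=2\lambda\mu_g$ as elements of $\cH$. Definiteness holds because $h'=0$ almost everywhere together with $h(1)=0$ and absolute continuity forces $h\equiv0$.

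Finally, \eqref{eq:dro} holds pointwise in $g$, so the map $g\mapsto\cL_{\mathrm{CvM}}(g;\lambda)$ and the worst-case map $g\mapsto\sup_h\{\cdots\}$ coincide on $\cG$. They therefore have the same argmin set, and any $g_\lambda\in\cG^\ast_\lambda$ satisfies \eqref{eq:dro_minimax}. I expect no real obstacle here, since everything substantive was done in \Cref{thm:rkhs}. The only point needing care is the interpretation of $\E_{Q_h}$ for signed $Q_h$ and the integrability justification above.
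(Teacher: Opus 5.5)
Your proposal is correct and follows the paper's proof essentially step for step. You use \eqref{eq:risk-change} to write the objective as $\E_P[R_g]+\langle h,\mu_g\rangle_\cH-\|h\|_\cH^2/(4\lambda)$, complete the square in $\cH$ to get the bound $\lambda\|\mu_g\|_\cH^2$ attained uniquely at $h^\ast=2\lambda\mu_g$, invoke $\cT(g)=\|\mu_g\|_\cH^2$, and pass to the argmin because the two objectives agree pointwise on $\cG$. Your extra remarks, on the boundedness of $w_h$ (so that $\E_{Q_h}[R_g]$ is well defined for signed $Q_h$) and on the definiteness of $\langle\cdot,\cdot\rangle_\cH$, are correct details that the paper leaves implicit.
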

For a fixed $g$, \eqref{eq:dro} identifies the CvM-penalised risk with a regularised worst-case risk over the reweightings $Q_h$. The minimisation in \eqref{eq:dro_minimax} therefore chooses the prediction function with the smallest such worst-case value. When $Q_h$ is a probability distribution, it changes the relative frequency of prediction levels while preserving the conditional distribution of $(X,Y)$ within each level. At $g=g_0$ in the standard structural model above, the reweighting depends on the parents only through $g_0(X)$ and does not allow arbitrary changes in their joint distribution.

The bound in \eqref{eq:complete_square} shows directly that the supremum is finite. Larger values of $\lambda$ reduce the quadratic cost of a given perturbation. For a fixed $g$, the supremum in \eqref{eq:dro} chooses $h$ to maximise the risk under $Q_h$ minus the quadratic cost $\|h\|_\cH^2/(4\lambda)$. Increasing $\lambda$ makes a given perturbation less costly. The maximiser is $h^\ast=2\lambda\mu_g$, whose size satisfies $\|h^\ast\|_\cH=2\lambda\cT(g)^{1/2}$ and is therefore proportional to $\lambda$. Thus $\lambda$ controls the size of the maximising perturbation through its cost, while the family $\{Q_h\}$ remains unchanged. The proof is in \Cref{app:dro}.

The exact representation permits signed $Q_h$ because $w_h$ need not be nonnegative. The next corollary restricts the supremum to probability reweightings. Its proof is in \Cref{app:signed}.
\begin{corollary}[Restriction to probability reweightings]\label{cor:signed}
The penalised objective upper-bounds the worst-case risk over the probability distributions in the perturbation family:
\begin{align*}
\sup_{h \in \cH \,:\, w_h \geq 0} \bigg\{\E_{Q_h}\big[\big(Y-g(X)\big)^2\big] - \frac{\|h\|_\cH^2}{4\lambda}\bigg\} \leq \E_P[R_g] + \lambda\,\cT(g).
\end{align*}
Equality holds if $w_{h^\ast}\geq 0$; in particular, it holds when $\lambda\leq\{4\cT(g)^{1/2}\}^{-1}$. Thus the signed formulation is exact for probability reweightings on this range and otherwise gives an upper bound. 
\end{corollary}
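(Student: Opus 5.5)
The proof restricts the supremum in \Cref{thm:dro} to a subset and then checks when the unrestricted maximiser lies in that subset. For the inequality, note that the set $\{h\in\cH: w_h\geq0\}$ is a subset of $\cH$. A supremum over a subset cannot exceed the supremum over the whole set, so the left-hand side is at most the right-hand side of \eqref{eq:dro}, which equals $\E_P[R_g]+\lambda\cT(g)$. If the subset were empty the supremum would be $-\infty$, but it is not: $h\equiv0$ gives $w_0\equiv1$, so the subset is nonempty.

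For equality under $w_{h^\ast}\geq0$, use that \Cref{thm:dro} gives attainment at $h^\ast=2\lambda\mu_g$. If $w_{h^\ast}\geq0$, then $h^\ast$ belongs to the restricted set. Its objective value equals $\E_P[R_g]+\lambda\cT(g)$, so the restricted supremum is at least this value, and combined with the first step equality holds.

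The main step is to show that $\lambda\leq\{4\cT(g)^{1/2}\}^{-1}$ implies $w_{h^\ast}\geq0$. I would bound $h^\ast$ uniformly using the RKHS structure. For $h\in\cH$ with $h(1)=0$, Cauchy--Schwarz gives
\[
|h(u)| = \Big|\int_u^1 h'(v)\,\mathrm dv\Big| \leq (1-u)^{1/2}\|h\|_\cH \leq \|h\|_\cH .
\]
The same bound holds for the average, $|\int_0^1 h(u)\,\mathrm du|\leq \|h\|_\cH$; the sharper bound $\|h\|_\cH\int_0^1(1-u)^{1/2}\,\mathrm du=\tfrac23\|h\|_\cH$ is also available but not needed. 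Together these give
\[
w_h(x) \geq 1 - |h(H_g(g(x)))| - \Big|\int_0^1 h\Big| \geq 1-2\|h\|_\cH .
\]
Applying this at $h=h^\ast$ and using $\|h^\ast\|_\cH=2\lambda\|\mu_g\|_\cH=2\lambda\cT(g)^{1/2}$ from \Cref{thm:rkhs} yields $w_{h^\ast}\geq 1-4\lambda\cT(g)^{1/2}$. This is nonnegative exactly when $\lambda\leq\{4\cT(g)^{1/2}\}^{-1}$. The case $\cT(g)=0$ is covered trivially, since then $h^\ast=0$ and $w_{h^\ast}\equiv1$.

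The main thing to check is that the pointwise bound is legitimate. Elements of $\cH$ are absolutely continuous, so the fundamental theorem of calculus applies to each $h$ and the bound holds for every $u\in[0,1]$, not merely almost everywhere. Consequently $w_{h^\ast}(x)\geq0$ for every $x$, and $Q_{h^\ast}$ is a genuine probability distribution.
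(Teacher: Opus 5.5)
Your proposal is correct and follows the paper's proof: restrict the supremum to get the inequality, note that $h^\ast=2\lambda\mu_g$ is feasible when $w_{h^\ast}\geq0$, and bound $w_{h^\ast}$ from below by $1-4\lambda\cT(g)^{1/2}$. Your Cauchy--Schwarz estimate $|h(u)|\leq(1-u)^{1/2}\|h\|_\cH$ is the reproducing-property bound $|h(u)|\leq\|h\|_\cH\,k_0(u,u)^{1/2}$ that the paper uses, written out explicitly, since $k_0(u,u)=1-u$.
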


Unlike the distance-based DRO construction in \Cref{sec:introduction}, the family $\{Q_h\}$ is induced by prediction-dependent reweighting and, by the factorisation in \Cref{sec:dro:class}, changes the covariate factor alone. Changes in the conditional response distribution are therefore outside the guarantee. Kernel DRO also measures distributional shifts using an RKHS norm \citep{staib2019distributionally,zhu2021kernel}, but the kernel and the form of perturbation class are specified externally. By contrast, in \eqref{eq:Qh}, the observational distribution and $g$ determine the rank coordinate $H_g(g(x))$ on which $h$ acts, while the CvM criterion specifies the RKHS in \Cref{thm:rkhs}.

\subsection{Identification}\label{sec:ident}

At finite $\lambda$, \Cref{thm:dro} gives the robustness guarantee. As $\lambda\to\infty$, the penalty forces $\cT(g_\lambda)\to0$, so any $L^2(P_X)$-limit point lies in $\cG_I$. Identification then requires $g_0$ to be the unique risk minimiser over $\cG_I$. This can fail when another invariant prediction function is correlated with $\varepsilon$ and attains smaller risk. The following condition rules out this case.
\begin{assumption}[Orthogonality]\label{ass:ortho} For every $g\in\cG_I$, $\E\big[g(X)\varepsilon\big]=0$. 
\end{assumption}

\Cref{ass:ortho} rules out prediction functions whose conditional MSE given the prediction remains constant but whose prediction still correlates with the structural noise. Such a function can arise under hidden confounding. A prediction function that uses the confounded covariates may have constant conditional MSE given its prediction, so it lies in $\cG_I$, while its prediction and $\varepsilon$ both depend on $H$ and are correlated, so $\E[g(X)\varepsilon]\neq 0$; $g_0$ then need not minimise risk over $\cG_I$.
Under \Cref{ass:ortho}, $g_0$ minimises the risk over $\cG_I$, and every minimiser agrees with $g_0$ in $L^2(P_X)$, as the next proposition states.

\begin{proposition}[Unique minimiser in $L^2(P_X)$]\label{prop:unique}
Under \cref{ass:structural,ass:ortho}, $g_0$ minimises $\E\big[\big(Y-g(X)\big)^2\big]$ over $\cG_I$, and every minimiser $g$ satisfies $\|g-g_0\|_{P_X}=0$.
\end{proposition}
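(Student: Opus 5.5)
The plan is a direct Pythagorean decomposition of the risk. Fix any $g\in\cG_I$ and substitute $Y=g_0(X)+\varepsilon$ from \Cref{ass:structural}. Expanding gives
\begin{align*}
\E\big[(Y-g(X))^2\big]
=\E\big[\varepsilon^2\big]+2\,\E\big[(g_0(X)-g(X))\varepsilon\big]+\|g_0-g\|_{P_X}^2 .
\end{align*}
All terms are finite because $g,g_0\in L^2(P_X)$ and $\E[\varepsilon^2]=\sigma^2<\infty$, so Cauchy--Schwarz controls the cross term.

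The key step is to show that the cross term vanishes. Split it as $\E[g_0(X)\varepsilon]-\E[g(X)\varepsilon]$. The first piece is zero by \Cref{ass:structural}: independence gives $\E[g_0(X)\varepsilon]=\E[g_0(X)]\,\E[\varepsilon]=0$. Alternatively, it follows from \Cref{ass:ortho} applied to $g_0$, which is legitimate because $g_0\in\cG_I$ as noted after \Cref{def:gi}. The second piece is zero directly by \Cref{ass:ortho}, since $g\in\cG_I$. Hence
\begin{align*}
\E\big[(Y-g(X))^2\big]=\sigma^2+\|g-g_0\|_{P_X}^2 \qquad \text{for every } g\in\cG_I .
\end{align*}

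Both conclusions follow from this identity. Since $g_0\in\cG_I$ attains the value $\sigma^2$, and every $g\in\cG_I$ has risk at least $\sigma^2$, $g_0$ is a minimiser over $\cG_I$. Any other minimiser $g$ must also have risk $\sigma^2$, which forces $\|g-g_0\|_{P_X}=0$.

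I do not expect a real obstacle here. The only points needing care are membership of $g_0$ in $\cG_I$, which the paper has already established, and the integrability of the cross term. It is also worth noting that the orthogonality condition is used exactly once, to kill $\E[g(X)\varepsilon]$; this is the term that fails under hidden confounding.
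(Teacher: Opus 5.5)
Your proposal is correct and follows essentially the same route as the paper. The paper also substitutes $Y=g_0(X)+\varepsilon$ and expands the square. It then kills $\E[g_0(X)\varepsilon]$ by independence and $\E[g(X)\varepsilon]$ by orthogonality, and reads off both minimality of $g_0$ and $\|g-g_0\|_{P_X}=0$ for any other minimiser.
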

The proof is given in \Cref{app:unique}.

\begin{remark}[The jointly Gaussian case]\label{rem:gaussian_linear}
Gaussianity of $\varepsilon$ alone is not a problem, even when $g_0$ is linear. Suppose instead that $(X,Y)$ is jointly Gaussian and that $\cG$ contains the population least-squares predictor $g_{\rm LS}$. Then $g_{\rm LS}$ has the smallest squared-error risk, and its residual is independent of $X$. Hence $g_{\rm LS}\in\cG_I$. If $g_{\rm LS}\ne g_0$ in $L^2(P_X)$, then $\cL(g_{\rm LS})<\cL(g_0)$ and
\begin{align*}
\E\big[g_{\rm LS}(X)\varepsilon\big]
=\frac{1}{2}\big\{\cL(g_0)-\cL(g_{\rm LS})
+\|g_{\rm LS}-g_0\|_{P_X}^2\big\}>0.
\end{align*}
Thus $g_{\rm LS}$ directly violates \Cref{ass:ortho}. Its risk is smaller than that of $g_0$ and its CvM penalty is zero, so it minimises $\cL_{\rm CvM}(\cdot;\lambda)$ for every $\lambda$.
\end{remark}

\begin{remark}[The role of hidden confounding]\label{rem:confounding}
Because the independence in \Cref{ass:structural} is imposed against the prediction $g_0(X)$ rather than the covariate vector $X$, a hidden confounder $H$ acting on both $X$ and $Y$ is admitted. Such confounding does not affect the covariate-shift guarantee of \Cref{thm:dro}. Reweighting through the prediction preserves $P(Y \,|\, X)$ for every weight and every distribution of $H$. $H$ enters instead at the identification step, as discussed after \Cref{ass:ortho}.
\end{remark}

The remaining step is continuity of $\cT$ under $L^2(P_X)$ convergence, established in \Cref{lem:continuity}. It ensures that a limit point of $g_\lambda$ whose penalty converges to $0$ belongs to $\cG_I$.

\begin{proposition}[Identification]\label{prop:identification}
Under \Cref{ass:structural,ass:ortho}, any $L^2(P_X)$-limit point of $g_\lambda$ as $\lambda \to \infty$ that lies in $\cG$ coincides with $g_0$ in $L^2(P_X)$.
\end{proposition}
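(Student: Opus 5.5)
The plan is to take a sequence $\lambda_n\to\infty$ and minimisers $g_n\coloneqq g_{\lambda_n}\in\cG_{\lambda_n}^\ast$ with $\|g_n-\bar g\|_{P_X}\to0$ for some $\bar g\in\cG$. We will show two things: that $\bar g\in\cG_I$, and that $\cL(\bar g)\le\cL(g_0)$. \Cref{prop:unique} then gives $\|\bar g-g_0\|_{P_X}=0$, since $g_0$ minimises the risk over $\cG_I$ and every minimiser coincides with it.

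The first step compares each $g_n$ with $g_0$. Because $g_0\in\cG_I$, \Cref{prop:completeness} gives $\cT(g_0)=0$. Optimality of $g_n$ then yields
\begin{align*}
\cL(g_n)+\lambda_n\cT(g_n)\le \cL(g_0)+\lambda_n\cT(g_0)=\cL(g_0)=\sigma^2 .
\end{align*}
Since $\cL(g_n)\ge0$, this produces two bounds: $\cT(g_n)\le\sigma^2/\lambda_n\to0$ and $\cL(g_n)\le\sigma^2$ for all $n$.

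The second step passes to the limit. The risk is continuous in $L^2(P_X)$: by Minkowski, $\big|\cL(g_n)^{1/2}-\cL(\bar g)^{1/2}\big|\le\|g_n-\bar g\|_{P_X}$, using $\E[Y^2]<\infty$ and $g_n,\bar g\in L^2(P_X)$. Hence $\cL(\bar g)=\lim_n\cL(g_n)\le\sigma^2=\cL(g_0)$. For the penalty we invoke \Cref{lem:continuity}, the continuity of $\cT$ under $L^2(P_X)$ convergence at limits in $\cG$. It gives $\cT(\bar g)=\lim_n\cT(g_n)=0$, and then \Cref{prop:completeness} gives $\bar g\in\cG_I$. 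Combining the two steps, $\bar g\in\cG_I$ with $\cL(\bar g)\le\cL(g_0)=\min_{\cG_I}\cL$. So $\bar g$ is a risk minimiser over $\cG_I$, and \Cref{prop:unique} concludes.

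The main obstacle is the continuity of $\cT$. $\cT$ depends on $g$ through indicators $\mathbbm 1\{F_g\le s\}$ and the measure $P_g$, so $L^2$ convergence alone does not control it. The argument needs continuity of $H_{\bar g}$, which \Cref{def:Gclass} guarantees because $\bar g\in\cG$; this is why the statement restricts to limit points lying in $\cG$. We rely on \Cref{lem:continuity} for this.

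If only lower semicontinuity were available, that would still suffice, since $0\le\cT(\bar g)\le\liminf_n\cT(g_n)=0$. A proof of that would use the representation $\cT(g)=\int_0^1\Psi_g(u)^2\,\mathrm du$ from \Cref{thm:rkhs}. One shows $\E[R_{g_n}\mathbbm 1\{F_{g_n}\le s\}]\to\E[R_{\bar g}\mathbbm 1\{F_{\bar g}\le s\}]$ for every $s$, which holds because $R_{g_n}\to R_{\bar g}$ in $L^1$ and $P(F_{\bar g}=s)=0$. One then uses weak convergence $P_{g_n}\Rightarrow P_{\bar g}$ together with uniform convergence of the bounded integrands.
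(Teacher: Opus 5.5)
Your proof is correct and matches the paper's argument step for step. You compare with $g_0$ to get $\cT(g_\lambda)\le\sigma^2/\lambda$ and $\cL(g_\lambda)\le\sigma^2$, pass to the limit via \Cref{lem:continuity} and continuity of the risk (Minkowski where the paper uses Cauchy--Schwarz), and conclude with \Cref{prop:completeness,prop:unique}.
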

The proof is in \Cref{app:identification}. If the structural additive noise model is misspecified, the same limit selects the lowest-risk prediction functions satisfying the observable invariance restriction.

\begin{proposition}[Target under misspecification]\label{prop:misspecified}
Suppose $\cG_I$ is nonempty. Every $L^2(P_X)$-limit point of $g_\lambda$ as $\lambda\to\infty$ that lies in $\cG$ belongs to
\begin{align*}
\argmin_{g\in\cG_I}\cL(g),
\quad \text{where }
\cL(g)=\E\big[(Y-g(X))^2\big].
\end{align*}
\end{proposition}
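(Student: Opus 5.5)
We argue along a sequence $\lambda_n\to\infty$ with $g_{\lambda_n}\to\bar g\in\cG$ in $L^2(P_X)$, and compare $g_{\lambda_n}$ with an arbitrary invariant competitor. Fix any $\tilde g\in\cG_I$, which exists since $\cG_I$ is nonempty. By \Cref{prop:completeness}, $\cT(\tilde g)=0$. Minimality of $g_{\lambda_n}$ in \eqref{eq:estimator} then gives
\begin{align*}
\cL(g_{\lambda_n})+\lambda_n\cT(g_{\lambda_n})\le \cL_{\mathrm{CvM}}(\tilde g;\lambda_n)=\cL(\tilde g).
\end{align*}
Since $\cL\ge0$ and $\cT\ge0$, two consequences follow. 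First, $\cL(g_{\lambda_n})\le\cL(\tilde g)$ for every $n$. Second, $\cT(g_{\lambda_n})\le\cL(\tilde g)/\lambda_n\to0$.

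Next we pass to the limit in each inequality. The risk $\cL(g)=\|Y-g(X)\|_P^2$ is continuous under $L^2(P_X)$ convergence, by the triangle inequality in $L^2(P)$, so $\cL(\bar g)=\lim_n\cL(g_{\lambda_n})\le\cL(\tilde g)$. For the penalty, we invoke \Cref{lem:continuity}, the continuity of $\cT$ under $L^2(P_X)$ convergence at limits in $\cG$. This gives $\cT(\bar g)=\lim_n\cT(g_{\lambda_n})=0$. By \Cref{prop:completeness}, $\bar g\in\cG_I$.

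Combining the two limits, $\bar g\in\cG_I$ and $\cL(\bar g)\le\cL(\tilde g)$ for every $\tilde g\in\cG_I$. Hence $\bar g\in\argmin_{g\in\cG_I}\cL(g)$. If the infimum over $\cG_I$ is not attained, the argument itself shows it is attained at $\bar g$, so the argmin is nonempty whenever such a limit point exists.

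The only delicate step is the continuity of $\cT$. Its hypotheses, namely that the limit lies in $\cG$ so that $H_{\bar g}$ is continuous and $\bar\sigma^2_{\bar g}>0$, are exactly those in the statement, so we rely on \Cref{lem:continuity} as stated. Note that the argument nowhere uses \Cref{ass:structural}; it needs only minimality, nonnegativity, and these continuity properties.
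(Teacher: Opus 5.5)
Your proof is correct and follows essentially the same route as the paper's. Both compare $g_\lambda$ with an arbitrary invariant competitor to get $\cT(g_\lambda)\le\cL(\tilde g)/\lambda$ and $\cL(g_\lambda)\le\cL(\tilde g)$, then pass to the limit using \Cref{lem:continuity} and continuity of the risk, which needs only the standing condition $\E[Y^2]<\infty$ and not \Cref{ass:structural}.
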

The set in the display may contain several functions. Under misspecification, distinct invariant prediction functions can attain the same minimum observational risk. The proof is in \Cref{app:misspecified}.

\section{Estimation and inference}\label{sec:theory}
We now turn to estimation from an i.i.d.\ sample $\{(X_i, Y_i)\}_{i=1}^n$. Write $P_n$ for the empirical measure of this sample. This section develops the empirical version of the objective, establishes consistency and a rate of convergence for the resulting estimator of $g_0$, and closes with a goodness-of-fit test.

\subsection{The empirical criterion}\label{sec:empirical}

Given a sample, we estimate $\cT(g)$ in \eqref{eq:cvm} by 
\begin{align}\label{eq:sample_cvm}
    \hat{\cT}_n(g) = \frac1n\sum_{j=1}^{n} \Big(\frac1n\sum_{i=1}^{n} R_{g,i}\mathbbm{1}\big\{F_{g,i} \leq F_{g,j}\big\} - \bar{R}_g\cdot \frac1n\sum_{i=1}^{n}\mathbbm{1}\big\{F_{g,i} \leq F_{g,j}\big\}\Big)^2.
\end{align} Here, $F_{g,i}=g(X_i)$, $R_{g,i}=\big(Y_i-g(X_i)\big)^2$, and $\bar R_g=n^{-1}\sum_{i=1}^nR_{g,i}$.
The corresponding estimator of $g_\lambda$ in \eqref{eq:estimator} is
\begin{align}\label{eq:finite_estimator}
\hat{g}_{n,\lambda_n} \in \argmin_{g \in \cG} \frac{1}{n}\sum_{i=1}^{n} \big(Y_i - g(X_i)\big)^2 + \lambda_n\,\hat{\cT}_n(g).
\end{align}
Both the consistency theory of \Cref{sec:asymptotics} and the goodness-of-fit test of \Cref{sec:gof} are built on these two objects.

After sorting the fitted values, computing the empirical penalty \eqref{eq:sample_cvm} takes $O(n\log n)$ time. Obtaining the estimator \eqref{eq:finite_estimator}, however, may require computing gradients when the function class $\cG$ is flexible, as with the neural network classes in \Cref{sec:experiments}, and there the indicator $\mathbbm{1}\{F_{g,i} \le F_{g,j}\}$ makes $\hat{\cT}_n(g)$ non-differentiable in $g$. We therefore replace the indicator by the sigmoid
\begin{align}
\label{eq:anneal}
\sigma\big((F_{g,j} - F_{g,i})/\tau\big)
=\frac{1}{1 + e^{-(F_{g,j} - F_{g,i})/\tau}},
\end{align}
which approaches the indicator away from ties as the temperature $\tau>0$ decreases.\footnote{At a tie, the sigmoid equals $1/2$, whereas the hard indicator equals $1$.} In practice, one can take $\tau$ small or anneal it during training. The guarantees of \Cref{sec:asymptotics,sec:gof} are stated for the exact criterion \eqref{eq:sample_cvm}. Analysing the smoothed criterion would additionally require control of the approximation error.

\subsection{Consistency and rates}\label{sec:asymptotics}

\Cref{sec:ident} identifies the population target, $g_0$. Here we show that the estimator $\hat{g}_{n,\lambda_n}$ converges to $g_0$ as the sample size grows, with the distance between $\hat{g}_{n,\lambda_n}$ and $g_0$ bounded in terms of the sample size $n$, the complexity of $\cG$, and the penalty parameter $\lambda_n$. The bound indicates how $\lambda_n$ may be chosen.

For consistency, we impose the following conditions on $\cG$. We assume that all function classes in the empirical-process arguments are pointwise measurable and that the displayed minimisers exist.
\begin{assumption}[Regularity]\label{ass:regularity} The class $\cG$ defined in \Cref{def:Gclass} satisfies:
\begin{enumerate}[label=(\alph*)]
\item the pseudo-dimension of $\cG$ is $V$, with $1\le V<\infty$, and $\cG_c\coloneqq\{g-s:g\in\cG,\ s\in\R\}$ satisfies $\mathrm{Pdim}(\cG_c)=O(V)$; 
\item there exists an envelope $G$ with $\E\big[G(X)^4\big] < \infty$, and $\E\big[Y^4\big] < \infty$;
\item $\cG$ is closed in $L^2(P_X)$;
\item the class $\Big\{x \mapsto \nu\big(g,\, g(x)\big)^2 : g \in \cG\Big\}$ is $P$-Glivenko-Cantelli, where $\nu(g,s) = \E\big[R_g\,\mathbbm{1}\{F_g \le s\}\big] - \bar\sigma_g^2\,P(F_g \le s)$.
\end{enumerate}
\end{assumption}

Condition~(a) controls the enlarged class obtained by subtracting arbitrary constants, and its bound is immediate if $\cG$ is closed under addition of constants. Condition~(d) ensures the remaining uniform law of large numbers for the deviation class $\{x\mapsto\nu(g,g(x))^2:g\in\cG\}$, whose complexity is not determined by $\mathrm{Pdim}(\cG)$ alone.

For rates we additionally control the complexity of the deviation class.
\begin{assumption}[Deviation class complexity]\label{ass:reg_rate}
The class $\big\{x \mapsto \nu(g,g(x))^2 : g \in \cG\big\}$, with $\nu$ defined in \Cref{ass:regularity}(d), has pseudo-dimension at most $V_2$, and $V_2 \le c\,V$ for an absolute constant $c \ge 1$.
\end{assumption}
\Cref{ass:reg_rate} is used only for the explicit rate. It must be established for the chosen class by bounding the pseudo-dimension of the composed maps $x\mapsto\nu(g,g(x))^2$; it does not follow from $\mathrm{Pdim}(\cG)=V$.

Because $\hat{g}_{n,\lambda_n}$ is data-dependent, the bounds used at it must hold uniformly over $\cG$.
\begin{theorem}[Uniform convergence of the empirical criterion]\label{thm:uniform_t}\label{thm:rate_uniform}
If \Cref{ass:regularity} is satisfied, then the following statements hold:
\begin{enumerate}[label=(\alph*)]
\item
The penalty converges uniformly,
\begin{align*}
\sup_{g\in\cG}\big|\hat\cT_n(g)-\cT(g)\big|\overset{P}{\longrightarrow}0.
\end{align*}
For every fixed $\lambda\geq0$, the full empirical criterion also converges uniformly,
\begin{align*}
\sup_{g\in\cG}\Big|
P_n[(Y-g(X))^2]+\lambda\hat\cT_n(g)
-\cL_{\rm CvM}(g;\lambda)
\Big|\overset{P}{\longrightarrow}0.
\end{align*}
\item
If in addition \Cref{ass:reg_rate} holds, then
\begin{align*}
\sup_{g\in\cG}\big|\hat\cT_n(g)-\cT(g)\big|
&=O_P\big(\sqrt{V/n}\big),\\
\sup_{g\in\cG}\Big|
P_n[(Y-g(X))^2]+\lambda\hat\cT_n(g)
-\cL_{\rm CvM}(g;\lambda)
\Big|
&=O_P\big((1+\lambda)\sqrt{V/n}\big),
\quad \text{for every fixed }\lambda\geq0.
\end{align*}
\end{enumerate}
\end{theorem}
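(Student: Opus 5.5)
The plan is to decompose $\hat\cT_n(g)-\cT(g)$ into an inner-estimation error and an outer-averaging error. Write the empirical deviation process
\[
\hat\nu_n(g,s)=P_n\big[R_g\mathbbm{1}\{F_g\le s\}\big]-\bar R_g\,P_n\big[\mathbbm{1}\{F_g\le s\}\big],
\]
so that $\hat\cT_n(g)=P_n[\hat\nu_n(g,F_g)^2]$, while $\cT(g)=P[\nu(g,F_g)^2]$ by \eqref{eq:cvm}. We then split
\[
\hat\cT_n(g)-\cT(g)=P_n\big[\hat\nu_n(g,F_g)^2-\nu(g,F_g)^2\big]+(P_n-P)\big[\nu(g,F_g)^2\big].
\]
The second term is handled directly by \Cref{ass:regularity}(d), which gives uniform convergence to zero in probability. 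Under \Cref{ass:reg_rate}, the class has finite pseudo-dimension $V_2\le cV$. We also need an envelope. Since $|\nu(g,s)|\le 2\bar\sigma_g^2\le 4\E[Y^2]+4\E[G^2]$ is bounded by a constant uniformly in $g$, a standard VC/pseudo-dimension maximal inequality (e.g.\ Haussler's covering bound with Dudley chaining) gives $O_P(\sqrt{V/n})$.

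For the first term, we use $|a^2-b^2|\le |a-b|(|a|+|b|)$. This bounds it by
\[
\sup_{g,s}|\hat\nu_n(g,s)-\nu(g,s)|\cdot P_n\big[|\hat\nu_n|+|\nu|\big].
\]
The second factor is $O_P(1)$ uniformly, because $|\hat\nu_n|\le 2\bar R_g\le 4P_n[Y^2]+4P_n[G^2]$, which is $O_P(1)$ by the law of large numbers. The key step is therefore
\[
\sup_{g\in\cG,s\in\R}|\hat\nu_n(g,s)-\nu(g,s)|=O_P(\sqrt{V/n}).
\]
We expand $\hat\nu_n-\nu$ into three pieces:
\[
(P_n-P)[R_g\mathbbm{1}\{g(X)\le s\}],\qquad (\bar R_g-\bar\sigma_g^2)\,P_n[\mathbbm{1}\{F_g\le s\}],\qquad \bar\sigma_g^2\,(P_n-P)[\mathbbm{1}\{g(X)-s\le 0\}].
\]
The indicator class $\{\mathbbm{1}\{g-s\le0\}\}$ is a class of subgraph-type sets of $\cG_c$, so its VC dimension is $O(V)$ by \Cref{ass:regularity}(a). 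This gives $O(\sqrt{V/n})$ uniformly. The squared-loss class $\{(y-g(x))^2\}$ has pseudo-dimension $O(V)$, since it is a monotone transformation of $y-g(x)$ on pieces. It also has envelope $2Y^2+2G^2$ with finite second moment by \Cref{ass:regularity}(b), so $\sup_g|\bar R_g-\bar\sigma_g^2|=O_P(\sqrt{V/n})$.

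The product class $\{R_g\mathbbm{1}\{g-s\le0\}\}$ is the main obstacle. We would bound its uniform $L^2$ covering numbers by the product of the covering numbers of the two factor classes, using the envelope $2Y^2+2G^2$, whose square is integrable by the fourth-moment condition. The entropy is then $O(V\log(1/\epsilon))$, and a maximal inequality for uniform-entropy classes with square-integrable envelope yields $O_P(\sqrt{V/n})$. Care is needed because the envelope is unbounded. We would use the version of the bound stated in terms of $\|F\|_{P,2}$ (van der Vaart--Wellner Thm.~2.14.1), which only requires $\E[F^2]<\infty$.

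For part (a) without \Cref{ass:reg_rate}, the same decomposition applies. The first term uses only \Cref{ass:regularity}(a),(b), which make the classes above Glivenko--Cantelli, so it converges to zero in probability. The second term converges by \Cref{ass:regularity}(d). The full criterion then follows from the triangle inequality:
\[
\sup_g|(P_n-P)(Y-g)^2|+\lambda\sup_g|\hat\cT_n-\cT|.
\]
Both suprema are $o_P(1)$, and under \Cref{ass:reg_rate} both are $O_P(\sqrt{V/n})$, which gives the $(1+\lambda)\sqrt{V/n}$ rate.
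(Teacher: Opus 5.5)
Your proposal is correct and follows the paper's proof essentially step by step. It uses the same split into $P_n[\hat\nu_n(g,F_g)^2-\nu(g,F_g)^2]$ and $(P_n-P)[\nu(g,F_g)^2]$, the same $2A_nB_n$ bound, the same three-piece expansion of $\hat\nu_n-\nu$, \Cref{ass:regularity}(d) or \Cref{ass:reg_rate} for the outer term, and the uniform-entropy maximal inequality with a square-integrable envelope for the rates. The only minor difference is the product class $\{R_g\mathbbm{1}\{F_g\le s\}\}$: you bound its uniform covering numbers by the product of those of the two factor classes, whereas the paper shows directly that this class has pseudo-dimension $O(V)$ (\Cref{lem:pdim}(a)); both give entropy $O(V\log(1/\epsilon))$ and the same rate.
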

At $\lambda=0$, the second bound is the uniform empirical-risk rate used in the proof of \Cref{thm:consistency_g}(b).
The proof is given in \Cref{app:uniform_t}.

The uniform convergence above connects the empirical estimator to its population target at any fixed penalty level.
\begin{proposition}[Fixed-penalty estimation]\label{prop:fixed_lambda}
Fix $\lambda\geq0$ and let $\hat g_{n,\lambda}$ be any empirical minimiser in \eqref{eq:finite_estimator}. Under \Cref{ass:regularity}, as $n\to\infty$, the estimator converges to the set $\cG_\lambda^\ast$ defined in \eqref{eq:estimator}, in the sense that the $L^2(P_X)$ distance from $\hat g_{n,\lambda}$ to $\cG_\lambda^\ast$ converges to $0$ in probability, and
\begin{align*}
\cL_{\rm CvM}(\hat g_{n,\lambda};\lambda)
-\inf_{g\in\cG}\cL_{\rm CvM}(g;\lambda)
&\overset{P}{\longrightarrow}0.
\end{align*}
If $\cG_\lambda^\ast=\{g_\lambda\}$, then
\begin{align*}
\|\hat g_{n,\lambda}-g_\lambda\|_{P_X}\overset{P}{\longrightarrow}0.
\end{align*}
\end{proposition}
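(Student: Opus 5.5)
The argument is the standard argmin-consistency scheme for M-estimators, driven by \Cref{thm:uniform_t}(a). Write $\hat M_n(g)=P_n[(Y-g(X))^2]+\lambda\hat\cT_n(g)$ and $M(g)=\cL_{\rm CvM}(g;\lambda)$, and set $\Delta_n=\sup_{g\in\cG}|\hat M_n(g)-M(g)|$, which tends to $0$ in probability by \Cref{thm:uniform_t}(a). Since $\hat g_{n,\lambda}$ minimises $\hat M_n$, for any $g\in\cG$ we have
\begin{align*}
M(\hat g_{n,\lambda})\le \hat M_n(\hat g_{n,\lambda})+\Delta_n\le \hat M_n(g)+\Delta_n\le M(g)+2\Delta_n .
\end{align*}
Taking the infimum over $g$ gives $0\le M(\hat g_{n,\lambda})-\inf_{\cG}M\le 2\Delta_n\to0$ in probability. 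This is the excess-criterion statement.

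To pass from small excess criterion to small $L^2(P_X)$ distance to $\cG_\lambda^\ast$, I would prove a well-separatedness property. For every $\eta>0$,
\begin{align*}
\inf\{M(g)-\inf_\cG M: g\in\cG,\ d(g,\cG_\lambda^\ast)\ge\eta\}>0,
\end{align*}
where $d$ is the $L^2(P_X)$ distance. Given this, $P(d(\hat g_{n,\lambda},\cG^\ast_\lambda)\ge\eta)\le P(2\Delta_n\ge \epsilon_\eta)\to0$. The uniqueness case is then immediate, since $d(\hat g_{n,\lambda},\{g_\lambda\})=\|\hat g_{n,\lambda}-g_\lambda\|_{P_X}$.

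The main obstacle is the separation property. I would try to establish it by compactness. First, lower semicontinuity, indeed continuity, of $M$ in $L^2(P_X)$ on $\cG$: the risk $\cL$ is continuous because $\E[(Y-g)^2]$ is continuous in $L^2$, and $\cT$ is continuous by \Cref{lem:continuity}. Second, compactness. Suppose separation fails. Then there is a sequence $g_k$ with $d(g_k,\cG^\ast_\lambda)\ge\eta$ and $M(g_k)\to\inf M$, and I need an $L^2(P_X)$-convergent subsequence. Here I would use \Cref{ass:regularity}. The envelope $G$ with $\E[G^4]<\infty$ together with finite pseudo-dimension $V$ makes $\cG$ totally bounded in $L^2(P_X)$, via the Haussler/Pollard covering bound $N(\epsilon\|G\|,\cG,L^2(Q))\le C(V)\epsilon^{-O(V)}$ applied with $Q=P_X$. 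Closedness (c) then gives compactness, since $L^2$ is complete. A limit $g^\ast\in\cG$ then satisfies $M(g^\ast)=\inf M$, so $g^\ast\in\cG^\ast_\lambda$. This contradicts $d(g_k,\cG_\lambda^\ast)\ge\eta$, because $d(g_k,\cG^\ast_\lambda)\le\|g_k-g^\ast\|\to0$.

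The same compactness also shows that $\cG_\lambda^\ast$ is nonempty. One subtlety is that \Cref{lem:continuity} must apply at limits lying in $\cG$, which requires continuity of $H_g$. This is guaranteed by membership in $\cG$ through closedness (c), so I would check that the lemma's hypotheses are met there.
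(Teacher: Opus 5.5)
Your proposal is correct and follows the paper's proof essentially step for step. The uniform law of \Cref{thm:uniform_t}(a) gives the excess-criterion bound $2\Delta_n$. Compactness of $\cG$ in $L^2(P_X)$ (\Cref{lem:compact}, via finite pseudo-dimension, the envelope and closedness) and continuity of the risk and of $\cT$ (\Cref{lem:continuity}) then give the positive separation gap and nonemptiness of $\cG_\lambda^\ast$. The only cosmetic difference is that you get the gap by a sequential contradiction argument, whereas the paper takes the minimum of the continuous criterion over the compact set $\{g\in\cG: d_{P_X}(g,\cG_\lambda^\ast)\ge\varepsilon\}$ directly.
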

The proof of the fixed-penalty result is in \Cref{app:fixed_lambda}.

A fixed $\lambda$ targets $\cG_\lambda^\ast$. Identification of $g_0$ requires the penalty to grow with $n$. For the growing-penalty estimator, a distance bound requires small values of the penalty $\cT(g)$ and of the excess risk $\cL(g)-\cL(g_0)$ to force $g$ close to $g_0$. The two quantities are collected in the functional
\begin{align*}
    M(g) &\coloneqq \cT(g) + \big(\cL(g)-\cL(g_0)\big)_{+},\quad \mbox{where}
    (a)_{+} = \max(a,0),
\end{align*}
and the requirement becomes a growth condition on $M$.
\begin{assumption}[Quadratic curvature]\label{ass:curvature}
There exist constants $c_0, \rho > 0$ such that
\begin{align*}
    M(g) \ge c_0\,\|g - g_0\|_{P_X}^2 \quad \text{for every } g \in \cG \text{ with } \|g - g_0\|_{P_X} \le \rho.
\end{align*}
\end{assumption}
Under \Cref{ass:structural,ass:ortho}, $M(g)=0$ forces $\cT(g)=0$ and $\cL(g)\leq\cL(g_0)$, so \Cref{prop:completeness,prop:unique} gives $\|g-g_0\|_{P_X}=0$; conversely, $M(g_0)=0$. \Cref{ass:curvature} strengthens this zero-set statement to local quadratic growth.

\begin{theorem}[Consistency and rate of $\hat{g}_{n,\lambda_n}$]\label{thm:consistency_g}\label{thm:rate_g}
Suppose \Cref{ass:regularity,ass:structural,ass:ortho} are satisfied and $\lambda_n \to \infty$ with $\lambda_n = o(n)$. Then the following statements hold:
\begin{enumerate}[label=(\alph*)]
\item $\|\hat{g}_{n,\lambda_n} - g_0\|_{P_X}^2 \to 0$ in probability.
\item If in addition \Cref{ass:reg_rate,ass:curvature} hold, then
\begin{align}\label{eq:rate_main}
    \big\|\hat g_{n,\lambda_n} - g_0\big\|_{P_X}^2 = O_P\,\Big(\frac{1}{\lambda_n} + \sqrt{\frac{V}{n}} + \frac{\lambda_n}{n}\Big).
\end{align}
\end{enumerate}
\end{theorem}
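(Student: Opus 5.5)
The plan is a basic inequality at the empirical minimiser, fed by the uniform bounds of \Cref{thm:uniform_t}, and then either a compactness argument (part (a)) or \Cref{ass:curvature} (part (b)). Write $\hat g=\hat g_{n,\lambda_n}$, $\hat\cL_n(g)=P_n[(Y-g(X))^2]$, and $\Delta_n=\sup_{g\in\cG}|\hat\cT_n(g)-\cT(g)|$, $\Delta_n'=\sup_{g\in\cG}|\hat\cL_n(g)-\cL(g)|$. Since $g_0\in\cG$, optimality of $\hat g$ in \eqref{eq:finite_estimator} gives
\begin{align*}
\hat\cL_n(\hat g)+\lambda_n\hat\cT_n(\hat g)\le \hat\cL_n(g_0)+\lambda_n\hat\cT_n(g_0).
\end{align*}
Replacing empirical quantities by population ones and using $\cT(g_0)=0$ (\Cref{prop:completeness}, since $g_0\in\cG_I$) yields
\begin{align*}
\cL(\hat g)-\cL(g_0)+\lambda_n\cT(\hat g)\le 2\Delta_n'+2\lambda_n\Delta_n .
\end{align*}
The left side cannot simply be bounded below by $0$: $\cL(\hat g)-\cL(g_0)$ may be negative, because $g_0$ need not minimise risk over all of $\cG$. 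I would therefore use the trivial lower bound $\cL(\hat g)\ge \inf_{\cG}\cL\ge 0$, giving $\cL(\hat g)-\cL(g_0)\ge -\cL(g_0)=-\sigma^2$. This produces two consequences:
\begin{align*}
\cT(\hat g)&\le \frac{\sigma^2+2\Delta_n'}{\lambda_n}+2\Delta_n,\\
\big(\cL(\hat g)-\cL(g_0)\big)_+&\le 2\Delta_n'+2\lambda_n\Delta_n .
\end{align*}
The second holds because $\lambda_n\cT(\hat g)\ge0$.

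\textbf{Part (a).} By \Cref{thm:uniform_t}(a), $\Delta_n,\Delta_n'\to0$ in probability, so $\cT(\hat g)\to0$. For the excess risk, the term $\lambda_n\Delta_n$ is the problem, since part (a) alone gives no rate for $\Delta_n$. To avoid it, I would use a second comparison that does not multiply $\Delta_n$ by $\lambda_n$: split $\hat\cT_n(\hat g)\ge0$ and $\hat\cT_n(g_0)$ and bound the latter pointwise. The statistic $\hat\cT_n(g_0)$ is a degenerate V-statistic-type quantity of order $O_P(1/n)$, because $\nu(g_0,\cdot)\equiv0$. This gives
\begin{align*}
\cL(\hat g)\le \cL(g_0)+2\Delta_n'+\lambda_n O_P(1/n),
\end{align*}
and the last term is $o_P(1)$ since $\lambda_n=o(n)$. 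So $(\cL(\hat g)-\cL(g_0))_+\to0$ and hence $M(\hat g)\to0$ in probability.

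It remains to show that $M(g_k)\to0$ forces $\|g_k-g_0\|_{P_X}\to0$. I would argue by contradiction through a subsequence, using \Cref{lem:continuity}, closedness (\Cref{ass:regularity}(c)), lower semicontinuity of $\cL$ under $L^2$ convergence, and \Cref{prop:unique}, exactly as in the proof of \Cref{prop:identification}. \textbf{This is the step I expect to be the main obstacle.} $L^2$-closedness of $\cG$ does not give compactness, so a minimising sequence need not have a limit point. I would try to extract compactness from the finite pseudo-dimension and the envelope (totally bounded in $L^2(P_X)$, uniform integrability via $\E[G^4]<\infty$), combined with closedness. Failing that, I would phrase the argument as an identifiability-at-the-boundary statement.

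\textbf{Part (b).} By \Cref{thm:uniform_t}(b), $\Delta_n,\Delta_n'=O_P(\sqrt{V/n})$. Combined with the $O_P(\lambda_n/n)$ bound for $\hat\cT_n(g_0)$, this gives
\begin{align*}
M(\hat g)=O_P\Big(\lambda_n^{-1}+\sqrt{V/n}+\lambda_n/n\Big).
\end{align*}
The $1/\lambda_n$ term comes from $\cT(\hat g)\lesssim \sigma^2/\lambda_n+\sqrt{V/n}$. The $\lambda_n/n$ term comes from $\lambda_n\hat\cT_n(g_0)$ in the excess-risk bound, which is used in place of $\lambda_n\Delta_n$. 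By part (a), $\|\hat g-g_0\|_{P_X}\le\rho$ with probability tending to one. On that event, \Cref{ass:curvature} yields $\|\hat g-g_0\|_{P_X}^2\le M(\hat g)/c_0$, which is \eqref{eq:rate_main}.

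The auxiliary claim $\hat\cT_n(g_0)=O_P(1/n)$ is verified as follows. Write the inner sum as an empirical process in $s=F_{g_0,j}$ whose population counterpart vanishes, because $\E[(R_{g_0}-\sigma^2)\mathbbm 1\{F_{g_0}\le s\}]=0$ by independence of $\varepsilon$ and $g_0(X)$. Centring $\bar R$ costs another $O_P(n^{-1/2})$ factor. A Donsker bound over indicator thresholds, using $\E Y^4<\infty$, then gives $\sup_s|\cdot|=O_P(n^{-1/2})$, and squaring gives the $O_P(1/n)$ rate.
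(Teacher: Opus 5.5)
Your proposal is correct and follows essentially the same proof as the paper. It uses the basic inequality with \Cref{thm:uniform_t}, then bounds $\lambda_n\hat\cT_n(g_0)=O_P(\lambda_n/n)$ to avoid the $\lambda_n\Delta_n$ term. The paper obtains that bound from \Cref{cor:null_general}; your direct Donsker argument for the marked process also works, though $\E[\varepsilon^4]<\infty$ needs $\E[G(X)^4]<\infty$ as well as $\E[Y^4]<\infty$. The proof then shows $M(\hat g)\to0$ and applies \Cref{ass:curvature} on the event $\|\hat g-g_0\|_{P_X}\le\rho$. The compactness step you flagged is handled by exactly the route you suggest, which is the paper's \Cref{lem:compact}: finite pseudo-dimension with a square-integrable envelope gives finite $L^2(P_X)$ covering numbers, and closedness in the complete space $L^2(P_X)$ then yields compactness. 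Continuity of $\cT$ and $\cL$ together with \Cref{prop:unique} then completes part~(a).
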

The proof in \Cref{app:consistency_g} also shows $\cT(\hat g_{n,\lambda_n})\to0$ without \Cref{ass:ortho}. This conclusion uses only $\cT(g_0)=0$, which follows from \Cref{ass:structural}; it does not use the identification of $g_0$. The condition $\lambda_n=o(n)$ makes the empirical penalty at $g_0$ negligible. In \Cref{sec:nonlinear}, we investigate how the estimator performs in finite samples as $\lambda$ varies.

In the bound in \eqref{eq:rate_main}, $1/\lambda_n$ is the finite-penalty error, $\sqrt{V/n}$ is the uniform estimation error, and $\lambda_n/n$ is the empirical penalty at $g_0$. Balancing the first and third terms gives $\lambda_n\asymp\sqrt n$ and, for fixed $V$, a squared-error bound of order $n^{-1/2}$. This argument specifies an asymptotic order, not a data-driven value of $\lambda_n$ or its multiplicative constant. 

\subsection{Goodness-of-fit Test}\label{sec:gof}
The penalty also tests the restriction it enforces. For a prediction function $g$ determined independently of the sample on which the statistic is computed, we test
\begin{align*}
H_0: g \in \cG_I \qquad\text{against}\qquad H_1: g \notin \cG_I,
\end{align*}
that is, whether the conditional MSE $\sigma_g^2(t)$ is constant along the prediction. Since $\cT(g) = 0$ exactly on $\cG_I$, the empirical penalty $\hat\cT_n(g)$ measures the departure.

\begin{remark}[Scope of the test]\label{rem:gof_scope}
The null concerns invariance rather than causality. It contains every $g\in\cG_I$, and a single distribution cannot test the orthogonality condition that identifies $g_0$ within $\cG_I$. Thus non-rejection does not establish that $g=g_0$.
\end{remark}

We use the scale-free statistic
\begin{align*}
    \hat{T}_n(g) \coloneqq \frac{n \hat{\mathcal T}_n(g)}{\widehat{\Var}_n(R_g)},
    \qquad
    \widehat{\Var}_n(R_g) = \frac{1}{n}\sum_{i=1}^{n} R_{g,i}^2 - \Big(\frac{1}{n}\sum_{i=1}^{n} R_{g,i}\Big)^2 ,
\end{align*}
where dividing by $\widehat{\Var}_n(R_g)$ removes an unknown scale from the limit. We assume that $\Var(R_g)$ is finite and positive. The denominator $\widehat{\Var}_n(R_g)$ then converges in probability to the positive constant $\Var(R_g)$.

Under $H_0$, $\hat T_n(g)$ is a Cram\'er--von Mises statistic built on the process
\begin{align}\label{eq:marked_process}
V_n(s) = \frac{1}{\sqrt{n}} \sum_{i=1}^{n} \big(R_{g,i} - \bar{R}_g\big)\,\mathbbm{1}\{F_{g,i} \le s\},
\end{align}
an empirical process of the indicators $\mathbbm{1}\{F_{g,i} \le s\}$ in which each indicator is weighted by the centred squared error $R_{g,i} - \bar{R}_g$ of its observation. The process is the empirical counterpart of the comparison between $\widetilde{H}_g$ and $H_g$ in \Cref{sec:cvm}, since $V_n(s)$ equals $\sqrt{n}\,\bar{R}_g$ times the difference between their empirical versions, and the test statistic takes the form
\begin{align*}
\hat{T}_n(g) = \frac{1}{n\,\widehat{\Var}_n(R_g)} \sum_{j=1}^{n} V_n(F_{g,j})^2.
\end{align*}

\begin{corollary}[Null limit, adapted from \citealp{stute1997nonparametric}]\label{cor:null_general}
Let $g \in \cG_I$ with $\Var(R_g) < \infty$. Then $\hat{\cT}_n(g) = O_P(1/n)$. If moreover $\Var(R_g) > 0$, then $\hat{T}_n(g)$ converges in distribution to $\Var(R_g)^{-1} \int G^0(s)^2\,\mathrm{d}H_g(s)$, where $G^0$ is a mean-zero Gaussian process whose covariance is determined by the conditional variance $\Var(R_g \,|\, F_g)$ and by $H_g$.
\end{corollary}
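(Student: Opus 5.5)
The plan is to reduce the statement to the functional central limit theorem for marked empirical processes of \citet{stute1997nonparametric}, with marks $R_{g,i}$ and the fixed covariate $F_{g,i}$, followed by the continuous mapping theorem. Since $g$ is fixed, the pairs $(F_{g,i},R_{g,i})$ are i.i.d. Under $H_0$, \Cref{prop:indicators} gives $\E[(R_g-\bar\sigma_g^2)\mathbbm{1}\{F_g\le s\}]=0$ for all $s$. Write $\varepsilon_i = R_{g,i}-\bar\sigma_g^2$. Then $\E[\varepsilon_i\,|\,F_{g,i}]=0$ almost surely, because $\sigma_g^2$ is constant and equal to $\bar\sigma_g^2$.

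First decompose the process in \eqref{eq:marked_process} as
\begin{align*}
V_n(s)=\frac{1}{\sqrt n}\sum_{i=1}^n \varepsilon_i\mathbbm{1}\{F_{g,i}\le s\}-\sqrt n(\bar R_g-\bar\sigma_g^2)\,H_{n}(s),
\end{align*}
where $H_n$ is the empirical distribution function of the $F_{g,i}$. Denote the first term by $V_n^0(s)$. It is a centred marked empirical process with square-integrable marks. Stute's result, which requires only $\E[\varepsilon^2]<\infty$, i.e.\ $\Var(R_g)<\infty$, gives weak convergence of $V_n^0$ in $D[-\infty,\infty]$ to a centred Gaussian process with covariance $\int_{-\infty}^{s\wedge t}\Var(R_g\,|\,F_g=u)\,\mathrm dH_g(u)$. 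Equivalently, the indicator class indexed by $s$ is VC, so the weighted class $\{(f,r)\mapsto (r-\bar\sigma_g^2)\mathbbm{1}\{f\le s\}\}$ is Donsker with square-integrable envelope. For the second term, $\sqrt n(\bar R_g-\bar\sigma_g^2)=V_n^0(\infty)$, and $H_n\to H_g$ uniformly by Glivenko--Cantelli. Hence jointly $V_n\rightsquigarrow G^0(s):=B(s)-B(\infty)H_g(s)$, where $B$ is the limit of $V_n^0$. This is a mean-zero Gaussian process whose covariance is determined by $\Var(R_g\,|\,F_g)$ and $H_g$, as claimed.

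Next, $n\hat\cT_n(g)=\bar R_g^{-2}\cdots$ requires care with normalisation. Reading \eqref{eq:sample_cvm} directly, the inner bracket equals $V_n(F_{g,j})/\sqrt n$. So $n\hat\cT_n(g)=\int V_n(s)^2\,\mathrm dH_n(s)\le \sup_s|V_n(s)|^2=O_P(1)$ by tightness, which gives $\hat\cT_n(g)=O_P(1/n)$. For the limit, write
\begin{align*}
\int V_n^2\,\mathrm dH_n-\int (G^0)^2\,\mathrm dH_g=\Big(\int V_n^2\,\mathrm dH_n-\int V_n^2\,\mathrm dH_g\Big)+\Big(\int V_n^2\,\mathrm dH_g-\int (G^0)^2\,\mathrm dH_g\Big).
\end{align*}
The second difference converges in distribution to zero by the continuous mapping theorem, since $z\mapsto\int z^2\,\mathrm dH_g$ is continuous in sup norm. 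Dividing by $\widehat{\Var}_n(R_g)\to\Var(R_g)>0$ and applying Slutsky completes the argument.

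\textbf{Main obstacle.} The first difference, which swaps the random integrating measure $H_n$ for $H_g$, is the main obstacle. I would handle it by a Skorokhod representation, or by the standard Stute argument. By continuity of $H_g$ from \Cref{def:Gclass}, the limit $G^0$ has continuous paths in the $H_g$-time scale. Composing with the quantile transform $U=H_g(F_g)$ reduces the problem to processes on $[0,1]$. There $V_n\circ H_g^{-1}$ is asymptotically uniformly equicontinuous, and $\int z_n^2\,\mathrm d(H_n-H_g)\to0$ whenever $z_n\to z$ uniformly with $z$ continuous and $\|H_n-H_g\|_\infty\to0$. This argument uses integration by parts or approximation of $z^2$ by step functions. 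Ties are negligible because $H_g$ is continuous.
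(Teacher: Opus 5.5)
Your proof is correct. Its architecture is the paper's: a functional limit for the marked process $V_n$, then replacing $H_n$ by $H_g$, then Slutsky with $\widehat{\Var}_n(R_g)\to\Var(R_g)$.

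For the step you flag as the main obstacle, the paper needs neither a Skorokhod representation nor an equicontinuity argument. Since $H_n\to H_g$ uniformly almost surely, $(V_n,H_n)\rightsquigarrow(G^0,H_g)$ jointly. Your criterion, $\int z_n^2\,\mathrm d(H_n-H_g)\to0$ whenever $z_n\to z$ uniformly with $z$ continuous, is exactly the continuity of $(v,F)\mapsto\int v^2\,\mathrm dF$ that the continuous mapping theorem requires.

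You genuinely differ in how you handle the estimated centring $\bar R_g$. The paper casts the problem as Stute's constant model $m(\cdot,\theta)=\theta$ with estimator $\bar R_g$. It verifies his Assumptions~1--2 (exact asymptotic linearity with influence function $R_g-\bar\sigma_g^2$, and trivial smoothness) and cites his Corollary~1.3 for $V_n$ itself. You instead use only the known-centring limit of $V_n^0$, via Stute's result for the true regression function or a Donsker argument. You then absorb the estimation error by hand through $V_n=V_n^0-V_n^0(\infty)H_n$.

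Your route is more elementary and yields $G^0=B-B(\infty)H_g$ directly. The paper reaches the equivalent representation $W(T_g)-H_g\,W(T_g(\infty))$ only by matching covariances, which it needs for path continuity. Your route also gives $\hat\cT_n(g)=O_P(1/n)$ from $n\hat\cT_n(g)=\int V_n^2\,\mathrm dH_n\le\sup_s V_n(s)^2$ without assuming $\Var(R_g)>0$. The paper instead deduces that rate from the distributional limit of $\hat T_n(g)$ and treats $\Var(R_g)=0$ separately. In exchange, the paper's route ties the result to Stute's general parametric theory, so it would carry over unchanged to a non-constant parametric model for the conditional MSE.

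Two cosmetic points. Delete the stray fragment $n\hat\cT_n(g)=\bar R_g^{-2}\cdots$, which is wrong and which you immediately supersede. State the second step as $\int V_n^2\,\mathrm dH_g\rightsquigarrow\int(G^0)^2\,\mathrm dH_g$, not as a difference converging in distribution to zero, since the two terms need not live on a common probability space.
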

The proof is given in \Cref{app:null}, where the covariance of $G^0$ is stated explicitly. The limit is in general not distribution-free, so its critical value is estimated by the wild bootstrap in \Cref{app:bootstrap}; \Cref{prop:bootstrap_validity} establishes the validity of this calibration. The test rejects $H_0$ when $\hat T_n(g)$ exceeds its critical value. It is consistent against every alternative $g \notin \cG_I$, since $\hat\cT_n(g) \to \cT(g) > 0$ there, so $\hat T_n(g) \to \infty$.

At $g = g_0$ the limit becomes distribution-free.
\begin{corollary}[Distribution-free null at $g_0$]\label{cor:null_g0}
Under \Cref{ass:structural}, with $0 < \Var(\varepsilon^2) < \infty$,
\begin{align*}
    \hat{T}_n(g_0) \overset{d}{\to} \int_0^1 B(u)^2 \,\mathrm{d}u,
\end{align*}
the classical Cram\'er--von Mises distribution, with $B$ a standard Brownian bridge.
\end{corollary}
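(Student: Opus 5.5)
We specialise \Cref{cor:null_general} to $g=g_0$ and identify the covariance of $G^0$ explicitly. Under \Cref{ass:structural}, $R_{g_0}=\varepsilon^2$ with $\varepsilon\indep F_{g_0}$. Hence $g_0\in\cG_I$, $\Var(R_{g_0})=\Var(\varepsilon^2)\in(0,\infty)$, and $\Var(R_{g_0}\,|\,F_{g_0})=\Var(\varepsilon^2)$ is constant. The hypotheses of \Cref{cor:null_general} therefore hold, so $\hat T_n(g_0)$ converges to $\Var(\varepsilon^2)^{-1}\int G^0(s)^2\,\mathrm dH_{g_0}(s)$. It remains to show that this limit is the classical CvM law.

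To avoid depending on the exact form of the covariance in \Cref{app:null}, I would compute it directly from the process $V_n$ in \eqref{eq:marked_process}. Write $\mu=\E[\varepsilon^2]$ and $\xi_i=\varepsilon_i^2-\mu$. Then
\begin{align*}
V_n(s)=\frac{1}{\sqrt n}\sum_{i=1}^n\xi_i\big(\mathbbm 1\{F_i\le s\}-H_n(s)\big),
\end{align*}
where $H_n$ is the empirical distribution function of the $F_i=g_0(X_i)$. Condition on the predictions $F_1,\dots,F_n$. Independence makes the $\xi_i$ i.i.d.\ with mean zero and variance $\Var(\varepsilon^2)$, and independent of the $F_i$. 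The conditional covariance of $V_n(s)$ and $V_n(t)$ is
\begin{align*}
\Var(\varepsilon^2)\,\frac1n\sum_i\big(\mathbbm 1\{F_i\le s\}-H_n(s)\big)\big(\mathbbm 1\{F_i\le t\}-H_n(t)\big)
=\Var(\varepsilon^2)\big(H_n(s\wedge t)-H_n(s)H_n(t)\big),
\end{align*}
and this converges to $\Var(\varepsilon^2)\big(H_{g_0}(s\wedge t)-H_{g_0}(s)H_{g_0}(t)\big)$. So $G^0(s)=\Var(\varepsilon^2)^{1/2}B(H_{g_0}(s))$ for a standard Brownian bridge $B$. Since only the covariance of the Gaussian limit matters, this identification is enough, provided it agrees with the covariance of $G^0$ in \Cref{cor:null_general}, as it must.

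Substituting into the limit cancels the factor $\Var(\varepsilon^2)$:
\begin{align*}
\int B(H_{g_0}(s))^2\,\mathrm dH_{g_0}(s)=\int_0^1 B(u)^2\,\mathrm du.
\end{align*}
The last equality is the change of variables $u=H_{g_0}(s)$. It is valid because $H_{g_0}$ is continuous, as $g_0\in\cG$ by \Cref{def:Gclass}, so $U_{g_0}=H_{g_0}(F_{g_0})$ is uniform. The same continuity lets us read the empirical sum $n^{-1}\sum_j V_n(F_j)^2$ as an integral against $H_n$: ties occur with probability zero, and by the continuous mapping theorem the integral converges jointly with $V_n$. The denominator $\widehat\Var_n(R_{g_0})\to\Var(\varepsilon^2)$, and Slutsky's theorem then gives the result.

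The main obstacle is checking that the general covariance from \Cref{app:null} really collapses to the bridge covariance. The reason it does is specific to $g_0$: $\E[R\,|\,F]$ is constant, so the centring by $\bar R_g$ only produces the $-H(s)H(t)$ term, and the conditional variance is constant, so no weighting by $\Var(R\,|\,F)$ survives. The direct conditional computation above handles this. The remaining technical point is tightness. I would take it from \Cref{cor:null_general} rather than reprove it; if needed, it can also be obtained from a conditional multiplier CLT on the rank scale.
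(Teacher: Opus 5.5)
Your proposal is correct and is essentially the paper's proof: specialise \Cref{cor:null_general} to $g_0$, recognise the limiting covariance as $\Var(\varepsilon^2)\{H_{g_0}(s\wedge t)-H_{g_0}(s)H_{g_0}(t)\}$, which is the covariance of $\Var(\varepsilon^2)^{1/2}B(H_{g_0}(\cdot))$, and change variables using continuity of $H_{g_0}$. The only difference is that you derive this covariance by conditioning on the predictions, whereas the paper simply substitutes $T_{g_0}=\Var(\varepsilon^2)H_{g_0}$ into the explicit $K^0$ of \Cref{app:null}. Your route needs one extra step to carry the covariance from $V_n$ over to $G^0$: a conditional Lindeberg CLT, or uniform integrability of $V_n(s)^2$, both of which hold because $\E[\varepsilon^4]<\infty$.
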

The proof is given in \Cref{app:null_g0}. Under \Cref{ass:structural}, $R_{g_0}=\varepsilon^2$ and $\varepsilon\indep g_0(X)$. Hence $\Var(R_{g_0} \,|\, F_{g_0})$ is constant, and the limiting process has the covariance of a standard Brownian bridge. The distribution of $\int_0^1 B(u)^2\,\mathrm{d}u$ is classical, with known quantiles \citep{anderson1952asymptotic}, for instance $c_{0.05} = 0.461$.
Thus, for a fixed candidate $g$, we reject $H_0:g=g_0$ when $\hat T_n(g)>0.461$; under $H_0$, the rejection probability converges to $5\%$.

The same statistic can be evaluated on a held-out sample at a prediction function fitted independently. Conditional on the realised fit, the test asks whether that particular function lies in $\cG_I$.  If it lies outside $\cG_I$, consistency implies that the rejection probability tends to one as the held-out sample grows. The present theory does not determine how often a fitted function lies in $\cG_I$, nor does it calibrate the test when the training and held-out sample sizes grow jointly. We therefore interpret rejection as evidence of a departure from invariance detectable at the given held-out sample size. 

\section{Using multiple observed environments}\label{sec:multienv}
The construction so far uses a single observational distribution, where the orthogonality condition of \Cref{ass:ortho} supplies identification. Multiple observed environments, when available, offer a second route. The single-environment check asks that the conditional MSE be constant along the prediction; we now impose it separately within each environment. A prediction function must then have constant conditional MSE along its prediction in every environment, though the level may differ. The identifying content now comes from heterogeneity across environments rather than from orthogonality. The resulting penalty does not replace the single-distribution construction; it is its environment-wise analogue.

Consider $K$ environments $\{P_e\}_{e=1}^K$ with sample sizes $n_e$ and weights
$w_e \coloneqq n_e/n$, where $n = \sum_{e=1}^K n_e$. We write
$\{(X_i^e,Y_i^e)\}_{i=1}^{n_e}$ for the observations from environment $e$. Within each environment $e$, the observations are i.i.d.\ from $P_e$.
We assume that the structural regression function $g_0$ stays the same across all
environments, while the distribution of the prediction error may vary:
\begin{assumption}[Multi-environment structural model]\label{ass:structural_multi}
There exists $g_0$ such that, for every $e \in \{1,\dots,K\}$, $Y^e \gets g_0(X^e) + \varepsilon_e$ with $\varepsilon_e \indep g_0(X^e)$, $\E[\varepsilon_e] = 0$, and $\Var(\varepsilon_e) > 0$. We also require that every $g \in \cG$, $g_0$ included, satisfies \Cref{def:Gclass} under every $P_e$, so that $g(X^e) \in L^2(P_{X,e})$, $\bar\sigma_{g,e}^2 > 0$, and the distribution function $H_{g,e}$ of $g(X^e)$ under $P_e$ is continuous.
\end{assumption}

\Cref{ass:structural_multi} extends \Cref{ass:structural} to each environment. Independence is required only between $\varepsilon_e$ and $g_0(X^e)$, not between $\varepsilon_e$ and the full vector $X^e$. The error distribution and its dependence on other coordinates of $X^e$ may therefore vary with $e$. Structural independence places $g_0$ in every class $\cG_e$ defined below.

For each environment $e$, define the conditional MSE $\sigma_{g,e}^2(t)\coloneqq\E_{P_e}[R_g\,|\, F_g=t]$.
\begin{definition}[Environment-specific invariant class]
     The environment-specific invariant class is
     \begin{align*}
         \cG_e \coloneqq \big\{g \in \cG : \sigma_{g,e}^2(t) \equiv c_{g,e}\ \text{for }P_{g,e}\text{-almost every }t,\ \text{for some constant }c_{g,e}\big\},
     \end{align*}
      where $P_{g,e}$ is the distribution of $g(X^e)$ under $P_e$. We write $\cG_\cap\coloneqq\bigcap_{e=1}^K\cG_e$.
\end{definition}

Under \Cref{ass:structural_multi}, the causal prediction function belongs to every
environment-specific invariant class. The next lemma records this fact.

\begin{lemma}[Causal prediction function is environment-wise invariant]\label{lem:envcvm_causal_in}
Under \Cref{ass:structural_multi}, $g_0 \in \cG_e$ for every $e$, and hence $g_0 \in \cG_\cap$.
\end{lemma}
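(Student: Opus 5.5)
The plan is to repeat, environment by environment, the single-distribution argument given after \Cref{def:gi}. Fix $e\in\{1,\dots,K\}$ and work under $P_e$. By \Cref{ass:structural_multi}, the prediction error of $g_0$ under $P_e$ is
\begin{align*}
Y^e-g_0(X^e)=\varepsilon_e ,
\end{align*}
so $R_{g_0}=\varepsilon_e^2$ and $F_{g_0}=g_0(X^e)$. We also have $g_0\in\cG$, and it satisfies \Cref{def:Gclass} under $P_e$. This is part of the assumption, so membership in the ambient class needs no separate argument.

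The key step is to compute the conditional MSE $\sigma^2_{g_0,e}(t)=\E_{P_e}[\varepsilon_e^2\,|\,g_0(X^e)=t]$. Since $\varepsilon_e\indep g_0(X^e)$, the function $\varepsilon_e^2$ is also independent of $g_0(X^e)$. Moreover $\E[\varepsilon_e^2]=\Var(\varepsilon_e)<\infty$ because $\E[\varepsilon_e]=0$. Independence then gives
\begin{align*}
\sigma^2_{g_0,e}(t)=\E[\varepsilon_e^2]=\Var(\varepsilon_e)
\end{align*}
for $P_{g_0,e}$-almost every $t$. Take $c_{g_0,e}=\Var(\varepsilon_e)$ as the constant; it is positive, although positivity is not needed. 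Hence $g_0\in\cG_e$.

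The one technical point is that conditional expectations are defined only up to null sets. To handle this, I would verify directly the defining property: for every bounded measurable $\phi$,
\begin{align*}
\E[\varepsilon_e^2\,\phi(g_0(X^e))]=\Var(\varepsilon_e)\,\E[\phi(g_0(X^e))].
\end{align*}
This identity follows from independence. It identifies the constant function as a version of the conditional expectation, and uniqueness then holds $P_{g_0,e}$-almost everywhere.

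Since $e$ was arbitrary, $g_0\in\cG_e$ for every $e$, and therefore $g_0\in\bigcap_{e=1}^K\cG_e=\cG_\cap$. I expect no real obstacle here; the only care needed is the almost-everywhere qualification just described. As a consistency check, \Cref{prop:completeness} applied under $P_e$ would give that the environment-wise CvM penalty vanishes at $g_0$.
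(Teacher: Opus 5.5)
Your proposal is correct and matches the paper's proof. Fix $e$, note that the prediction error of $g_0$ under $P_e$ is $\varepsilon_e$, and use $\varepsilon_e \indep g_0(X^e)$ to get $\E_{P_e}[\varepsilon_e^2 \mid g_0(X^e)] = \E_{P_e}[\varepsilon_e^2]$, which is constant; then intersect over $e$. Your explicit check of the defining property of conditional expectation just makes the almost-everywhere qualification precise and does not change the argument.
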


\begin{figure}[!tb]
\centering
\begin{tikzpicture}[
    >={Stealth[length=2mm]},
    every node/.style={font=\small},
    var/.style={circle, draw, minimum size=8mm, inner sep=1pt, fill=gray!10},
    hid/.style={circle, draw, dashed, minimum size=8mm, inner sep=1pt},
    node distance=1.6cm and 1.4cm
]
\begin{scope}[local bounding box=single]
  \node[hid] (H) at (0, 1.2) {$H$};
  \node[var] (X) at (-1.2, 0) {$X$};
  \node[var] (Y) at (1.2, 0) {$Y$};
  \draw[->] (H) -- (X);
  \draw[->] (H) -- (Y);
  \draw[->] (X) -- (Y);
  \node[font=\bfseries\small] at (0, -0.7) {Single environment};
\end{scope}

\begin{scope}[shift={(6cm,0)}, local bounding box=multi]
  \node[hid] (He) at (0, 1.2) {$H_e$};
  \node[var] (Xe) at (-1.2, 0) {$X^e$};
  \node[var] (Ye) at (1.2, 0) {$Y^e$};
  \draw[->] (He) -- node[font=\scriptsize, left=2pt, midway] {varies with $e$} (Xe);
  \draw[->] (He) -- node[font=\scriptsize, right=2pt, midway] {varies with $e$} (Ye);
  \draw[->, thick] (Xe) -- node[font=\scriptsize, below=1pt, midway] {invariant} (Ye);
  \node[font=\bfseries\small] at (0, -0.7) {Environment $e$};
\end{scope}
\end{tikzpicture}
\caption{Hidden-confounder schematic. Left: one environment with an unobserved confounder $H$. Right: a representative environment $e$ in the multi-environment setting. Across environments, the structural relation from $X^e$ to $Y^e$ is unchanged, while the hidden mechanism and its effects on $X^e$ and $Y^e$ may vary.}
\label{fig:confounder_dag}
\end{figure}

\subsection{Environment-weighted penalty and estimator}\label{sec:multienv:envcvm}

The environment-weighted CvM penalty is obtained by computing the single-environment penalty within each $P_e$ and then averaging them across environments. Within
environment $e$, let $\hat{\cT}_n^e(g)$ denote the sample CvM penalty
\eqref{eq:sample_cvm} computed on the $n_e$ observations from $P_e$, and let
$\cT^e(g)$ be its population analogue obtained by applying \eqref{eq:cvm} under
$P_e$. We then define
\begin{align*}
\cT^{\rm ew}(g) \coloneqq \sum_{e=1}^K w_e\,\cT^e(g),
\qquad
\hat{\cT}_n^{\rm ew}(g) \coloneqq \sum_{e=1}^K w_e\,\hat{\cT}_n^e(g).
\end{align*}
By \Cref{prop:completeness} applied within each environment,
\begin{align*}
\cG_e = \{ g \in \cG : \cT^e(g) = 0 \}.
\end{align*}
Since each $\cT^e$ is nonnegative, the zero set of $\cT^{\rm ew}$ is exactly $\cG_\cap = \bigcap_{e=1}^K \cG_e$.

Thus $\cT^{\rm ew}(g)=0$ means that the conditional MSE of $g$ is flat along its
prediction in every environment, although the constant level may differ across
environments. Constant conditional MSE in every environment is not the same as constant conditional MSE under the pooled mixture $P_{\mathrm{mix}} \coloneqq \sum_{e} w_e P_e$. A prediction function may satisfy either without the other. The intersection class $\cG_\cap$ and the invariant class $\cG_I$ under $P_{\mathrm{mix}}$ are therefore not nested in general.

\Cref{lem:envcvm_causal_in} shows that $g_0$ belongs to every $\cG_e$. To exclude
other prediction functions, we assume that any non-causal $g$ violates the CvM criterion in at least one environment.

\begin{assumption}[Environment heterogeneity]\label{ass:heterogeneity}
For every $g \in \cG \setminus \{g_0\}$, there exists an environment $e$ with $\cT^e(g) > 0$.
\end{assumption}

A richer class $\cG$ can contain more non-causal prediction functions, and the assumption asks that each such function have non-constant conditional MSE along its prediction in at least one environment. Therefore, the larger the class $\cG$ is, the more heterogeneity across the environments identification requires.

\Cref{ass:heterogeneity} is a separation condition. It does not follow merely because the environment distributions differ. It requires those differences to distinguish $g_0$ from every other function in $\cG$. The condition becomes explicit in the following linear Gaussian example.

\begin{example}[Linear Gaussian case]\label{ex:heterogeneity_linear}
Let $g_\beta(x)=x^\top\beta$, let $g_0=g_{\beta_0}$, and suppose $(X^e,\varepsilon_e)$ is jointly centred Gaussian in every environment, with $\Sigma_e=\Cov_{P_e}(X^e)$ and $\Var_{P_e}(X^{e\top}\beta)>0$. Then $g_\beta\in\cG_e$ if and only if
\begin{align*}
\Cov_{P_e}\big(\varepsilon_e,X^{e\top}\beta\big)
+\beta^\top\Sigma_e(\beta_0-\beta)=0.
\end{align*}
If $\varepsilon_e\indep X^e$, this reduces to $\beta^\top\Sigma_e(\beta_0-\beta)=0$. The separation assumption therefore holds exactly when the common solutions of these restrictions, within the admissible parameter class, reduce to $\beta_0$. The derivation is in \Cref{app:heterogeneity_linear}.
\end{example}
The example makes the role of heterogeneity explicit. Each environment contributes one restriction on $\beta$, and \Cref{ass:heterogeneity} requires their only common solution to be $\beta_0$. The same separation principle gives the general identification result.
\begin{theorem}[Identification by environment-weighted CvM]\label{thm:envcvm_identification}
Under \Cref{ass:structural_multi} and \Cref{ass:heterogeneity}, $\cG_\cap = \{g_0\}$. Equivalently, $g_0$ is the unique zero of $\cT^{\rm ew}$ over $\cG$.
\end{theorem}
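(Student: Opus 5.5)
The argument is a double inclusion, using \Cref{lem:envcvm_causal_in} for one direction and \Cref{ass:heterogeneity} for the other. After that, I rewrite the conclusion as a statement about the zero set of $\cT^{\rm ew}$, via \Cref{prop:completeness} applied environment by environment.

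First, $g_0\in\cG_\cap$. By \Cref{ass:structural_multi}, in each environment the prediction error of $g_0$ is $\varepsilon_e$, and $\varepsilon_e$ is independent of $F_{g_0}=g_0(X^e)$. Hence $\sigma_{g_0,e}^2(t)=\Var(\varepsilon_e)$ for $P_{g_0,e}$-almost every $t$, so $g_0\in\cG_e$ for every $e$. This is exactly \Cref{lem:envcvm_causal_in}, which I would simply cite. Membership in $\cG$ under every $P_e$ is part of \Cref{ass:structural_multi}.

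Second, $\cG_\cap\subseteq\{g_0\}$. Take $g\in\cG_\cap$ and suppose, for contradiction, that $g\neq g_0$. By \Cref{ass:heterogeneity} there is an environment $e$ with $\cT^e(g)>0$. \Cref{prop:completeness} applies within $P_e$, because \Cref{ass:structural_multi} guarantees that $g$ satisfies \Cref{def:Gclass} under $P_e$. It gives $\cT^e(g)=0$ if and only if $g\in\cG_e$. So $g\notin\cG_e$, which contradicts $g\in\cG_\cap$.

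Finally, the equivalent formulation. The weights satisfy $w_e=n_e/n>0$ and each $\cT^e\ge0$. Therefore $\cT^{\rm ew}(g)=0$ if and only if $\cT^e(g)=0$ for all $e$, which by \Cref{prop:completeness} holds if and only if $g\in\cG_\cap$. The zero set of $\cT^{\rm ew}$ is thus $\cG_\cap=\{g_0\}$.

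The only delicate point is what ``$g\neq g_0$'' means. Elements of $\cG$ are functions, but all the quantities involved depend on $g$ only through its $P_{X,e}$-almost-sure class. I would read $\{g_0\}$ and the condition $g\in\cG\setminus\{g_0\}$ in \Cref{ass:heterogeneity} with the same convention, whether pointwise or up to null sets in every environment. The argument above is then purely set-theoretic and goes through unchanged under either reading.
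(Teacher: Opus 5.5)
Your proof is correct and follows the paper's argument essentially step for step. It cites \Cref{lem:envcvm_causal_in} for $g_0\in\cG_\cap$, derives a contradiction with \Cref{ass:heterogeneity} for the reverse inclusion, and uses $w_e>0$ together with $\cT^e\ge0$ for the zero-set reformulation. You also invoke \Cref{prop:completeness} explicitly within each $P_e$ to pass between $\cT^e(g)=0$ and $g\in\cG_e$, which the paper uses only implicitly, and your remark on reading $g\neq g_0$ consistently is a harmless clarification.
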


The proofs are given in \Cref{app:envcvm_causal_in,app:envcvm_identification}. \Cref{ass:heterogeneity} plays the role here that the orthogonality condition of \Cref{ass:ortho} plays for a single distribution. It becomes more plausible as the environments become more heterogeneous, since a non-causal prediction function must retain constant conditional MSE in every environment to remain in $\cG_\cap$; when the environments are identical, $\cG_\cap$ reduces to a single $\cG_e$, and the assumption fails whenever that class contains more than $g_0$.

With $\cT^{\rm ew}(g)$ in hand, we define the multi-environment analogue of the
single-environment objective as
\begin{align*}
\cL_{\rm eCvM}(g)
\coloneqq
\sum_{e=1}^K w_e \E_{P_e}\big[(Y-g(X))^2\big] + \lambda \cT^{\rm ew}(g),
\end{align*}
at the population level, and the empirical version
\begin{align*}
\cL_{n,\rm eCvM}(g)
\coloneqq
\cL_n^{\rm ew}(g) + \lambda_n \hat{\cT}_n^{\rm ew}(g),
\end{align*}
where, writing $P_{n_e}^e$ for the empirical measure of the $n_e$ observations from environment $e$,
\begin{align*}
\cL^{\rm ew}(g) \coloneqq \sum_{e=1}^K w_e\, \E_{P_e}\big[(Y-g(X))^2\big],
\qquad
\cL_n^{\rm ew}(g) \coloneqq \sum_{e=1}^K w_e\, P_{n_e}^e\big[(Y-g(X))^2\big]
\end{align*}
are the population and empirical environment-weighted risks. At the sampling proportions $w_e = n_e/n$ the latter is the pooled average $\cL_n^{\rm ew}(g) = n^{-1}\sum_{e=1}^K \sum_{i=1}^{n_e} \big(Y_i^e-g(X_i^e)\big)^2$.
The environment-weighted estimator is the minimiser
\begin{align}\label{eq:ew_estimator}
\hat{g}_{n,\lambda_n}^{\rm ew} \coloneqq \argmin_{g \in \cG}\, \cL_{n,\rm eCvM}(g).
\end{align}

We now establish the analogues of the single-environment guarantees of \Cref{sec:asymptotics}. Because $\cT^{\rm ew}$ averages the single-environment penalties, the analysis applies within each $P_e$, and the environment-wise bounds combine by averaging.

\begin{proposition}[Uniform convergence of $\hat{\cT}_n^{\rm ew}$]\label{prop:envcvm_uniform}
Under \Cref{ass:regularity,ass:reg_rate} applied within each environment,
\begin{equation*}
\sup_{g \in \cG} \big| \hat{\cT}_n^{\rm ew}(g) - \cT^{\rm ew}(g) \big| = O_P\Big(\sum_{e=1}^K w_e\,n_e^{-1/2}\Big),
\end{equation*}
for fixed $K$, as $n_e\to\infty$ in every environment $e$.
\end{proposition}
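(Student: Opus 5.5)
The plan is to reduce the statement to \Cref{thm:uniform_t}(b), applied environment by environment, and then combine the $K$ bounds with the triangle inequality. Since $\hat{\cT}_n^{\rm ew}-\cT^{\rm ew}=\sum_e w_e(\hat{\cT}_n^e-\cT^e)$ and the supremum of a sum is at most the sum of the suprema,
\begin{align*}
\sup_{g\in\cG}\big|\hat{\cT}_n^{\rm ew}(g)-\cT^{\rm ew}(g)\big|
\le \sum_{e=1}^K w_e\,\sup_{g\in\cG}\big|\hat{\cT}_n^e(g)-\cT^e(g)\big|.
\end{align*}
So it suffices to control each environment-specific supremum at rate $n_e^{-1/2}$ and sum.

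Fix $e$. The observations $\{(X_i^e,Y_i^e)\}_{i=1}^{n_e}$ are i.i.d.\ from $P_e$, and $\hat{\cT}_n^e$ is exactly the statistic \eqref{eq:sample_cvm} computed from this sample. Likewise, $\cT^e$ is \eqref{eq:cvm} under $P_e$. \Cref{ass:structural_multi} guarantees that every $g\in\cG$ satisfies \Cref{def:Gclass} under $P_e$. By hypothesis, \Cref{ass:regularity,ass:reg_rate} hold under $P_e$ with constants $V$ and $V_2\le cV$. \Cref{thm:uniform_t}(b), with $P$ replaced by $P_e$ and $n$ by $n_e$, then gives
\begin{align*}
\sup_{g\in\cG}\big|\hat{\cT}_n^e(g)-\cT^e(g)\big| = O_P\big(\sqrt{V/n_e}\big)
\end{align*}
as $n_e\to\infty$. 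Since $V$ is fixed, this is $O_P(n_e^{-1/2})$. I would make this explicit by writing $Z_e=\sqrt{n_e}\sup_g|\hat{\cT}_n^e-\cT^e|$ and noting that each $Z_e$ is tight.

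To combine, set $a_n=\sum_e w_e n_e^{-1/2}$. Each summand satisfies $w_e n_e^{-1/2} Z_e\le a_n \max_{e'} Z_{e'}$, so the weighted sum is at most $a_n\sum_e Z_e$. A finite sum of tight sequences is tight, since $P(\sum_e Z_e>M)\le \sum_e P(Z_e>M/K)$. Because $K$ is fixed, the bound $O_P(a_n)$ follows. This step uses neither independence across environments nor any relation between the $n_e$. The requirement that all $n_e\to\infty$ is needed only so that each single-environment bound applies.

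The only delicate point is checking that \Cref{thm:uniform_t} really transfers to $P_e$ unchanged. Its proof uses only the i.i.d.\ structure, continuity of $H_g$, and the regularity conditions, all of which are assumed within each environment. One must also note that the weights $w_e=n_e/n$ are random only through deterministic sample sizes, so they are constants and cause no issue in the $O_P$ bounds.
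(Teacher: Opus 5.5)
Your proposal is correct and follows the paper's proof: bound the weighted supremum by $\sum_e w_e \sup_{g}|\hat{\cT}_n^e(g)-\cT^e(g)|$, apply \Cref{thm:uniform_t}(b) within each environment to get $O_P(\sqrt{V/n_e})=O_P(n_e^{-1/2})$, and sum over the fixed number $K$ of environments. Your explicit tightness argument for the combination step, using $w_e n_e^{-1/2}\le a_n$ and a finite sum of tight sequences, fills in the detail the paper leaves implicit.
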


Uniform convergence of $\hat{\cT}_n^{\rm ew}$ yields consistency by the argument of \Cref{thm:consistency_g}(a). Write $P_{X,e}$ for the covariate marginal of $P_e$ and $P_X^{\mathrm{mix}}=\sum_e w_eP_{X,e}$ for the covariate marginal of $P_{\mathrm{mix}}$. We hold the proportions $w_e$ fixed, so the target $P_X^{\mathrm{mix}}$ does not vary with $n$.

\begin{proposition}[Consistency of the environment-weighted estimator]\label{prop:envcvm_consistency}

Suppose the proportions $w_e=n_e/n$ are held fixed as $n$ grows and $\cG$ is closed in $L^2(P_X^{\mathrm{mix}})$. Under \Cref{ass:structural_multi,ass:heterogeneity}, and \Cref{ass:regularity} applied within each environment, if $\lambda_n\to\infty$ and $\lambda_n=o(\min_e n_e)$, then the estimator $\hat g_{n,\lambda_n}^{\rm ew}$ in \eqref{eq:ew_estimator} converges to $g_0$ in $L^2(P_X^{\mathrm{mix}})$, and hence in $L^2(P_{X,e})$ for every environment $e$.
\end{proposition}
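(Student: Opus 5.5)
The proof follows the argument of \Cref{thm:consistency_g}(a), with the single-distribution quantities replaced by their environment-weighted counterparts. The orthogonality step is replaced by \Cref{thm:envcvm_identification}. Write $\hat g=\hat g_{n,\lambda_n}^{\rm ew}$.

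\textbf{Step 1: the penalty at $\hat g$ vanishes.} Compare $\hat g$ with $g_0$ through the basic inequality $\cL_n^{\rm ew}(\hat g)+\lambda_n\hat\cT_n^{\rm ew}(\hat g)\le \cL_n^{\rm ew}(g_0)+\lambda_n\hat\cT_n^{\rm ew}(g_0)$.
\begin{itemize}
\item By \Cref{lem:envcvm_causal_in}, $g_0\in\cG_e$ for every $e$.
\item Applying \Cref{cor:null_general} within each environment gives $\hat\cT_n^e(g_0)=O_P(1/n_e)$. Hence $\lambda_n\hat\cT_n^{\rm ew}(g_0)=O_P(\lambda_n/\min_e n_e)=o_P(1)$, because $\lambda_n=o(\min_e n_e)$.
\item The empirical risks are uniformly bounded in probability. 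Indeed, \Cref{thm:uniform_t}(a) at $\lambda=0$, applied in each environment and averaged with fixed weights $w_e$, gives $\sup_g|\cL_n^{\rm ew}(g)-\cL^{\rm ew}(g)|\to0$ in probability. Moreover $\cL^{\rm ew}\ge0$ and $\cL^{\rm ew}(g_0)<\infty$.
\item Rearranging the basic inequality therefore gives $\hat\cT_n^{\rm ew}(\hat g)=O_P(1/\lambda_n)+o_P(1)$, which is $o_P(1)$.
\item The uniform convergence of the penalty, \Cref{thm:uniform_t}(a) applied environment-wise, then yields $\cT^{\rm ew}(\hat g)\to0$ in probability.
\end{itemize}
The same inequality also gives $\cL^{\rm ew}(\hat g)\le\cL^{\rm ew}(g_0)+o_P(1)$, although this is not needed.

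\textbf{Step 2: a separation statement.} We need that for every $\eta>0$,
\begin{align*}
\delta(\eta)\coloneqq\inf\{\cT^{\rm ew}(g):g\in\cG,\ \|g-g_0\|_{P_X^{\rm mix}}\ge\eta\}>0.
\end{align*}
Given this, Step 1 implies $P(\|\hat g-g_0\|_{P_X^{\rm mix}}\ge\eta)\le P(\cT^{\rm ew}(\hat g)\ge\delta(\eta))\to0$, which is the claim. Convergence in $L^2(P_{X,e})$ then follows from $w_e\|f\|_{P_{X,e}}^2\le\|f\|_{P_X^{\rm mix}}^2$ with $w_e>0$ fixed.

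\textbf{Step 3: proving separation, the main obstacle.} Suppose the separation fails. Then there exist $g_k$ with $\|g_k-g_0\|_{P_X^{\rm mix}}\ge\eta$ and $\cT^{\rm ew}(g_k)\to0$. We need an $L^2(P_X^{\rm mix})$-convergent subsequence. The limit $g^\ast$ lies in $\cG$ by closedness.
\begin{itemize}
\item By the continuity of $\cT^e$ under $L^2(P_{X,e})$ convergence (\Cref{lem:continuity}, which applies since $L^2(P_X^{\rm mix})$ convergence implies $L^2(P_{X,e})$ convergence), we get $\cT^{\rm ew}(g^\ast)=0$.
\item \Cref{thm:envcvm_identification} then forces $g^\ast=g_0$, contradicting $\|g^\ast-g_0\|\ge\eta$.
\end{itemize}
The difficulty is compactness: closedness alone does not give it. I would first try to obtain relative compactness in $L^2(P_X^{\rm mix})$ from the finite pseudo-dimension and the $L^4$ envelope in \Cref{ass:regularity}. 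These give a totally bounded class in $L^2$, with covering numbers controlled uniformly over probability measures. Together with closedness, this makes the class compact.

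If that route fails, the fallback is to work directly with the random sequence $\hat g$, as in the proof of \Cref{thm:consistency_g}(a). Along any subsequence, pass to a further subsequence that converges almost surely in $L^2(P_X^{\rm mix})$, using the same total-boundedness argument. Then apply continuity and \Cref{thm:envcvm_identification} to each limit point.

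In either route, the separation/compactness step is where the argument depends most on the details of \Cref{ass:regularity}. It is also the only place where $L^2(P_X^{\rm mix})$-closedness of $\cG$ enters.
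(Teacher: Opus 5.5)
Your proposal is correct and follows the paper's proof essentially step for step. It uses the basic inequality with $\lambda_n\hat\cT_n^{\rm ew}(g_0)=o_P(1)$ from \Cref{cor:null_general} within each environment, then environment-wise uniform laws to get $\cT^{\rm ew}(\hat g)\to 0$ in probability, and finally a positive minimum of $\cT^{\rm ew}$ on $\{g\in\cG:\|g-g_0\|_{L^2(P_X^{\mathrm{mix}})}\ge\varepsilon\}$ from compactness, per-environment continuity (\Cref{lem:continuity}) and \Cref{thm:envcvm_identification}. Your first route to compactness is exactly the paper's, namely \Cref{lem:compact} applied under $P_X^{\mathrm{mix}}$, since pseudo-dimension does not depend on the measure and the envelope is square-integrable under the mixture; the fallback is therefore not needed.
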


The proofs are in \Cref{app:envcvm_uniform,app:envcvm_consistency}.
The mixture closedness condition is stated separately because closedness in every environment does not imply closedness under the mixture norm.
An explicit rate for the environment-weighted estimator would use \Cref{ass:reg_rate} within each environment and additionally require an environment-weighted analogue of \Cref{ass:curvature}, which is not imposed here.

\subsection{Distributional-robustness interpretation of $\cT^{\rm ew}$}\label{sec:multienv:dro}
The environment-weighted penalty has a direct distributional-robustness interpretation, in which each environment contributes its own worst-case perturbation and the total objective averages these environment-wise robust risks.

For each environment $e$, let $\mathcal{H}_e$ denote the perturbation class defined
as in \Cref{sec:dro}, but under $P_e$, and let $Q_h^e$ be the corresponding
normalised perturbation measure. Then \Cref{thm:dro} applied within environment $e$
gives
\begin{align*}
\E_{P_e}\big[(Y-g(X))^2\big] + \lambda\,\cT^e(g)
=
\sup_{h \in \mathcal{H}_e}
\bigg\{\E_{Q_h^e}\big[(Y-g(X))^2\big]-\frac{\|h\|_{\mathcal{H}_e}^2}{4\lambda}
\bigg\}.
\end{align*}
Averaging over environments yields
\begin{align}\label{eq:envcvm_dro_combined}
\cL_{\rm eCvM}(g)
=
\sum_{e=1}^K w_e
\sup_{h_e \in \mathcal{H}_e}
\bigg\{
\E_{Q_{h_e}^e}\big[(Y-g(X))^2\big]
-
\frac{\|h_e\|_{\mathcal{H}_e}^2}{4\lambda}
\bigg\}.
\end{align}
Thus $\cL_{\rm eCvM}(g)$ is a weighted sum of environment-wise worst-case risks.
The adversary chooses a separate perturbation in each environment, and the cost of
perturbation is measured by the corresponding RKHS norm.

\subsection{Combining DRIG and environment-specific CvM}
The DRIG objective and the eCvM penalty can be combined. Take the same environment weights $\omega_e=w_e=n_e/n$ in both terms and define
\begin{align*}
\cL_\lambda(g;\gamma)
&\coloneqq \cL_{\mathrm{DRIG}}(g;\gamma)+\lambda\cT^{\rm ew}(g),
&g_{\gamma,\lambda}
&\coloneqq\argmin_{g\in\cG}\cL_\lambda(g;\gamma).
\end{align*}
Its empirical version is
\begin{align}\label{eq:combined_estimator}
\hat g_{n,\gamma,\lambda_n}
\coloneqq
\argmin_{g\in\cG}
\big\{\hat\cL_{\mathrm{DRIG},n}(g;\gamma)
+\lambda_n\hat\cT_n^{\rm ew}(g)\big\}.
\end{align}
This definition does not require $g$ to be linear. 

The two robustness certificates require separate attention. The eCvM identity \eqref{eq:envcvm_dro_combined} holds for general $g$ and does not require a structural model. The finite-radius DRIG identity \eqref{eq:drig_dro} is proved under the linear structural model in \Cref{sec:related}. Under that model, the covariance ordering $\sum_e\omega_e(S_e-S_{e_0})\succeq0$ implies that the DRIG objective is at least the pooled risk for a linear prediction function. This implication uses that the prediction error of $g_\beta(x)=x^\top\beta$ is a linear form of the structural noise. For a nonlinear $g$, ordering the noise covariance matrices does not in general order the corresponding squared prediction errors. Thus nonlinear DRIG remains a well-defined training objective, but \eqref{eq:max_domination} would require a different robustness argument.

\begin{proposition}[Joint robustness bound]\label{prop:joint_bound}
Under the linear DRIG model of \Cref{sec:related}, for $g_\beta(x)=x^\top\beta$, $\gamma\geq1$, and $\lambda\geq0$,
\begin{align}\label{eq:max_domination}
\cL_\lambda(g_\beta;\gamma)
\geq
\max\Big\{
\cL_{\mathrm{DRIG}}(g_\beta;\gamma),
\E_{P_{\mathrm{mix}}}[R_{g_\beta}]+\lambda\cT^{\rm ew}(g_\beta)
\Big\}.
\end{align}
\end{proposition}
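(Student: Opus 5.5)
The plan is to prove the two lower bounds separately and then take the maximum. Write $\cL_\lambda(g_\beta;\gamma)=\cL_{\mathrm{DRIG}}(g_\beta;\gamma)+\lambda\cT^{\rm ew}(g_\beta)$. Since each $\cT^e$ is a nonnegative integral of a square by \eqref{eq:cvm}, and the weights satisfy $w_e\geq0$, we have $\cT^{\rm ew}(g_\beta)\geq0$. Together with $\lambda\geq0$ this gives the first bound $\cL_\lambda(g_\beta;\gamma)\geq\cL_{\mathrm{DRIG}}(g_\beta;\gamma)$ at once. This part needs neither linearity nor $\gamma\geq1$.

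For the second bound, it suffices to show $\cL_{\mathrm{DRIG}}(g_\beta;\gamma)\geq\E_{P_{\mathrm{mix}}}[R_{g_\beta}]$. Adding $\lambda\cT^{\rm ew}(g_\beta)$ to both sides then yields the claim. With $\omega_e=w_e$ summing to one, $\E_{P_{\mathrm{mix}}}[R_{g_\beta}]=\sum_e w_e\E_{P_e}[R_{g_\beta}]$. Hence \eqref{eq:drig_explicit} at $\gamma=1$ equals the pooled risk, and for general $\gamma$
\begin{align*}
\cL_{\mathrm{DRIG}}(g_\beta;\gamma)-\E_{P_{\mathrm{mix}}}[R_{g_\beta}]=(\gamma-1)\sum_e w_e\big(\E_{P_e}[R_{g_\beta}]-\E_{P_{e_0}}[R_{g_\beta}]\big).
\end{align*}
Because $\gamma\geq1$, it remains to show that the weighted excess risk is nonnegative.

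This is the only step where the linear structural model enters, and it is the main point to check. Under $W_e=\mathrm BW_e+\varepsilon_e$ with $\mathrm{Id}-\mathrm B$ invertible and $\mathrm B$ invariant, write $W_e=(\mathrm{Id}-\mathrm B)^{-1}\varepsilon_e$. The residual is $Y_e-X_e^\top\beta=a_\beta^\top W_e$, where $a_\beta$ places $1$ on the $Y$ coordinate, $-\beta$ on the $X$ coordinates, and $0$ on any hidden coordinates. Setting $b_\beta=(\mathrm{Id}-\mathrm B)^{-\top}a_\beta$, which does not depend on $e$, we get $\E_{P_e}[R_{g_\beta}]=b_\beta^\top S_e b_\beta$. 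The weighted excess risk is therefore $b_\beta^\top\big(\sum_e w_e(S_e-S_{e_0})\big)b_\beta$. This quadratic form is nonnegative by the covariance ordering $S_{e_0}\preceq S_e$ of \citet[Assumption~1]{shen2026causality}.

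Combining the two bounds gives \eqref{eq:max_domination}. I would also note in the proof that only the second bound uses linearity of $g_\beta$, consistent with the discussion preceding the proposition.
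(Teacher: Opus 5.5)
Your proposal is correct and follows the paper's proof essentially line by line. The first bound comes from $\cT^{\rm ew}\geq0$, and the second from writing the weighted excess risk as the quadratic form $b^\top\{\sum_e w_e(S_e-S_{e_0})\}b\geq0$ with $b=(\mathrm{Id}-\mathrm B)^{-\top}a$. Your identity $\cL_{\mathrm{DRIG}}(g_\beta;\gamma)-\E_{P_{\mathrm{mix}}}[R_{g_\beta}]=(\gamma-1)\times(\text{weighted excess risk})$ is an explicit form of the paper's observation that $\cL_{\mathrm{DRIG}}(g_\beta;\gamma)$ is nondecreasing in $\gamma$ and equals the pooled risk at $\gamma=1$.
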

The first term in \eqref{eq:max_domination} is the DRIG worst-case risk over $\cB_\gamma$; the second is the eCvM certificate in \eqref{eq:envcvm_dro_combined}. The two shift classes are complementary in kind. A perturbation in $\cB_\gamma$ acts on the structural equations and can alter $P(Y\,|\,X)$, whereas the reweightings underlying $\cT^{\rm ew}$ change the covariate factor alone. The combined objective therefore bounds the DRIG worst-case risk and the regularised eCvM criterion, while neither penalty alone certifies the other class. The proof is in \Cref{app:joint_bound}.

We now fix $\gamma$ and let the penalty parameter $\lambda$ grow. The penalty drives the
solution into the invariant class $\cG_\cap$, and the DRIG term then selects the
most robust prediction function within it. This selection matters most when invariance
leaves $\cG_\cap$ larger than $\{g_0\}$ and the environment interventions remain
strong enough for DRIG to distinguish the functions within it.

\begin{proposition}[Convergence to the DRIG-best invariant prediction function]\label{prop:drig_envcvm_limit}
Fix $\gamma \ge 0$ and assume $\cG_\cap \ne \emptyset$. Suppose that $\cG$ is compact in $L^2(P_X^{\mathrm{mix}})$, and that $\cL_{\mathrm{DRIG}}(\cdot; \gamma)$ and $\cT^{\rm ew}$ are continuous on $\cG$ in $L^2(P_X^{\mathrm{mix}})$. Suppose further that
\begin{align*}
g_\gamma^\ast
\coloneqq
\argmin_{g \in \cG_\cap} \cL_{\mathrm{DRIG}}(g; \gamma)
\end{align*}
is unique. Then
\begin{align*}
g_{\gamma,\lambda} \to g_\gamma^\ast
\qquad \text{in } L^2(P_X^{\mathrm{mix}}),
\qquad \text{as } \lambda \to \infty.
\end{align*}
\end{proposition}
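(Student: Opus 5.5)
The argument is a standard penalty-method (exterior penalty) argument, run by compactness. Write $\cL_\lambda(g)=\cL_{\mathrm{DRIG}}(g;\gamma)+\lambda\cT^{\rm ew}(g)$. Since $\cL_{\mathrm{DRIG}}(\cdot;\gamma)$ and $\cT^{\rm ew}$ are continuous on the compact set $\cG$, $\cL_{\mathrm{DRIG}}(\cdot;\gamma)$ is bounded on $\cG$; let $m\coloneqq\inf_{g\in\cG}\cL_{\mathrm{DRIG}}(g;\gamma)>-\infty$. Minimisers $g_{\gamma,\lambda}$ exist by compactness and continuity. The zero set $\cG_\cap=\{\cT^{\rm ew}=0\}$ is closed in $\cG$, hence compact and nonempty, so $g_\gamma^\ast$ is well defined.

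\emph{Step 1: comparison with $g_\gamma^\ast$.} Since $\cT^{\rm ew}(g_\gamma^\ast)=0$, optimality of $g_{\gamma,\lambda}$ gives
\begin{align*}
\cL_{\mathrm{DRIG}}(g_{\gamma,\lambda};\gamma)+\lambda\cT^{\rm ew}(g_{\gamma,\lambda})\le \cL_{\mathrm{DRIG}}(g_\gamma^\ast;\gamma).
\end{align*}
Two consequences follow. First, $\cL_{\mathrm{DRIG}}(g_{\gamma,\lambda};\gamma)\le \cL_{\mathrm{DRIG}}(g_\gamma^\ast;\gamma)$ for every $\lambda$. Second,
\begin{align*}
0\le\cT^{\rm ew}(g_{\gamma,\lambda})\le \{\cL_{\mathrm{DRIG}}(g_\gamma^\ast;\gamma)-m\}/\lambda\to0,
\end{align*}
where nonnegativity uses that each $\cT^e\ge0$.

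\emph{Step 2: limit points.} Argue by contradiction. Suppose there exist $\epsilon>0$ and $\lambda_k\to\infty$ with $\|g_{\gamma,\lambda_k}-g_\gamma^\ast\|_{P_X^{\mathrm{mix}}}\ge\epsilon$. By compactness, a subsequence converges in $L^2(P_X^{\mathrm{mix}})$ to some $\bar g\in\cG$. Continuity of $\cT^{\rm ew}$ gives $\cT^{\rm ew}(\bar g)=\lim\cT^{\rm ew}(g_{\gamma,\lambda_k})=0$, so $\bar g\in\cG_\cap$. Continuity of $\cL_{\mathrm{DRIG}}$ gives $\cL_{\mathrm{DRIG}}(\bar g;\gamma)\le\cL_{\mathrm{DRIG}}(g_\gamma^\ast;\gamma)$. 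Hence $\bar g$ minimises $\cL_{\mathrm{DRIG}}$ over $\cG_\cap$. By uniqueness, $\bar g=g_\gamma^\ast$ in $L^2(P_X^{\mathrm{mix}})$, which contradicts $\|\bar g-g_\gamma^\ast\|\ge\epsilon$. Therefore $g_{\gamma,\lambda}\to g_\gamma^\ast$.

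\emph{Main obstacle.} The delicate point is interpreting uniqueness and membership modulo $L^2(P_X^{\mathrm{mix}})$-null differences. Continuity is assumed with respect to this norm, so $\cT^{\rm ew}$ and $\cL_{\mathrm{DRIG}}$ are constant on equivalence classes. Uniqueness of $g_\gamma^\ast$ should therefore be read as uniqueness of the class, which is exactly what the contradiction argument needs. One further check: for $\gamma$ possibly below $1$, the DRIG objective need not be nonnegative. The argument uses only the lower bound $m$, which compactness and continuity provide.
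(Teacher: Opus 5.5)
Your proof is correct and follows essentially the same route as the paper. Comparing $g_{\gamma,\lambda}$ with an element of $\cG_\cap$ gives $\cT^{\rm ew}(g_{\gamma,\lambda})=O(1/\lambda)$ and $\cL_{\mathrm{DRIG}}(g_{\gamma,\lambda};\gamma)\le\min_{\cG_\cap}\cL_{\mathrm{DRIG}}(\cdot;\gamma)$; compactness, continuity and uniqueness then force every limit point to be $g_\gamma^\ast$. You close by contradiction, while the paper notes that $g_\gamma^\ast$ is the only cluster point.

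One small slip in your side remark. Since $\sum_e\omega_e=1$, $\cL_{\mathrm{DRIG}}(g;\gamma)=(1-\gamma)\E_{P_{e_0}}[R_g]+\gamma\E_{P_{\mathrm{mix}}}[R_g]$ is automatically nonnegative for $\gamma\in[0,1]$ and can become negative only for $\gamma>1$. As you say, the argument needs only the finite lower bound $m$, so this does not affect the proof.
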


\Cref{lem:compact,lem:continuity} give sufficient conditions for compactness of $\cG$ and continuity of $\cT^{\rm ew}$. Continuity of $\cL_{\mathrm{DRIG}}(\cdot;\gamma)$ follows from \eqref{eq:drig_explicit}. Under \Cref{ass:heterogeneity}, $\cG_\cap=\{g_0\}$ and uniqueness is automatic. Otherwise, uniqueness must be assumed. If the DRIG risk has several minimisers over $\cG_\cap$, different sequences $\lambda_k\to\infty$ may yield different limits. The proposition is useful when invariance leaves several prediction functions and the DRIG risk selects among them.

\section{Experiments}\label{sec:experiments}
We evaluate the proposed estimators on synthetic and real-data benchmarks. Each experimental section specifies whether the exact criterion \eqref{eq:sample_cvm} or its smooth approximation \eqref{eq:anneal} is used. In the linear and nonlinear designs, we trace how the single-distribution estimator \eqref{eq:finite_estimator} moves from ERM towards $g_0$ as the penalty varies. A third simulation introduces an observed anchor variable and compares CvM and eCvM with anchor regression, DRIG, and the combined estimator. The final experiment uses RPE1 Perturb-seq interventions to study how performance changes with the number of training environments.

\Cref{fig:dgps} shows the training mechanisms and representative changes to the child mechanisms. The simulations evaluate recovery of $g_0$ when descendant information is predictive under the training distribution. In the test distributions, we alter only the child mechanism, leaving the structural equation for $Y$ unchanged. Hence the risk of $g_0$ remains stable, whereas predictions that use descendant information may fail.

\begin{figure}[!tbp]
\centering
\begin{minipage}{\textwidth}
\centering

\makebox[\textwidth][c]{%
\begin{tikzpicture}[
    >={Stealth[length=2mm]},
    every node/.style={font=\small},
    var/.style={circle, draw, minimum size=7mm, inner sep=1pt},
    obs/.style={circle, draw, minimum size=7mm, inner sep=1pt, fill=gray!10},
    node distance=1.6cm and 1.4cm
]
%% ----- Linear design -----
\begin{scope}[local bounding box=linscope]
  \node[var] (Z)  {$Z$};
  \node[var, right=of Z] (Y) {$Y$};
  \node[obs, right=of Y] (X2) {$X_2$};
  \node[obs, below=0.45cm of Z] (X1) {$X_1$};
  \node[above=0.05cm of Y, font=\scriptsize] {$Y = 2Z + \varepsilon$};
  \node[above=0.05cm of X2, font=\scriptsize] {$X_2 = \frac{1}{2}Y + \eta$};
  \node[left=0.05cm of X1, font=\scriptsize] {$X_1 = Z$};
  \draw[->] (Z) -- (Y);
  \draw[->] (Y) -- (X2);
  \draw[->] (Z) -- (X1);
  \node[above=0.4cm of Y, font=\bfseries\small] {Linear design (training)};
\end{scope}

%% ----- Linear design shifted -----
\begin{scope}[shift={(8.5cm,0)}, local bounding box=linshiftscope]
  \node[var] (Z2)  {$Z$};
  \node[var, right=of Z2] (Y2) {$Y$};
  \node[obs, right=of Y2, draw=red, very thick] (X22) {$X_2$};
  \node[obs, below=0.45cm of Z2] (X12) {$X_1$};
  \node[above=0.05cm of Y2, font=\scriptsize] {$Y = 2Z + \varepsilon$};
  \node[below=0.05cm of X22, font=\scriptsize, color=red] {$X_2 = -\frac{1}{2}Y + \eta$};
  \node[left=0.05cm of X12, font=\scriptsize] {$X_1 = Z$};
  \draw[->] (Z2) -- (Y2);
  \draw[->, color=red, very thick] (Y2) -- (X22);
  \draw[->] (Z2) -- (X12);
  \node[above=0.4cm of Y2, font=\bfseries\small] {Linear design (test)};
\end{scope}
\end{tikzpicture}%
}

\vspace{0.9cm}

\makebox[\textwidth][c]{%
\begin{tikzpicture}[
    >={Stealth[length=2mm]},
    every node/.style={font=\small},
    var/.style={circle, draw, minimum size=7mm, inner sep=1pt},
    obs/.style={circle, draw, minimum size=7mm, inner sep=1pt, fill=gray!10},
    node distance=1.6cm and 1.4cm
]
%% ----- Nonlinear design -----
\begin{scope}[local bounding box=nlscope]
  \node[var] (Zp)  {$Z_{pa}$};
  \node[var, right=of Zp] (Y3) {$Y$};
  \node[var, right=of Y3] (Zc) {$Z_{ch}$};
  \node[obs, below=0.45cm of Y3] (Xn) {$X$};
  \node[above=0.05cm of Y3, font=\scriptsize] {$Y = \operatorname{log}\big(1+\operatorname{exp}(Z_{pa})\big) + \varepsilon_Y$};
  \node[below=0.05cm of Zc, font=\scriptsize] {$Z_{ch} = Y + \varepsilon_{ch}$};
  \node[below=0.05cm of Xn, font=\scriptsize] {$X = (Z_{pa}+Z_{ch},\, Z_{pa}-Z_{ch})$};
  \draw[->] (Zp) -- (Y3);
  \draw[->] (Y3) -- (Zc);
  \draw[->] (Zp) -- (Xn);
  \draw[->] (Zc) -- (Xn);
  \node[above=0.4cm of Y3, font=\bfseries\small] {Nonlinear design (training)};
\end{scope}

%% ----- Nonlinear design shifted -----
\begin{scope}[shift={(8.5cm,0)}, local bounding box=nlshiftscope]
  \node[var] (Zp2)  {$Z_{pa}$};
  \node[var, right=of Zp2] (Y4) {$Y$};
  \node[var, right=of Y4, draw=red, very thick] (Zc2) {$Z_{ch}$};
  \node[obs, below=0.45cm of Y4] (Xn2) {$X$};
  \node[above=0.05cm of Y4, font=\scriptsize] {$Y = \operatorname{log}\big(1+\operatorname{exp}(Z_{pa})\big) + \varepsilon_Y$};
  \node[below=0.05cm of Zc2, font=\scriptsize, color=red] {$Z_{ch} = -Y + \varepsilon_{ch}$};
  \node[below=0.05cm of Xn2, font=\scriptsize] {$X = (Z_{pa}+Z_{ch},\, Z_{pa}-Z_{ch})$};
  \draw[->] (Zp2) -- (Y4);
  \draw[->, color=red, very thick] (Y4) -- (Zc2);
  \draw[->] (Zp2) -- (Xn2);
  \draw[->] (Zc2) -- (Xn2);
  \node[above=0.4cm of Y4, font=\bfseries\small] {Nonlinear design (test)};
\end{scope}
\end{tikzpicture}%
}

\end{minipage}
\caption{Training mechanisms (left) and representative shifts in the child mechanisms (right) for the linear and nonlinear designs. Red marks the altered child mechanism; the structural equation for $Y$ is unchanged.}
\label{fig:dgps}
\end{figure}

\subsection{\texorpdfstring{Linear design}{Linear design}}\label{sec:linear}
The linear example is the simplest setting in which the method can distinguish the causal parent from a spurious child variable. For the training data, we generate
\begin{align*}
Z \sim \mathcal{N}(0,1), \qquad
Y = 2Z + \varepsilon_Y, \qquad
X_1 = Z, \qquad
X_2 = 0.5\,Y + \varepsilon_{ch},
\end{align*}
where
\begin{align*}
\varepsilon_Y \sim 0.5\,\mathcal{N}(-2,\,0.1^2) + 0.5\,\mathcal{N}(2,\,0.1^2),
\qquad
\varepsilon_{ch} \sim 0.5\,\mathcal{N}(-1.5,\,0.1^2) + 0.5\,\mathcal{N}(1.5,\,0.1^2).
\end{align*}
The non-Gaussian mixtures are deliberate. Under joint Gaussianity, the linear least-squares residual would be independent of the fitted value, so the CvM penalty could not separate that fit from $g_0$ when they differ; see \Cref{rem:gaussian_linear}.
The causal target is $g_0(X)=2X_1$. To probe robustness, we shift only the
child mechanism in the test set:
\begin{enumerate}
    \item Sign flip: $X_2 \gets -\frac{1}{2}Y + \varepsilon_{ch}$;
    \item Tripled child noise: $X_2 \gets Y + 3\,\varepsilon_{ch}$;
    \item Strong sign flip: $X_2 \gets -Y + \varepsilon_{ch}$.
\end{enumerate}
The second shift also doubles the coefficient of $Y$ in the child mechanism, from $1/2$ to $1$.
The training mechanism and one representative test mechanism are shown in \Cref{fig:dgps}.

We use the linear function class
\begin{align*}
g_\beta(X) = \beta_1 X_1 + \beta_2 X_2.
\end{align*}
Since the parameter space is two-dimensional, we minimise \eqref{eq:finite_estimator} by exhaustive search over a $101 \times 101$ grid on $[-1,4]^2$, using the exact penalty in \eqref{eq:sample_cvm}. We omit $\beta=(0,0)$, the only grid point with a constant prediction and hence outside $\cG$.

\begin{figure}[!tb]
\centering
\includegraphics[width=\textwidth]{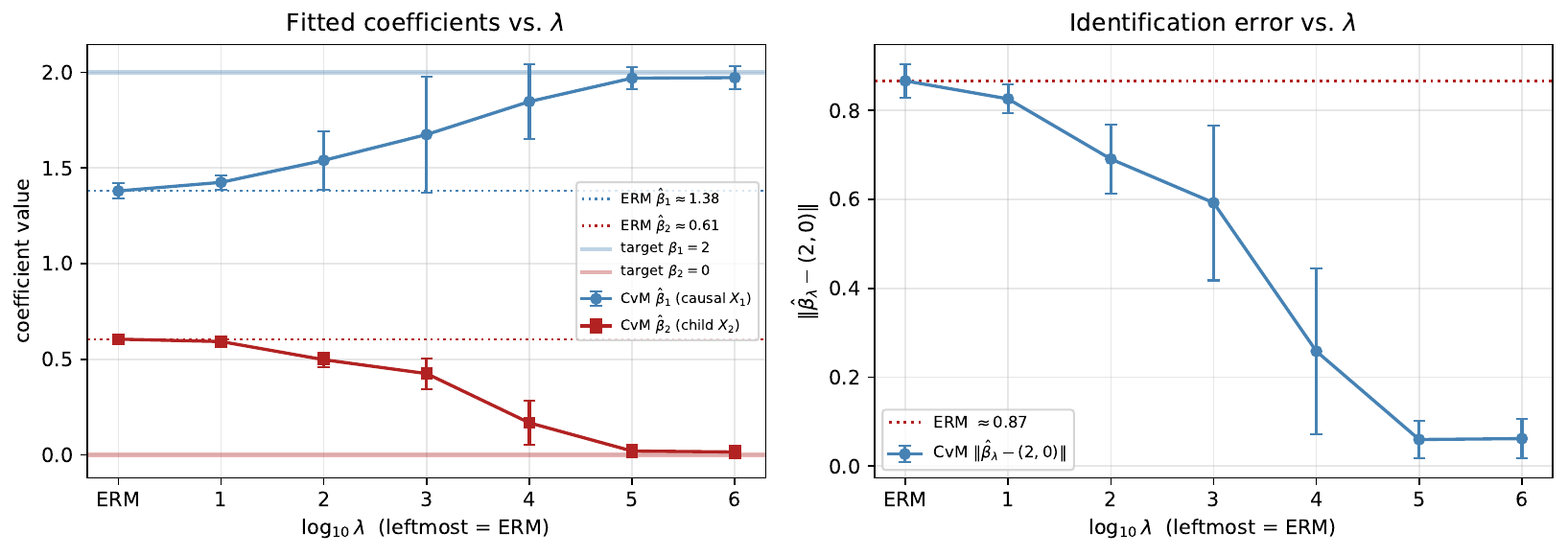}
\caption{Linear design. Fitted coefficients (left) and their distance from $(2,0)$ (right) as $\lambda$ varies. The CvM estimate moves from ERM towards $g_0(X)=2X_1$.}
\label{fig:exp1-weights}
\end{figure}

\begin{table}[!tb]
\centering
\begin{tabular}{@{}lcccc@{}}
\toprule
 Estimator & Non-shift & Sign flip & Tripled child noise & Strong sign flip\\
\midrule
ERM ($\lambda=0$) & $\mathbf{2.86}$ & $9.17$ & $8.93$ & $13.86$\\
Proposed ($\lambda=10^6$) & $4.01$ & $\mathbf{4.03}$ & $\mathbf{4.02}$& $\mathbf{4.01}$\\
Gain over ERM & $-40.6\%$ & $+56.1\%$ & $+55.0\%$ & $+71.1\%$\\
\midrule
causal target $g_0$ & $\approx 4.01$ & $\approx 4.01$ & $\approx 4.01$ & $\approx 4.01$\\
\bottomrule
\end{tabular}
\caption{Linear design. Mean test MSE over $20$ replications under the training mechanism and three shifts in the child mechanism, with $2000$ training and $1000$ test observations.}
\label{tab:exp1-robust}
\end{table}

ERM assigns nonzero weight to $X_2$ and attains MSE $2.86$ under the training mechanism, but its MSE lies between $8.93$ and $13.86$ after the shifts. At $\lambda=10^6$, CvM nearly recovers $(2,0)$ and has MSE about $4.01$ in every setting. Thus the penalty trades fit under the training mechanism for stability. Although each test environment admits a lower-risk linear fit that uses $X_2$, $g_0$ uniquely minimises worst-case linear risk across the three shifts (\Cref{app:linear_minimax}).

\subsection{\texorpdfstring{Nonlinear design}{Nonlinear design}}\label{sec:nonlinear}
We use the nonlinear design to study recovery of a nonlinear causal prediction function within a flexible class, beyond selecting the causal variable in a linear model. We generate
\begin{align*}
Z_{pa}\sim \mathrm{Uniform}[-2,2], \qquad
\varepsilon_Y,\, \varepsilon_{ch} \overset{\text{i.i.d.}}{\sim} \mathrm{Exp}(1)-1,
\end{align*}
and set
\begin{align*}
Y = \operatorname{log}\big(1+\operatorname{exp}(Z_{pa})\big) + \varepsilon_Y,
\qquad
Z_{ch} = Y + \varepsilon_{ch},
\qquad
X = (Z_{pa}+Z_{ch},\, Z_{pa}-Z_{ch}),
\end{align*}
so both coordinates mix causal-parent and child information. The causal prediction function
is therefore
\begin{align*}
g_0(X) = \operatorname{log}\big(1+\operatorname{exp}\big((X_1+X_2)/2\big)\big).
\end{align*}
We consider the following shifts in the child mechanism:
\begin{enumerate}
    \item Sign flip: $Z_{ch} \gets -Y + \varepsilon_{ch}$;
    \item Tripled child noise: $Z_{ch} \gets Y + 3\,\varepsilon_{ch}$;
    \item Strong sign flip: $Z_{ch} \gets -2Y + \varepsilon_{ch}$.
\end{enumerate}
The nonlinear function class is implemented by a two-layer fully connected MLP. We use the smooth approximation in \eqref{eq:anneal}; further training details are given in \Cref{app:nonlinear}. \Cref{tab:nonlinear} reports results over $50$ replications, with training and test sample sizes of $2000$ each.

\begin{table}[!tb]
\centering
\setlength{\tabcolsep}{1.8pt}
\begin{tabular}{@{}lccccccc@{}}
\toprule
& \multicolumn{7}{c}{Penalty parameter $\lambda$}\\
\cmidrule(l){2-8}
& $0$ (ERM) & $10$ & $100$ & $500$ & $1{,}000$ & $5{,}000$ & $10{,}000$ \\
\midrule
$\|\hat g_\lambda - g_0\|_{P_X}$ & $0.70\,(0.01)$ & $0.60\,(0.01)$ & $0.41\,(0.02)$ & $0.41\,(0.04)$ & $0.48\,(0.04)$ & $0.67\,(0.05)$ & $0.80\,(0.04)$ \\
\midrule
no shift        & $0.50\,(0.00)$ & $0.55\,(0.01)$ & $0.84\,(0.03)$ & $1.07\,(0.05)$ & $1.17\,(0.04)$ & $1.46\,(0.06)$ & $1.67\,(0.06)$ \\
sign flip           & $3.06\,(0.06)$ & $2.31\,(0.10)$ & $2.39\,(0.10)$ & $1.95\,(0.12)$ & $1.99\,(0.16)$ & $1.84\,(0.12)$ & $2.05\,(0.22)$ \\
tripled child noise & $2.41\,(0.05)$ & $2.05\,(0.06)$ & $1.19\,(0.03)$ & $1.24\,(0.06)$ & $1.34\,(0.09)$ & $1.58\,(0.07)$ & $1.87\,(0.10)$ \\
strong sign flip    & $5.01\,(0.18)$ & $3.34\,(0.25)$ & $4.08\,(0.28)$ & $3.01\,(0.30)$ & $2.97\,(0.37)$ & $2.30\,(0.23)$ & $2.64\,(0.43)$ \\
\bottomrule
\end{tabular}
\caption{Nonlinear design. Means of the $L^2(P_X)$ error relative to $g_0$ and test MSE over $50$ replications, with standard errors in parentheses. The training and test sample sizes are both $2000$. $\lambda=0$ is ERM. The causal prediction function $g_0$ has MSE approximately $1$ under each test distribution.}
\label{tab:nonlinear}
\end{table}

All positive values of $\lambda$ considered in the experiment give smaller MSE than ERM under each of the three distribution shifts. This improvement comes at the cost of larger MSE when there is no shift in the test distribution. Moderate penalties, $\lambda=100$ or $500$, bring $\hat g_\lambda$ closest to $g_0$ in $L^2(P_X)$. As $\lambda$ increases beyond these values, the $L^2(P_X)$ error grows. However, the maximum MSE across the three shifts is smallest at $\lambda=5000$. Thus the penalty value that gives the closest estimate of $g_0$ differs from the one that gives the best performance across these test distributions.

One possible explanation for the increase in $L^2(P_X)$ error at larger $\lambda$ is that the empirical penalty can be nonzero even at $g_0$. Because of sampling variation, another function may have a smaller penalty on the training sample without satisfying population invariance. When $\lambda$ is large, even a small reduction in the empirical penalty can outweigh an increase in training MSE. The objective may therefore favour a function farther from $g_0$ because its empirical penalty is smaller. Approximation error from the neural network and numerical optimisation may also contribute to selecting a function with larger $L^2(P_X)$ error relative to $g_0$. The bound in \eqref{eq:rate_main} suggests an asymptotic order for $\lambda_n$, but not a data-driven finite-sample choice. We return to this issue in \Cref{sec:discussion}.

\subsection{Comparison with anchor regression and DRIG}\label{sec:comparison_linear}
We compare the proposed estimators with anchor regression and DRIG on a version of the linear design in \Cref{sec:linear} with an anchor variable. We add an exogenous anchor variable $A\in\{-1,0,1\}$, drawn uniformly and independently of $(Z,\varepsilon_Y,\varepsilon_{ch})$, to the child mechanism through
\begin{align*}
X_2 = \frac{1}{2}Y + \varepsilon_{ch} + \delta A,
\qquad \delta = 0.3.
\end{align*}
The target remains $g_0(X)=2X_1$, with $\beta_0=(2,0)$, and test data follow the three shifts of \Cref{sec:linear} with $A=0$. All methods are linear. The CvM-based methods use the exact criterion in \eqref{eq:sample_cvm}. We measure coefficient error by the Euclidean distance $\|\hat\beta-\beta_0\|_2$. \Cref{tab:linear_comparison} reports means over $100$ replications. Each method is evaluated at the hyperparameter minimising its mean worst-case test MSE, so the comparison is an oracle envelope over the grid rather than a deployable tuning rule. Further details are in \Cref{sec:exp_details}.

\begin{table}[!tb]
\centering
\begin{tabular}{@{}lcc@{}}
\toprule
Method & Worst-case MSE & $\|\hat\beta - \beta_0\|_2$ \\
\midrule
ERM & $13.85\ (0.02)$ & $0.856\ (0.003)$ \\
DRIG & $12.68\ (0.13)$ & $0.802\ (0.016)$ \\
Anchor regression & $6.56\ (0.14)$ & $0.370\ (0.016)$ \\
CvM & $4.22\ (0.03)$ & $0.053\ (0.005)$ \\
eCvM  & $\mathbf{4.11\ (0.02)}$ & $\mathbf{0.035\ (0.004)}$ \\
DRIG + eCvM  & $\mathbf{4.11\ (0.02)}$ & $\mathbf{0.035\ (0.004)}$ \\
\midrule
Causal target $g_0$ & $\approx 4.01$ & $0$ \\
\bottomrule
\end{tabular}
\caption{Linear design with an anchor variable. Mean worst-case test MSE and coefficient error over $100$ replications, with standard errors in parentheses and oracle grid selection. CvM does not use the anchor variable; eCvM uses its values. Column minima are bold.}
\label{tab:linear_comparison}
\end{table}

All three CvM-based estimators lie close to the causal target, whereas anchor regression and DRIG retain substantially larger coefficient errors. CvM achieves this without using the anchor variable, and eCvM improves slightly by using it. The anchor variable directly perturbs the unstable child $X_2$, but only weakly, with $\delta=0.3$. Consequently, anchor regression and DRIG receive only a small between-environment signal that $X_2$ should be downweighted. Here the DRIG term adds no further gain, so the combined estimator coincides with eCvM.

\subsection{Single-cell intervention data}\label{sec:singlecell}
The RPE1 Perturb-seq data of \citet{replogle2022mapping} record gene expression in control cells and cells after CRISPRi knockdown, which suppresses the expression of a targeted gene. Following the preprocessing of \citet{chevalley2025causalbench} and the prediction setup of \citet{shen2026causality}, we use the expression of EIF1 as $Y$ and the expression levels of the other nine selected genes as $X$. The training sample contains $11{,}485$ observational control cells, in which no gene is targeted for knockdown, and $1{,}865$ cells from ten intervention environments, each targeting one of these ten genes, giving $13{,}350$ cells in total. As in \citet{shen2026causality}, the test set comprises $50$ held-out single-gene knockdown environments targeting genes outside this ten-gene set, selected by the largest MSE of OLS fitted only to the observational cells. All methods use linear prediction functions $g_\beta(x)=x^\top\beta$. For the full-training comparison, the CvM-based methods use the smooth approximation in \eqref{eq:anneal} to facilitate gradient-based optimisation.

In \Cref{fig:singlecell_full}, we compare pooled OLS, anchor regression, DRIG, CvM, eCvM, and the combined DRIG and eCvM estimator in \eqref{eq:combined_estimator}. The combined estimator has the lowest median MSE, $0.317$, and the lowest worst-case MSE, $0.681$. Gene identifiers, preprocessing and hyperparameter details are given in \Cref{app:singlecell}.

\begin{figure}[!tb]
\centering
\includegraphics[width=0.7\linewidth]{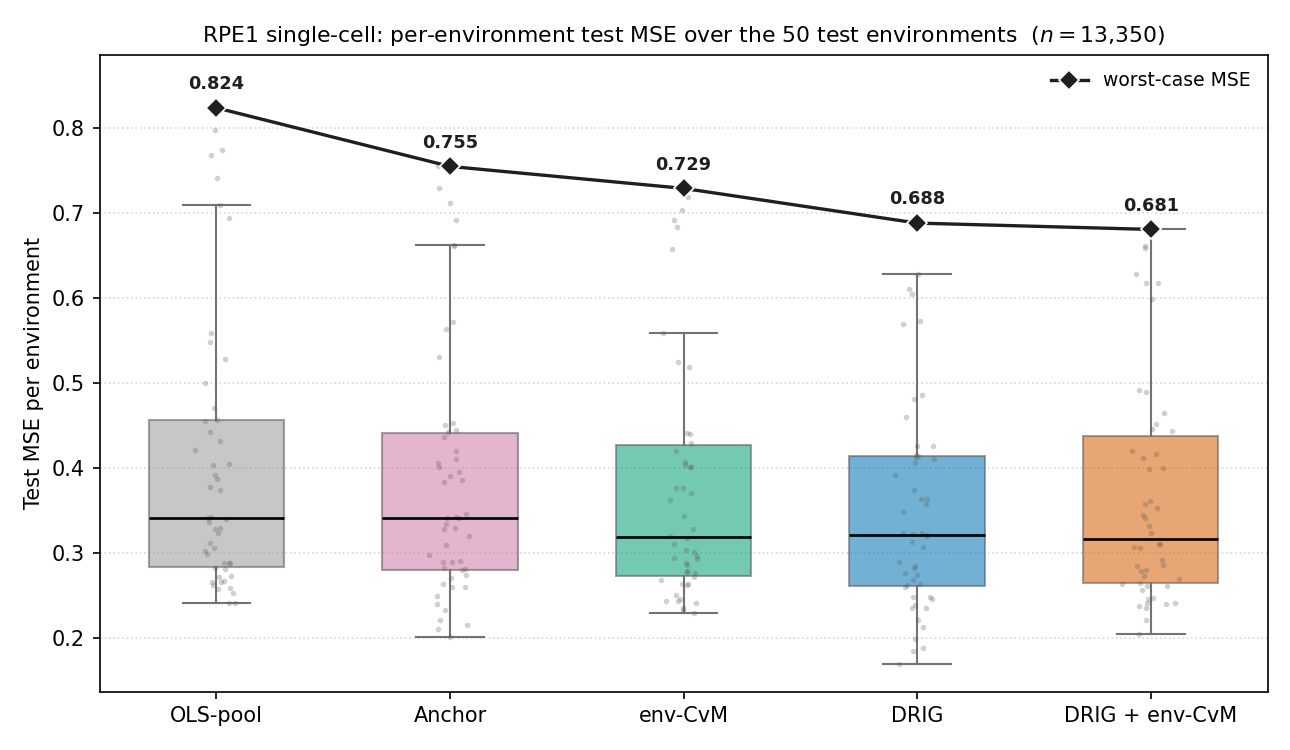}
\caption{Test MSE across $50$ hidden-gene knockdown environments in the RPE1 Perturb-seq data. All methods are trained on the full training sample.}
\label{fig:singlecell_full}
\end{figure}

We assess the value of observing only a few, and then all, intervention environments in the training set by varying $K\in\{0,1,2,3,5,7,10\}$. For each $K$, we combine a random subset of the ten environments with the observational cells and subsample to approximately $5000$ observations, using the same rows for every method. \Cref{tab:singlecell_few_envs} reports mean worst-case MSE over $20$ draws and the oracle minimum over each method's grid. Sampling and tuning details are in \Cref{sec:exp_details}.

\begin{table}[!tb]
\centering\small
\setlength{\tabcolsep}{5pt}
\begin{tabular}{@{}rccccc@{}}
\toprule
$K$ & OLS-pool & Anchor & DRIG & CvM / eCvM & DRIG + eCvM \\
\midrule
$0$  & $1.113\ (0.015)$ & --- & --- & $0.741\ (0.006)^{\dagger}$ & --- \\
$1$  & $1.113\ (0.020)$ & $1.042\ (0.037)$ & $0.993\ (0.048)$ & $0.731\ (0.002)$ & $\mathbf{0.721\ (0.007)}$ \\
$2$  & $1.034\ (0.021)$ & $0.892\ (0.032)$ & $0.828\ (0.042)$ & $0.729\ (0.002)$ & $\mathbf{0.702\ (0.011)}$ \\
$3$  & $1.033\ (0.026)$ & $0.875\ (0.034)$ & $0.776\ (0.043)$ & $0.730\ (0.004)$ & $\mathbf{0.678\ (0.018)}$ \\
$5$  & $0.939\ (0.025)$ & $0.784\ (0.013)$ & $0.705\ (0.022)$ & $0.731\ (0.002)$ & $\mathbf{0.671\ (0.015)}$ \\
$7$  & $0.887\ (0.012)$ & $0.776\ (0.014)$ & $0.662\ (0.023)$ & $0.729\ (0.002)$ & $\mathbf{0.645\ (0.019)}$ \\
$10$ & $0.821\ (0.007)$ & $0.755\ (0.000)$ & $0.688\ (0.000)$ & $0.729\ (0.000)$ & $\mathbf{0.677\ (0.010)}$ \\
\bottomrule
\addlinespace[2pt]
\multicolumn{6}{@{}p{0.93\linewidth}@{}}{\footnotesize $^{\dagger}$ At $K=0$, anchor regression is unavailable, DRIG reduces to pooled OLS, and eCvM and the combined estimator reduce to single-distribution CvM.}
\end{tabular}
\caption{RPE1 Perturb-seq. Mean worst-case test MSE over $20$ draws, with standard errors in parentheses, as the number $K$ of training intervention environments varies. Values are oracle minima over each method's grid; row minima are bold.}
\label{tab:singlecell_few_envs}
\end{table}

At $K=0$, CvM lowers worst-case MSE from $1.113$ to $0.741$, although its oracle-selected $\lambda$ lies at the top of the grid in every draw, leaving data-driven tuning from observational data unresolved. At $K=1$, eCvM attains worst-case MSE $0.731$, compared with $0.993$ for DRIG. The combined estimator improves on eCvM for every $K\geq1$. Thus eCvM supplies most of the gain with few labelled environments, while DRIG adds value as more environments are observed.

\section{Discussion}\label{sec:discussion}

Causal prediction functions are attractive under distribution shift because their structural mechanism is invariant. Under \Cref{ass:structural}, independence of the noise and the causal prediction makes the conditional MSE given $g_0(X)$ constant. Reweighting a single observational distribution through the prediction probes this restriction and generates synthetic environments. The resulting criterion has an exact worst-case representation, identifies the causal prediction function under an orthogonality condition, and yields a goodness-of-fit test.

The identifying orthogonality condition is not testable in general, and our proposed goodness-of-fit test checks invariance, not orthogonality (\Cref{rem:gof_scope}). Under misspecification, the population target is the best invariant fit. Inverting the distribution-free calibration gives the set of candidate functions not rejected as $g_0$, $C_n = \{g \in \cG : \hat{T}_n(g) \le c_\alpha\}$, which satisfies $\Pr(g_0 \in C_n) \to 1-\alpha$ under the conditions of \Cref{cor:null_g0}, as in an Anderson--Rubin inversion \citep{anderson1949estimation}. Other members of $\cG_I$ can remain in $C_n$, so its diameter need not reflect the strength of identification. For a flexible class $\cG$, the region is conceptual rather than computational.

For a fixed prediction function $g$, the robustness guarantee in \Cref{thm:dro} covers covariate reweightings that preserve $P(Y\,|\,X)$ and $P(Y\,|\,g(X))$; see also \Cref{prop:reweight}. The DRIG term instead covers structural perturbations that may alter $P(Y\,|\,X)$. By \Cref{prop:joint_bound}, the combined objective bounds the DRIG worst-case risk and the regularised eCvM criterion.

Several directions remain open. The goodness-of-fit theory is complete for a prediction function held fixed relative to the evaluation sample. Extending the calibration to a fitted prediction function would require controlling its estimation error relative to the evaluation sample size and remains open. The finite-sample choice of $\lambda$ also remains open. In the robustness representation of \Cref{thm:dro}, $\lambda$ determines how strongly large reweightings are penalised, but we do not yet have a data-driven rule for choosing this strength. This is the counterpart of choosing the radius of the uncertainty set in standard DRO. Furthermore, the convergence rates in \Cref{sec:asymptotics} hold for a generic function class. Sharper results for specific classes, including the neural networks used in the experiments, require separate arguments. Finally, the reweighting may depend on summaries of the covariates other than the prediction (\Cref{prop:reweight}). How to choose such a summary for a given family of shifts remains open.

\section*{Acknowledgements}
We thank David Blei and Zijian Guo for helpful discussions. LY thanks Robin J.~Evans for his invaluable mentorship and support, and acknowledges funding from the EPSRC Centre for Doctoral Training in Modern Statistics and Statistical Machine Learning (EP/S023151/1). CS acknowledges funding from the Swiss National Science Foundation (grant no.~214865).

\bibliography{paper-ref}
\appendix
\section{Auxiliary results}\label{sec:aux}
This section collects auxiliary results that support the main theorems.
\begin{lemma}[Permanence of pseudo-dimension]\label{lem:pdim}
Suppose \Cref{ass:regularity}(a) is satisfied, then:
\begin{enumerate}[label=(\alph*)]
\item $\cF_1\coloneqq\Big\{(x,y) \mapsto \big(y - g(x)\big)^2\mathbbm{1}\{g(x) \le s\} : g \in \cG,\, s \in \R\Big\}$ has pseudo-dimension $O(V)$;
\item $\cF_2\coloneqq \Big\{(x,y) \mapsto \big(y - g(x)\big)^2 : g \in \cG\Big\}$ has pseudo-dimension $O(V)$;
\item $\cF_3\coloneqq \big\{x \mapsto \mathbbm{1}\{g(x) \le s\} : g \in \cG,\, s \in \R\big\}$ has VC dimension at most $\mathrm{Pdim}(\cG_c)$, where $\cG_c \coloneqq \{g - s : g \in \cG,\, s \in \R\}$, hence $O(V)$ under \Cref{ass:regularity}(a).
\end{enumerate}
\end{lemma}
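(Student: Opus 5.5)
I would prove all three parts with standard permanence properties of VC and pseudo-dimension. Recall that $\mathrm{Pdim}(\cF)$ is the VC dimension of the subgraph class $\{(z,t)\mapsto\mathbbm{1}\{f(z)>t\}:f\in\cF\}$. Part (c) comes first because it is the direct one. The set $\{g(x)\le s\}$ equals $\{(g-s)(x)\le 0\}$. Hence $\cF_3$ is the class of complements of the sets $\{x:f(x)>0\}$ with $f\in\cG_c$. These sets are exactly the sections of the subgraphs of $\cG_c$ at the fixed level $t=0$. A set of points shattered by $\{\mathbbm{1}\{f>0\}:f\in\cG_c\}$ yields a set of pairs $(x_i,0)$ shattered by the subgraphs of $\cG_c$. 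Taking complements does not change the VC dimension. So $\mathrm{VC}(\cF_3)\le\mathrm{Pdim}(\cG_c)=O(V)$ by \Cref{ass:regularity}(a).

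For (b), I would express the subgraph sets of $\cF_2$ through the subgraph sets of $\cG_c$. For $t<0$, the event $(y-g(x))^2>t$ always holds. For $t\ge 0$, it equals
\[
\{g(x)<y-\sqrt t\}\cup\{g(x)>y+\sqrt t\}.
\]
Each of these two events is a subgraph-type event for $\cG$, evaluated at the lifted point $(x,\,y\mp\sqrt t)$. A strict versus non-strict inequality only changes things by a complement or a limit and does not affect the VC count. So every subgraph set of $\cF_2$ is a fixed Boolean combination of at most two sets drawn from two classes, each of VC dimension at most $\mathrm{Pdim}(\cG)=V$, precomposed with fixed maps of $(x,y,t)$. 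I would then invoke the standard bound for $k$-fold Boolean combinations of VC classes (for example, van der Vaart and Wellner, Lemma 2.6.17, or the Sauer--Shelah counting argument). It gives VC dimension $O(kV)$, hence $O(V)$.

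For (a), the subgraph of $(y-g(x))^2\mathbbm{1}\{g(x)\le s\}$ at level $t$ depends on the sign of $t$. For $t<0$ it is everything. For $t\ge0$ it is the intersection of the subgraph event of $\cF_2$ with $\{g(x)\le s\}$. The second set is a member of the class from (c), and it is indexed by the same $g$ as the first. So the subgraph sets of $\cF_1$ lie in the class of pairwise intersections of sets from a class of VC dimension $O(V)$ (from (b)) with sets from a class of VC dimension $O(V)$ (from (c)). The union with the trivial case $t<0$ adds at most a constant. The Boolean-combination bound again gives $O(V)$.

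The main obstacle is handling the fact that the two factors in (a) share the same $g$ while $s$ is free. This is why the proof needs $\cG_c$ rather than $\cG$ alone. Bounding by the product of the two classes loses nothing, since it only enlarges the index set. The care lies in applying the counting argument: on $N$ points, the number of realised patterns is at most $(eN/d_1)^{d_1}(eN/d_2)^{d_2}$. Comparing this with $2^N$ gives $N=O(d_1+d_2)$, which is the $O(V)$ claim.
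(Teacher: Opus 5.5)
Your proposal is correct and follows essentially the same route as the paper. For (c), you lift level sets at the fixed level $0$ to subgraphs of $\cG_c$ and then complement. For (b), you write the subgraph of $\cF_2$ as $\{t<0\}$ united with two sets obtained from $\pm\cG$ through the fixed shift $\mp y+\sqrt{t\vee 0}$. For (a), you take $\{t<0\}$ united with the intersection of a (b)-subgraph and a cylinder over a (c)-set. Both then close with the Boolean-combination permanence of VC classes.

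The one step you assert rather than prove is that strict versus non-strict inequalities leave the VC count unchanged. The paper handles this by citing the negation and translation permanence in van der Vaart and Wellner's Lemma~2.6.18. It also follows directly by shifting the levels of the finitely many shattered points down by a small enough common amount.
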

\begin{proof}
Recall that the pseudo-dimension of a class of functions is the VC dimension of its class of subgraphs $\{(z,t) : t < f(z)\}$. We use two permanence properties of VC classes of sets \citep[Lemma~2.6.17]{vandervaart1996weak}. Complementation leaves the VC dimension unchanged, and forming finite unions or intersections increases it by at most a constant factor. We also use, twice, the following elementary fact about level sets taken at a fixed level. If $\cF$ is a class of functions on a set $\mathcal{Z}$ with $\mathrm{Pdim}(\cF) = V'$, then the sets $\{z : f(z) > 0\}$, $f \in \cF$, have VC dimension at most $V'$. Indeed, if $z_1,\dots,z_m$ are shattered by these sets, then $z_i \in \{f > 0\}$ if and only if $(z_i,0)$ lies in the subgraph of $f$, so the $m$ points $(z_1,0),\dots,(z_m,0)$ are shattered by the subgraphs, whence $m \le V'$. That the level is held at $0$ is what makes the lift legitimate. Finally, a class of functions of $x$ is read below as a class of functions of $(x,y)$ or of $(x,y,t)$ that ignores the added coordinates, which changes neither pseudo-dimension nor VC dimension.

For (c), write $\{x : g(x)\le s\}$ as the complement of $\{x : (g-s)(x) > 0\}$. Applying the fact above to $\cG_c$ and then complementing gives $\mathrm{VC}(\cF_3) \le \mathrm{Pdim}(\cG_c)$. When $\cG$ is closed under the addition of constants, $\cG_c = \cG$ and the bound is $V$; when $\cG$ lies in a finite-dimensional vector space $\cV$, so does $\cG_c$ after adjoining the constants, and \citet[Lemma~2.6.15]{vandervaart1996weak} bounds $\mathrm{Pdim}(\cG_c)$ by $\dim(\cV)+2$.

For (b), the subgraph of $\cF_2$ is $\{(x,y,t) : t < (y-g(x))^2\}$. On $\{t < 0\}$ the condition holds identically. On $\{t \ge 0\}$ it says $|y - g(x)| > \sqrt{t}$, that is, $g(x) - y + \sqrt t < 0$ or $-g(x) + y + \sqrt t < 0$, so the subgraph is the union of $\{t<0\}$ with two sets of the form $\{f < 0\}$. The classes of functions appearing there are $\cG$ and $-\cG$ translated by the fixed function $(x,y,t) \mapsto \mp y + \sqrt{t \vee 0}$, and both negation and the addition of one fixed function leave a class VC-subgraph with an unchanged index \citep[Lemma~2.6.18]{vandervaart1996weak}. Each of the two sets is a level set at the fixed level $0$, hence of VC dimension $O(V)$ by the fact above, and the union keeps $O(V)$. Therefore $\mathrm{Pdim}(\cF_2) = O(V)$.

For (a), the subgraph of $\cF_1$ is $\{(x,y,t) : t < (y-g(x))^2\,\mathbbm{1}\{g(x)\le s\}\}$. On $\{t<0\}$ the condition again holds identically. On $\{t \ge 0\}$ it holds exactly when both $g(x) \le s$ and $t < (y-g(x))^2$, so the set is the intersection of the cylinder over a set from (c) with the subgraph from (b). Intersections and the union with $\{t<0\}$ keep the VC dimension $O(V)$, so $\mathrm{Pdim}(\cF_1)=O(V)$.
\end{proof}

\begin{lemma}[Integrable envelope]\label{lem:envelop}
Suppose \Cref{ass:regularity}(b) is satisfied, for the function classes defined in \Cref{lem:pdim}:
\begin{enumerate}[label=(\alph*)]
\item $\cF_1$ has envelope $\big(|Y|+G\big)^2$;
\item $\cF_2$ has envelope $\big(|Y|+G\big)^2$;
\item $\cF_3$ has envelope 1.
\end{enumerate}
And $\E\big[\big(|Y|+G\big)^2\big]<\infty$.
\end{lemma}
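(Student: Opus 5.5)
The plan is to bound each function in the three classes pointwise by the proposed envelope, and then to check integrability of that envelope from \Cref{ass:regularity}(b). Recall that the envelope $G$ dominates the class, so $|g(x)|\le G(x)$ for every $g\in\cG$ and every $x$.

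For (b), take any $g\in\cG$. The triangle inequality gives $|y-g(x)|\le |y|+|g(x)|\le |y|+G(x)$, and squaring both nonnegative sides yields $(y-g(x))^2\le (|y|+G(x))^2$. For (a), the indicator $\mathbbm{1}\{g(x)\le s\}$ takes values in $\{0,1\}$. Hence $0\le (y-g(x))^2\mathbbm{1}\{g(x)\le s\}\le (y-g(x))^2$, and this is bounded by the envelope from (b), uniformly in $g\in\cG$ and $s\in\R$. For (c), every indicator lies in $[0,1]$, so the constant $1$ is an envelope.

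For integrability, use $(a+b)^2\le 2a^2+2b^2$ to get $\E[(|Y|+G(X))^2]\le 2\E[Y^2]+2\E[G(X)^2]$. By Lyapunov's (or Jensen's) inequality, $\E[Y^2]\le \E[Y^4]^{1/2}<\infty$ and $\E[G(X)^2]\le \E[G(X)^4]^{1/2}<\infty$, both finite by \Cref{ass:regularity}(b). Note also that $\E[Y^2]<\infty$ is already part of the setup in \Cref{sec:setup}.

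There is no real obstacle in this argument. The only point needing care is measurability of the envelope, and this is immediate because $G$ and $Y$ are measurable.
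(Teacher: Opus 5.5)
Your proposal is correct and follows the paper's proof essentially step for step. The steps are the pointwise bound $|y-g(x)|\le |y|+G(x)$, squared, for $\cF_2$; the indicator factor in $\{0,1\}$ for $\cF_1$ and $\cF_3$; and $\E[(|Y|+G)^2]\le 2\E[Y^2]+2\E[G(X)^2]$ for integrability. Your explicit appeal to Lyapunov's inequality, which turns the fourth-moment conditions of \Cref{ass:regularity}(b) into finite second moments, only spells out a step the paper leaves implicit.
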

\begin{proof}
By \Cref{ass:regularity}(b) the envelope $G$ satisfies $|g| \le G$ for every $g \in \cG$, so $|y - g(x)| \le |y| + G(x)$ and therefore $\big(y-g(x)\big)^2 \le \big(|y|+G(x)\big)^2$, which gives (b). The members of $\cF_1$ multiply those of $\cF_2$ by an indicator taking values in $\{0,1\}$, so the same function bounds them, which gives (a). The members of $\cF_3$ are themselves indicators, so $1$ is an envelope, and (c) follows. Finally $\E\big[\big(|Y|+G\big)^2\big] \le 2\E[Y^2] + 2\E\big[G(X)^2\big]$, and both terms are finite by \Cref{ass:regularity}(b).
\end{proof}

\begin{lemma}[$P$-Glivenko-Cantelli for VC-subgraph classes]\label{lem:PGC_VC}
Let $\cF$ be a class of measurable functions with $\mathrm{Pdim}(\cF) < \infty$ and envelope $F$ satisfying $\E\big[F(Z)\big] < \infty$.  Then $\cF$ is $P$-Glivenko-Cantelli, i.e.
\begin{align*}
\sup_{f \in \cF}\big|P_n f - P f\big| \overset{P}{\to} 0.
\end{align*}
\end{lemma}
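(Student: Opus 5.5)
The plan is the standard route: a uniform entropy bound for VC-subgraph classes, followed by truncation of the envelope. First I reduce to bounded functions. Fix $M>0$ and split each $f\in\cF$ as $f=f\mathbbm{1}\{F\le M\}+f\mathbbm{1}\{F>M\}$. The tail part satisfies
\begin{align*}
\sup_{f}\big|(P_n-P)\big(f\mathbbm{1}\{F>M\}\big)\big|\le P_n\big(F\mathbbm{1}\{F>M\}\big)+P\big(F\mathbbm{1}\{F>M\}\big).
\end{align*}
By the law of large numbers this converges almost surely to $2P(F\mathbbm{1}\{F>M\})$. That limit can be made arbitrarily small by taking $M$ large, using $\E[F]<\infty$ and dominated convergence.

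Next I treat the truncated class $\cF_M=\{f\mathbbm{1}\{F\le M\}:f\in\cF\}$. It has constant envelope $M$, and its pseudo-dimension stays finite. To see this, note that its subgraphs are $\big(\{t<f\}\cap\{F\le M\}\big)\cup\big(\{t<0\}\cap\{F>M\}\big)$. Each is a Boolean combination of a subgraph of $\cF$ with two fixed sets, so the permanence properties of \citet[Lemma~2.6.17]{vandervaart1996weak} used in the proof of \Cref{lem:pdim} apply.

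For $\cF_M$ I invoke the uniform covering bound for VC-subgraph classes \citep[Theorem~2.6.7]{vandervaart1996weak}. It gives
\begin{align*}
\sup_Q N\big(\epsilon M,\cF_M,L^1(Q)\big)\le K(\mathrm{Pdim}+1)(16e)^{\mathrm{Pdim}+1}\epsilon^{-\mathrm{Pdim}}
\end{align*}
for a universal constant $K$, which is finite for every $\epsilon>0$ and independent of $n$. In particular $\log N(\epsilon,\cF_M,L^1(P_n))=O_P(1)=o_P(n)$. The Glivenko--Cantelli theorem under random entropy \citep[Theorem~2.4.3]{vandervaart1996weak} then yields $\sup_{\cF_M}|(P_n-P)f|\to0$ almost surely, hence in probability. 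The standing pointwise measurability assumption is what makes these suprema measurable.

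Finally, I combine the two parts. For every $\eta>0$, I choose $M$ with $2P(F\mathbbm{1}\{F>M\})<\eta/2$. Then $\limsup_n P\big(\sup_{\cF}|(P_n-P)f|>\eta\big)=0$, which gives the claim.

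The main obstacle is the truncation step: I need the truncated class to be VC-subgraph with finite index. Once that is in place, the rest follows directly from the cited empirical-process theorems, so I would verify the subgraph identity carefully and otherwise cite.
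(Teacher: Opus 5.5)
Your proposal is correct and follows the paper's route. You truncate at $M$, note that $\cF_M$ keeps finite pseudo-dimension with constant envelope $M$, get an $n$-free uniform covering bound from \citet[Theorem~2.6.7]{vandervaart1996weak}, and conclude with \citet[Theorem~2.4.3]{vandervaart1996weak}; where you use the $r=1$ case of the covering bound directly, the paper goes through the $L^2$ bound of \Cref{lem:pollard_cover} and then passes to $L^1$. The one difference is that you control the tail $f\mathbbm{1}\{F>M\}$ by hand via the law of large numbers, which is valid but unnecessary, because Theorem~2.4.3 already assumes only an integrable envelope and imposes its entropy condition on the truncated classes $\cF_M$, so the paper applies it to $\cF$ directly.
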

\begin{proof}
Let $V\ge1$ be a finite upper bound on the VC index of the subgraphs of $\cF$. For $M < \infty$ write $\cF_M = \{f\,\mathbbm{1}\{F \le M\} : f \in \cF\}$. Truncation by the fixed set $\{F \le M\}$ does not increase the pseudo-dimension and gives $\cF_M$ the constant envelope $M$, so \Cref{lem:pollard_cover} bounds its $L^2(Q)$ covering numbers by $K\,V(16e)^V(M/\delta)^{2V}$ uniformly in $Q$. Covering in $L^1(Q)$ is no harder than in $L^2(Q)$, so, taking $Q = P_n$,
\begin{align*}
\log N\big(\delta, \cF_M, L^1(P_n)\big) \le c\,V\big[1+\log(M/\delta)\big]
\end{align*}
for $0<\delta\le M$. For $\delta>M$, monotonicity gives the same bound as at $\delta=M$. These bounds are free of $n$ and hence $o_P(n)$. With the envelope condition $\E[F] < \infty$ and the measurability convention of \Cref{sec:asymptotics}, \citet[Theorem~2.4.3]{vandervaart1996weak} gives the conclusion.
\end{proof}

\begin{lemma}[Compactness]\label{lem:compact}
If $\mathrm{Pdim}(\cG) < \infty$ and $\E[G(X)^2] < \infty$, then $\cG$
is totally bounded in $L^2(P_X)$. If further $\cG$ is closed in $L^2(P_X)$, then $\cG$ is compact in   $L^2(P_X)$.
\end{lemma}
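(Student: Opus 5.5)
The plan is to prove total boundedness through uniform covering-number bounds for VC-subgraph classes, and then to obtain compactness from completeness of $L^2(P_X)$. The class $\cG$ has finite pseudo-dimension $V$ and envelope $G$ with $\E[G(X)^2]<\infty$. The covering bound already invoked in \Cref{lem:PGC_VC} (\Cref{lem:pollard_cover}) requires a bounded envelope, so I would first truncate. Fix $\delta>0$ and choose $M<\infty$ with
\begin{align*}
\E\big[G(X)^2\,\mathbbm{1}\{G(X)>M\}\big]\le\delta^2/16,
\end{align*}
which is possible by dominated convergence because $\E[G^2]<\infty$.

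Next I would define $\cG_M=\{g\,\mathbbm{1}\{G\le M\}:g\in\cG\}$. Multiplying by the indicator of a fixed set does not increase the pseudo-dimension: the subgraph of the truncated function is the union of the subgraph of $g$ restricted to $\{G\le M\}$ with the set $\{G>M,\,t<0\}$. The class $\cG_M$ also has constant envelope $M$. Applying \Cref{lem:pollard_cover} with $Q=P_X$ gives
\begin{align*}
N\big(\delta/2,\cG_M,L^2(P_X)\big)\le K\,V(16e)^V(2M/\delta)^{2V}<\infty.
\end{align*}
So there is a finite set $\{f_1,\dots,f_N\}$ forming a $\delta/2$-net for $\cG_M$. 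After replacing each $f_j$ by an element of $\cG_M$ within distance $\delta/2$ (doubling the radius), I may take $f_j=g_j\mathbbm{1}\{G\le M\}$ with $g_j\in\cG$.

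For any $g\in\cG$, choose $j$ with $\|(g-g_j)\mathbbm{1}\{G\le M\}\|_{P_X}\le\delta/2$. On $\{G>M\}$ we have $|g-g_j|\le 2G$, so
\begin{align*}
\|(g-g_j)\mathbbm{1}\{G>M\}\|_{P_X}\le 2\big(\E[G^2\mathbbm{1}\{G>M\}]\big)^{1/2}\le\delta/2.
\end{align*}
The triangle inequality then gives $\|g-g_j\|_{P_X}\le\delta$ after the constants are readjusted. Hence $\cG$ has a finite $\delta$-net drawn from $\cG$ itself for every $\delta>0$, so it is totally bounded.

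For the second claim, $L^2(P_X)$ is complete, and a closed subset of a complete metric space is complete. A complete, totally bounded subset of a metric space is compact. Throughout, elements are identified up to $P_X$-null sets.

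The main obstacle is the first step: verifying that the truncated class keeps finite pseudo-dimension and that the covering lemma applies with $Q=P_X$ rather than an empirical measure. Both points follow from the uniform-in-$Q$ form of the bound already used in \Cref{lem:PGC_VC}.
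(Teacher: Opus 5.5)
Your proof is correct, and its skeleton matches the paper's: a uniform VC covering bound gives total boundedness, and closedness in the complete space $L^2(P_X)$ then gives compactness. The difference is your truncation step, which rests on a misreading of \Cref{lem:pollard_cover}. That bound, \citet[Theorem~2.6.7]{vandervaart1996weak}, does not require a bounded envelope; it holds for any envelope $F$, with radii measured relative to $\|F\|_{L^2(Q)}$. The paper applies it directly with $F=G$ and $Q=P_X$. Since $\|G\|_{P_X}<\infty$, this gives $N(\varepsilon\|G\|_{P_X},\cG,L^2(P_X))<\infty$ for every $\varepsilon\in(0,1]$, which is total boundedness at once, with no tail control and no need to move net centres into $\cG$.

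The truncation in \Cref{lem:PGC_VC} serves a different purpose. The Glivenko--Cantelli theorem used there is stated through the entropy of truncated classes under the random measure $P_n$ at absolute radii, and that lemma assumes only $\E[F]<\infty$.

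Your detour is still valid. Truncation by the fixed set $\{G\le M\}$ preserves pseudo-dimension. Moving the centres into $\cG_M$ doubles the radius, so your bound on the first piece should read $\delta$ rather than $\delta/2$; your ``readjusted constants'' absorb this. Finally, $|g-g_j|\le 2G$ controls the tail on $\{G>M\}$. The only thing the detour buys is independence from the unbounded-envelope form of the covering theorem, whereas the direct route takes one line.
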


\begin{proof}
By \citet[Theorem 2.6.7]{vandervaart1996weak}, we have the polynomial covering number
bound $N\big(\epsilon, \cG, L^2(P_X)\big) < \infty$ for $\cG$, which is the definition of total boundedness.

 Since the space $L^2(P_X)$ is complete, a subset is compact if and only if it is closed and totally bounded \citep{dudley2002real}. Thus, the total boundedness and the closedness of $\cG$ yields that $\cG$ is compact in $L^2(P_X)$.
\end{proof}

\begin{lemma}[Pollard's covering bound]\label{lem:pollard_cover}
There is an absolute constant $K$ such that, for every probability measure $Q$ on $\mathcal{X}$, every class $\cF$ whose subgraphs have VC index at most $V$, and envelope $F$, and every $\varepsilon \in (0,1]$,
\begin{align}\label{eq:pollard_covering}
    N\big(\varepsilon\,\|F\|_{L^2(Q)},\,\cF,\,L^2(Q)\big) \le K\,V\,(16e)^V\,\varepsilon^{-2V}.
\end{align}
\end{lemma}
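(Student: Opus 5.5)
The plan is to reduce the $L^2(Q)$ covering problem to a packing problem for the subgraph sets, which form a VC class, and then invoke a packing bound for VC classes of sets. This is the route of \citet[Theorem~2.6.7]{vandervaart1996weak}, and I would follow it in three steps: a bound for sets, a transfer to functions in $L^1$, and a transfer from $L^1$ to $L^2$. Covering numbers are bounded by packing numbers at the same radius, so throughout it suffices to bound the maximal number of $\varepsilon\|F\|_{L^2(Q)}$-separated elements of $\cF$. If $\|F\|_{L^2(Q)}=0$, every element of $\cF$ is $Q$-a.e.\ zero and the bound is trivial, so assume $0<\|F\|_{L^2(Q)}<\infty$.

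\emph{Step 1 (sets).} For a class $\mathcal{C}$ of sets with VC index $V$ and any probability measure $R$, I would use Haussler's packing bound: the $L^1(R)$ packing number of $\{\mathbbm{1}_C: C\in\mathcal{C}\}$ at radius $\delta\in(0,1]$ is at most $K\,V\,(4e)^V\delta^{-(V-1)}$. This is the main obstacle. The elementary argument samples $m\asymp\delta^{-1}\log N$ points from $R$; a separated family then induces distinct traces, and Sauer's lemma bounds their number by $m^{V}$. That route yields only a bound of order $(\delta^{-1}\log\delta^{-1})^{V}$. Removing the logarithm, so as to obtain the stated polynomial form with absolute constant $K$, requires Haussler's shifting and combinatorial argument. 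I would cite it rather than reprove it.

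\emph{Step 2 (functions in $L^1$).} Let $R$ satisfy $\mathrm{d}R=F\,\mathrm{d}Q/QF$ on $\mathcal{X}$. Define $\tilde R$ on $\mathcal{X}\times\R$ by drawing $x\sim R$ and then $t$ uniform on $[-F(x),F(x)]$. For $f,g\in\cF$ with $|f|,|g|\le F$, Fubini gives
\begin{align*}
Q|f-g| = 2\,QF\cdot\tilde R\big(\text{subgraph}(f)\,\triangle\,\text{subgraph}(g)\big).
\end{align*}
Hence an $\varepsilon\,QF$-separated family in $L^1(Q)$ yields an $\varepsilon/2$-separated family of subgraphs in $L^1(\tilde R)$. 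Step~1 then bounds the $L^1$ packing number by $K\,V\,(4e)^V(2/\varepsilon)^{V-1}$.

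\emph{Step 3 (from $L^1$ to $L^2$).} Since $|f-g|\le 2F$, we have
\begin{align*}
Q|f-g|^2\le 2\,Q\big(F|f-g|\big)=2\,QF\cdot R|f-g|.
\end{align*}
Now let $f,g$ be $\varepsilon\|F\|_{L^2(Q)}$-separated in $L^2(Q)$, and set $Q'=F\,\mathrm{d}Q/QF$. Then
\begin{align*}
Q'|f-g| \ge \frac{\varepsilon^2\,QF^2}{2\,QF} = \frac{\varepsilon^2}{2}\,Q'F,
\end{align*}
so they are $(\varepsilon^2/2)\,Q'F$-separated in $L^1(Q')$. Applying Step~2 under $Q'$ gives the bound
\begin{align*}
K\,V\,(4e)^V\,(4/\varepsilon^2)^{V-1} \le K\,V\,(16e)^V\,\varepsilon^{-2V},
\end{align*}
where we used $\varepsilon\le1$. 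The constant absorbs the factor $4^{V-1}\le 4^V$, and the power $\varepsilon^{-2(V-1)}$ is enlarged to $\varepsilon^{-2V}$. The result is uniform in $Q$ because no step depends on $Q$ beyond the normalisations above.
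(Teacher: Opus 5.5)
Your proposal is correct and follows essentially the paper's route. The paper's proof only cites \citet[Theorem~2.6.7]{vandervaart1996weak} and weakens the exponent $2(V-1)$ to $2V$ using $\varepsilon\le1$; you have unpacked that theorem's proof (Haussler's packing bound for the subgraph sets, the Fubini identity $Q|f-g|=2\,QF\cdot\tilde R(C_f\triangle C_g)$, and the $F$-weighted change of measure that turns $L^2(Q)$-separation into $L^1$-separation), and your constant bookkeeping $(4e)^V(4/\varepsilon^2)^{V-1}\le(16e)^V\varepsilon^{-2V}$ checks out.
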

\begin{proof}
This is \citet[Theorem~2.6.7]{vandervaart1996weak}, whose exponent $2(V-1)$ we have weakened to $2V$ for $\varepsilon \le 1$. The VC index there exceeds the pseudo-dimension by at most one, so a class with $\mathrm{Pdim}(\cF) = V$ satisfies the bound with $V+1$ in place of $V$; the results below use it only through $O(V)$ and are unaffected.
\end{proof}

\begin{lemma}[VC-type maximal inequality]\label{lem:max_ineq}
Let $\cF$ be a class of measurable functions with $\mathrm{Pdim}(\cF) \le V$ for some $1\le V < \infty$ and envelope $F$ satisfying $\|F\|_{L^2(P)} < \infty$. There is an absolute constant $C$ such that
\begin{align}\label{eq:max_ineq}
    \E\Big[\sup_{f \in \cF}\big|P_n f - P f\big|\Big] \le C\,\|F\|_{L^2(P)}\,\sqrt{V/n}.
\end{align}
\end{lemma}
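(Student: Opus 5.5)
The plan is to combine symmetrisation with Dudley's entropy integral, feeding in the uniform covering bound of \Cref{lem:pollard_cover}. First, replace the expected supremum by the expected supremum of the symmetrised process: with i.i.d.\ Rademacher signs $\epsilon_1,\dots,\epsilon_n$ independent of the sample,
\begin{align*}
\E\Big[\sup_{f\in\cF}|P_nf-Pf|\Big]\le 2\,\E\Big[\sup_{f\in\cF}\Big|\frac1n\sum_{i=1}^n\epsilon_if(Z_i)\Big|\Big].
\end{align*}
This is the standard symmetrisation inequality \citep[Lemma~2.3.1]{vandervaart1996weak}, justified by the pointwise measurability convention of \Cref{sec:asymptotics}.

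Second, condition on the sample. The Rademacher process is sub-Gaussian with respect to the $L^2(P_n)$ metric. Dudley's chaining bound for sub-Gaussian processes then gives
\begin{align*}
\E_\epsilon\Big[\sup_{f\in\cF}\Big|\frac1n\sum_i\epsilon_if(Z_i)\Big|\Big]\lesssim\frac{1}{\sqrt n}\int_0^{\|F\|_{L^2(P_n)}}\sqrt{\log N\big(\delta,\cF,L^2(P_n)\big)}\,\mathrm d\delta .
\end{align*}
The upper limit is the diameter bound: every $f$ satisfies $\|f\|_{L^2(P_n)}\le\|F\|_{L^2(P_n)}$. To avoid handling a separate anchor term, I would add the zero function to $\cF$; this changes the pseudo-dimension by at most a constant. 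If $\|F\|_{L^2(P_n)}=0$, the bound is trivial.

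Third, substitute $\delta=\varepsilon\|F\|_{L^2(P_n)}$ and apply \Cref{lem:pollard_cover} with $Q=P_n$. This gives $\log N\le \log K+\log V+V\log(16e)+2V\log(1/\varepsilon)\lesssim V(1+\log(1/\varepsilon))$, using $V\ge1$. Hence the integral is at most
\begin{align*}
C'\sqrt V\,\|F\|_{L^2(P_n)}\int_0^1\sqrt{1+\log(1/\varepsilon)}\,\mathrm d\varepsilon .
\end{align*}
The remaining integral is a finite absolute constant. So the conditional bound is $C''\|F\|_{L^2(P_n)}\sqrt{V/n}$. Taking the expectation over the sample and applying Jensen, $\E\|F\|_{L^2(P_n)}\le(\E P_nF^2)^{1/2}=\|F\|_{L^2(P)}$, which gives \eqref{eq:max_ineq}.

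The main obstacle is bookkeeping rather than any single hard step. The covering bound must hold uniformly over the random measure $P_n$, which is exactly why \Cref{lem:pollard_cover} is stated for every $Q$. The remaining points are the VC-index versus pseudo-dimension offset, which costs only a constant factor as noted after \Cref{lem:pollard_cover}, and the measurability needed for symmetrisation.
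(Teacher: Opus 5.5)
Your proposal is correct and follows essentially the paper's route. The paper cites \citet[Theorem~2.14.1]{vandervaart1996weak} to bound the expected supremum by a multiple of $J(1,\cF)\,\|F\|_{L^2(P)}/\sqrt n$, and then uses \Cref{lem:pollard_cover} to bound the uniform entropy integral by a constant times $\sqrt V$. Your symmetrisation, conditional Dudley bound with $Q=P_n$, and final Jensen step are the proof of that cited theorem written out, and the same Pollard computation supplies the $\sqrt V$ factor.
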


\begin{proof}
By \citet[Theorem~2.14.1]{vandervaart1996weak}, for any class $\cF$ with envelope $F$,
\begin{align}\label{eq:vdW_2_14_1}
    \E\big[\sup_{f\in\cF}\big|\sqrt n\,(P_n - P)\,f\big|\big]
    \le
    C'\,J(1,\cF)\,\|F\|_{L^2(P)},
\end{align}
where $C'$ is absolute and
\begin{align*}
    J(1,\cF) \coloneqq \int_0^1 \sup_Q \sqrt{1+\log N\big(\eta\,\|F\|_{L^2(Q)},\,\cF,\,L^2(Q)\big)}\,\mathrm d\eta
\end{align*}
is the uniform-entropy integral. Taking logarithms in \eqref{eq:pollard_covering} and using that the VC index is at most $V+1$,
\begin{align*}
    1+\log N\big(\eta\,\|F\|_{L^2(Q)},\,\cF,\,L^2(Q)\big)
    \le
    1+\log K + \log(V+1) + (V+1)\log(16e) + 2(V+1)\log(1/\eta)
    \le
    c_1\,V\,\big(1 + \log(1/\eta)\big),
\end{align*}
uniformly in $Q$, with $c_1$ absolute. Hence
\begin{align*}
    J(1,\cF)
    \le
    \int_0^1 \sqrt{c_1\,V\,\big(1+\log(1/\eta)\big)}d\eta
    =
    \sqrt{c_1\,V}\,\int_0^1 \sqrt{1+\log(1/\eta)}d\eta
    \eqqcolon
    C''\,\sqrt V,
\end{align*}
since $\int_0^1\sqrt{1+\log(1/\eta)}\,d\eta$ is finite. Substituting into \eqref{eq:vdW_2_14_1} and dividing by $\sqrt n$ produces \eqref{eq:max_ineq} with $C = C'\cdot C''$
\end{proof}

\section{Main proofs}
\subsection{Proof of \Cref{prop:reweight}}\label{app:reweight}
\begin{proof}
Let $T = t(X)$ be any statistic through which the weight factors, so that $w(X) = \bar w(T)$, and let $\phi$ be bounded and measurable. For every bounded $\sigma(T)$-measurable $\psi$, conditioning on $T$ gives
\begin{align*}
\E_w\big[\phi(Y)\,\psi(T)\big]
=
\E\big[\bar w(T)\,\phi(Y)\,\psi(T)\big]
=
\E\Big[\bar w(T)\,\psi(T)\,\E\big[\phi(Y) \,|\, T\big]\Big],
\end{align*}
since $\bar w(T)\psi(T)$ is $\sigma(T)$-measurable. Replacing $\phi(Y)$ by $\E_w[\phi(Y) \,|\, T]$ on the left and using $w(X) = \bar w(T)$ once more,
\begin{align*}
\E_w\big[\phi(Y)\,\psi(T)\big]
=
\E_w\Big[\E_w\big[\phi(Y) \,|\, T\big]\,\psi(T)\Big]
=
\E\Big[\bar w(T)\,\psi(T)\,\E_w\big[\phi(Y) \,|\, T\big]\Big].
\end{align*}
Equating the two and letting $\psi$ range over the bounded $\sigma(T)$-measurable functions,
\begin{align*}
\bar w(T)\Big(\E_w\big[\phi(Y) \,|\, T\big] - \E\big[\phi(Y) \,|\, T\big]\Big) = 0
\qquad P\text{-almost surely}.
\end{align*}
Now $P_w\big(\bar w(T) > 0\big) = \E\big[\bar w(T)\mathbbm{1}\{\bar w(T) > 0\}\big] = \E[\bar w(T)] = 1$, so the two conditional expectations agree $P_w$-almost surely. Since $\phi$ was an arbitrary bounded measurable function, $P_w(Y \,|\, T) = P(Y \,|\, T)$ as regular conditional distributions.

The weight factors through $V = s(X)$, through $X$ itself, and through any covariate subset $X_S$ with $S_w \subseteq S$, where $S_w$ indexes the covariates $w$ depends on. Taking $T$ to be each of these in turn gives the three statements.
\end{proof}

\subsection{Proof of \Cref{prop:indicators}}\label{app:indicators}

\begin{proof}
If $g \in \cG_I$, then $\sigma_g^2(F_g) = c_g$ almost surely, and taking expectations gives $c_g = \bar{\sigma}_g^2$. Since $\sigma_g^2(F_g) = \E[R_g \,|\, F_g]$, the law of iterated expectation gives
\begin{align*}
\E\big[R_g\,\mathbbm{1}\{F_g \leq s\}\big] = \E\big[\sigma_g^2(F_g)\,\mathbbm{1}\{F_g \leq s\}\big] = \bar{\sigma}_g^2\,P(F_g \leq s).
\end{align*}
Conversely, suppose this equality holds for all $s \in \R$. As above, $\E\big[R_g\,\mathbbm{1}\{F_g \le s\}\big] = \E\big[\sigma_g^2(F_g)\,\mathbbm{1}\{F_g \le s\}\big]$, so the supposed equality states that the two finite measures
\begin{align*}
A \mapsto \E\big[\sigma_g^2(F_g)\,\mathbbm{1}\{F_g \in A\}\big]
\qquad\text{and}\qquad
A \mapsto \bar{\sigma}_g^2\,P(F_g \in A)
\end{align*}
have the same distribution function. A finite measure on $\R$ is determined by its distribution function, so the two coincide on every Borel set $A$, giving $\E\big[\big(\sigma_g^2(F_g) - \bar{\sigma}_g^2\big)\,\mathbbm{1}\{F_g \in A\}\big] = 0$. On $A = \{t : \sigma_g^2(t) > \bar{\sigma}_g^2\}$ and on $A = \{t : \sigma_g^2(t) < \bar{\sigma}_g^2\}$ the integrand has a strict sign, so each set is $P_g$-null. Hence $\sigma_g^2(t) = \bar{\sigma}_g^2$ for $P_g$-almost every $t$, that is, $g \in \cG_I$.
\end{proof}

\subsection{Proof of \Cref{prop:completeness}} \label{app:completeness}

\begin{proof}[Proof of \Cref{prop:completeness}, forward direction]
If $g \in \cG_I$, then $\sigma_g^2(t) = \bar{\sigma}_g^2$ for $P_g$-almost every $t$, and the integrand of \eqref{eq:cvm} at threshold $s$ is
\begin{align*}
    \E\big[R_g\,\mathbbm{1}\{F_g \leq s\}\big] - \bar{\sigma}_g^2 P(F_g \leq s)
= \int_{(-\infty,s]} \big(\sigma_g^2(t) - \bar{\sigma}_g^2\big)\,\mathrm{d}P_g(t)  =0,
\end{align*}
so $\cT(g) = 0$.
\end{proof}

The converse rests on an identity that holds for every $g \in \cG$, whatever the value of $\cT(g)$. We record it separately because \Cref{app:rkhs} uses it again.

\begin{lemma}[The penalty on the quantile scale]\label{lem:quantile_form}
For every $g \in \cG$, with $\psi_g$ as in \Cref{sec:dro:rkhs} and $\Psi_g(u) = \int_0^u \psi_g(v)\,\mathrm{d}v$,
\begin{align*}
\Psi_g(0) = \Psi_g(1) = 0
\qquad\text{and}\qquad
\cT(g) = \int_0^1 \Psi_g(u)^2\,\mathrm{d}u.
\end{align*}
Moreover, the map $H_g$ carries $P_g$ to the uniform distribution on $[0,1]$, so a set of $u$ of Lebesgue measure zero pulls back to a set of $t$ of $P_g$-measure zero.
\end{lemma}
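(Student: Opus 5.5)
The plan is to pass to the rank scale through the probability integral transform and rewrite the inner quantity $\nu(g,s)=\E[R_g\mathbbm{1}\{F_g\le s\}]-\bar\sigma_g^2H_g(s)$ as $\Psi_g(H_g(s))$. First, since $H_g$ is continuous by \Cref{def:Gclass}, $U_g=H_g(F_g)$ is uniform on $[0,1]$. This gives the last claim directly: for a Lebesgue-null $N\subseteq[0,1]$, $P_g(\{t:H_g(t)\in N\})=P(U_g\in N)=0$.

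Second, I will show that $F_g=H_g^{-1}(U_g)$ almost surely, where $H_g^{-1}$ is the generalised inverse of \Cref{sec:dro:rkhs}. The inequality $H_g^{-1}(H_g(t))\le t$ always holds. Equality fails only when $t$ lies in a flat stretch of $H_g$ to the right of its left endpoint. Such stretches form at most countably many intervals, each of $P_g$-measure zero, and the left endpoints themselves are not a problem. Hence $\sigma_g^2(F_g)=\sigma_g^2(H_g^{-1}(U_g))$ almost surely, so $\psi_g(U_g)=\sigma_g^2(F_g)-\bar\sigma_g^2$ almost surely.

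Integrability of $\psi_g$ follows because $\E[\sigma_g^2(F_g)]=\bar\sigma_g^2<\infty$. The tower property then gives $\Psi_g(1)=\E[\psi_g(U_g)]=\E[R_g]-\bar\sigma_g^2=0$, and $\Psi_g(0)=0$ holds trivially.

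Third, fix $s$ and set $u=H_g(s)$. The events $\{F_g\le s\}$ and $\{U_g\le u\}$ coincide up to a null set: $F_g\le s$ implies $U_g\le u$, and the converse fails only on a flat piece of $H_g$ above $s$, which is $P_g$-null. Using the tower property again,
\begin{align*}
\nu(g,s)=\E\big[\big(\sigma_g^2(F_g)-\bar\sigma_g^2\big)\mathbbm{1}\{F_g\le s\}\big]=\E\big[\psi_g(U_g)\mathbbm{1}\{U_g\le u\}\big]=\Psi_g(H_g(s)).
\end{align*}
Substituting into \eqref{eq:cvm} gives $\cT(g)=\int\Psi_g(H_g(s))^2\,\mathrm dP_g(s)=\E[\Psi_g(U_g)^2]=\int_0^1\Psi_g(u)^2\,\mathrm du$. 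Note that $\Psi_g$ is continuous and bounded by $\int_0^1|\psi_g|$, so every integral here is finite.

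The main obstacle is the measure-theoretic bookkeeping in the second and third steps. Two points need care. The flat pieces of $H_g$ have to be handled precisely, and $\sigma_g^2$ is defined only $P_g$-almost everywhere, so the composition $\sigma_g^2\circ H_g^{-1}$ must be well defined up to Lebesgue-null sets. That second point is exactly where the pullback property from the first step is used. Everything else is the tower property combined with the uniform change of variables.
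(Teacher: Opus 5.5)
Your proposal is correct and follows essentially the same route as the paper's proof: the probability integral transform, $F_g=H_g^{-1}(U_g)$ almost surely because flat stretches of $H_g$ carry no $P_g$-mass, the tower property to get $\Psi_g(1)=0$ and $\nu(g,s)=\Psi_g(H_g(s))$, and the uniform change of variables. One small correction to your closing remark: showing that $\sigma_g^2\circ H_g^{-1}$ does not depend on the version of $\sigma_g^2$ uses the opposite direction to the stated pullback property. For a $P_g$-null set $A$ you need $\mathrm{Leb}\{u:H_g^{-1}(u)\in A\}=P(F_g\in A)=0$, which follows from $U_g$ being uniform together with $F_g=H_g^{-1}(U_g)$ almost surely; in any case, your identities already hold for any fixed version.
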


\begin{proof}
Let $U = H_g(F_g)$ and write $\sigma_U^2(u) = \sigma_g^2\big(H_g^{-1}(u)\big)$, so that $\psi_g = \sigma_U^2 - \bar\sigma_g^2$. Since $H_g$ is continuous, the probability integral transform gives $U \sim \mathrm{Uniform}(0,1)$, which is the stated pushforward property. The final claim of the lemma is then immediate, since a Lebesgue-null set $N \subseteq [0,1]$ has $P_g\big(\{t : H_g(t) \in N\}\big) = \mathrm{Leb}(N) = 0$; the direction from $[0,1]$ back to $\R$ is the only one used. An interval on which $H_g$ is constant carries no $P_g$-mass, so $F_g = H_g^{-1}(U)$ almost surely and, for every $s$, $\mathbbm{1}\{F_g \le s\} = \mathbbm{1}\{U \le H_g(s)\}$ almost surely.

Because $U$ is uniform, $\int_0^1 \psi_g(v)\,\mathrm{d}v = \E\big[\sigma_U^2(U)\big] - \bar\sigma_g^2 = \E\big[\sigma_g^2(F_g)\big] - \bar\sigma_g^2$, and the law of iterated expectation makes the last difference $\E[R_g] - \bar\sigma_g^2 = 0$. Hence $\Psi_g(1) = 0$, and $\Psi_g(0) = 0$ by definition. For every threshold $s$, iterated expectation likewise gives
\begin{align*}
    \E\big[R_g\,\mathbbm{1}\{F_g \leq s\}\big]
    = \E\big[\sigma_g^2(F_g)\,\mathbbm{1}\{F_g \leq s\}\big]
    = \E\big[\sigma^2_U(U)\,\mathbbm{1}\{U \leq H_g(s)\}\big]
    = \int_0^{H_g(s)} \sigma^2_U(v)\,\mathrm{d}v,
\end{align*}
and $P(F_g \leq s) = H_g(s)$, so the integrand of \eqref{eq:cvm} at threshold $s$ equals $\int_0^{H_g(s)} \psi_g(v)\,\mathrm{d}v = \Psi_g\big(H_g(s)\big)$. Averaging its square over $s \sim P_g$, and using $H_g(F_g) = U \sim \mathrm{Uniform}(0,1)$ once more,
\begin{align*}
    \cT(g) = \E\Big[\Psi_g\big(H_g(F_g)\big)^2\Big] = \int_0^1 \Psi_g(u)^2\,\mathrm{d}u. \qedhere
\end{align*}
\end{proof}

\begin{proof}[Proof of \Cref{prop:completeness}, converse]
Suppose $\cT(g) = 0$. By \Cref{lem:quantile_form}, $\int_0^1 \Psi_g(u)^2\,\mathrm{d}u = 0$, so $\Psi_g$ is zero at Lebesgue-almost every $u$. The function $\Psi_g$ is absolutely continuous on $[0,1]$, being the integral of $\psi_g \in L_1([0,1])$, and a continuous function that is zero almost everywhere is zero everywhere, so $\Psi_g(u) = 0$ for every $u \in [0,1]$. Differentiating, $\psi_g(u) = \Psi_g'(u) = 0$ for Lebesgue-almost every $u$, so the set $N \coloneqq \{u \in [0,1] : \sigma_U^2(u) \ne \bar\sigma_g^2\}$ is Lebesgue-null. It remains to transfer this to the $t$ scale. Since $H_g$ is constant on no interval carrying $P_g$-mass, $t = H_g^{-1}\big(H_g(t)\big)$ and hence $\sigma_g^2(t) = \sigma_U^2\big(H_g(t)\big)$ for $P_g$-almost every $t$, so up to a $P_g$-null set $\{t : \sigma_g^2(t) \ne \bar\sigma_g^2\} \subseteq \{t : H_g(t) \in N\}$. By the pushforward property in \Cref{lem:quantile_form}, $P_g\big(\{t : H_g(t) \in N\}\big) = \mathrm{Leb}(N) = 0$. Therefore $\sigma_g^2(t) = \bar{\sigma}_g^2$ for $P_g$-almost every $t$, that is, $g \in \cG_I$.
\end{proof}

\subsection{Proof of \Cref{thm:rkhs}}\label{app:rkhs}

\begin{proof}
Fix $u \in [0,1]$. The function $k_0(\cdot,u)$, with $k_0(v_1,v_2) = 1 - \max(v_1,v_2)$ as in \Cref{thm:rkhs}, satisfies $k_0(1,u) = 0$ and is absolutely continuous with $\partial_v k_0(v,u) = -\mathbbm{1}\{v > u\} \in L^2[0,1]$, so $k_0(\cdot,u) \in \cH$. For every $h \in \cH$,
\begin{align*}
\langle h, k_0(\cdot,u)\rangle_\cH
= \int_0^1 h'(v)\,\partial_v k_0(v,u)\,\mathrm{d}v
= -\int_u^1 h'(v)\,\mathrm{d}v
= h(u) - h(1) = h(u).
\end{align*}
Evaluation at $u$ is thus a bounded linear functional represented by $k_0(\cdot,u)$, so $\cH$ is an RKHS with reproducing kernel $k_0$. Equivalently, $k_0(v_1,v_2) = \min(1-v_1,\,1-v_2)$ arises from the standard kernel $\min(a,b)$ by the change of variables $a = 1-v_1$, $b = 1-v_2$, under which the boundary condition $f(0) = 0$ becomes $h(1) = 0$ \citep[Chapter~1, Exercise~14]{berlinet2004rkhs}.

Recall $\psi_g$ from \Cref{sec:dro:rkhs} and $\mu_g$ from \Cref{thm:rkhs}, and write $\Psi_g(u) \coloneqq \int_0^u \psi_g(v)\,\mathrm{d}v$. \Cref{lem:quantile_form} supplies the two facts we need about $\Psi_g$, namely $\Psi_g(0) = \Psi_g(1) = 0$ and $\cT(g) = \int_0^1 \Psi_g(u)^2\,\mathrm{d}u$. Substituting the explicit form of $k_0$,
\begin{align*}
\mu_g(u)
= \int_0^u \psi_g(v)(1-u)\,\mathrm{d}v + \int_u^1 \psi_g(v)(1-v)\,\mathrm{d}v
= (1-u)\,\Psi_g(u) + \int_u^1 \psi_g(v)(1-v)\,\mathrm{d}v,
\end{align*}
so $\mu_g(1) = 0$. Both terms on the right are integrals of $L_1$ functions against bounded factors, so $\mu_g$ is absolutely continuous, and differentiating each in turn,
\begin{align*}
\mu_g'(u) = -\Psi_g(u) + (1-u)\,\psi_g(u) - (1-u)\,\psi_g(u) = -\Psi_g(u) \qquad \text{for a.e. } u.
\end{align*}
Since $|\Psi_g(u)| \leq \int_0^1 |\psi_g| \leq 2\bar{\sigma}_g^2$, we have $\mu_g' \in L^2[0,1]$ and hence $\mu_g \in \cH$. For every $h \in \cH$, integration by parts with $\Psi_g(0) = \Psi_g(1) = 0$ gives
\begin{align}\label{eq:mu-inner}
\langle h, \mu_g\rangle_\cH
= \int_0^1 h'(u)\,\mu_g'(u)\,\mathrm{d}u
= -\int_0^1 h'(u)\,\Psi_g(u)\,\mathrm{d}u
= \int_0^1 h(u)\,\psi_g(u)\,\mathrm{d}u.
\end{align}
For the first display of the proposition, the density ratio \eqref{eq:Qh} gives
\begin{align*}
\E_{Q_h}[R_g] - \E_P[R_g] = \E_P\bigg[\Big(h(U) - \int_0^1 h(u)\,\mathrm{d}u\Big)\,R_g\bigg],
\end{align*}
and conditioning on $U$, with $\E[R_g \,|\, U] = \bar{\sigma}_g^2 + \psi_g(U)$ and $\int_0^1 \psi_g = 0$, yields $\int_0^1 \psi_g(u)\,h(u)\,\mathrm{d}u$, which equals $\langle h, \mu_g\rangle_\cH$ by \eqref{eq:mu-inner}.

Finally, taking $h = \mu_g$ and using $\mu_g' = -\Psi_g$,
\begin{align*}
\big\|\mu_g\big\|_\cH^2 = \int_0^1 \big(\mu_g'(u)\big)^2\,\mathrm{d}u = \int_0^1 \Psi_g(u)^2\,\mathrm{d}u,
\end{align*}
which is $\cT(g)$ by \Cref{lem:quantile_form}. This proves \Cref{thm:rkhs}.
\end{proof}

\subsection{Proof of \Cref{thm:dro}}\label{app:dro}

\begin{proof}
By \eqref{eq:risk-change} in \Cref{thm:rkhs},
\begin{align}\label{eq:excess-risk}
\E_{Q_h}[R_g] - \E_P[R_g] = \langle h, \mu_g\rangle_\cH \qquad \text{for all } h \in \cH.
\end{align}
Completing the square in $\cH$,
\begin{align*}
\langle h, \mu_g\rangle_\cH - \frac{\|h\|_\cH^2}{4\lambda}
= \lambda\,\|\mu_g\|_\cH^2 - \frac{\|h - 2\lambda\,\mu_g\|_\cH^2}{4\lambda},
\end{align*}
so the supremum over $h \in \cH$ equals $\lambda\,\|\mu_g\|_\cH^2$ and is attained only at $h^\ast = 2\lambda\,\mu_g$. Hence, using \eqref{eq:excess-risk} and $\|\mu_g\|_\cH^2 = \cT(g)$ from \Cref{thm:rkhs},
\begin{align*}
\sup_{h \in \cH} \bigg\{\E_{Q_h}[R_g] - \frac{\|h\|_\cH^2}{4\lambda}\bigg\}
= \E_P[R_g] + \sup_{h \in \cH} \bigg\{\langle h, \mu_g\rangle_\cH - \frac{\|h\|_\cH^2}{4\lambda}\bigg\}
= \E_P[R_g] + \lambda\,\cT(g),
\end{align*}
the first identity \eqref{eq:dro} of \Cref{thm:dro}, with supremum attained at $h^\ast = 2\lambda\,\mu_g$. Minimising both sides over $g \in \cG$, attained at $g_\lambda$, gives \eqref{eq:dro_minimax}.
\end{proof}

\subsection{Proof of \Cref{cor:signed}}\label{app:signed}
\begin{proof}
The constraint $w_h\geq0$ restricts the supremum in \eqref{eq:dro}, which gives the inequality. Equality holds whenever the unrestricted maximiser $h^\ast=2\lambda\mu_g$ is feasible. By the reproducing property,
\begin{align*}
|\mu_g(u)|
\leq \|\mu_g\|_\cH k_0(u,u)^{1/2}
\leq \cT(g)^{1/2}
\qquad (u\in[0,1]).
\end{align*}
Consequently,
\begin{align*}
w_{h^\ast}(x)
&=1+2\lambda\bigg\{\mu_g\big(H_g(g(x))\big)-\int_0^1\mu_g(u)\,\mathrm du\bigg\}\\
&\geq1-4\lambda\cT(g)^{1/2}.
\end{align*}
Thus $w_{h^\ast}\geq0$ when $\lambda\leq\{4\cT(g)^{1/2}\}^{-1}$. If $\cT(g)=0$, then $\mu_g=0$ and the conclusion holds for every $\lambda$.
\end{proof}

\subsection{Proof of \Cref{prop:unique}}\label{app:unique}
\begin{proof}
We first record that the minimisation runs over a set containing $g_0$. Under \Cref{ass:structural} the prediction error of $g_0$ is $\varepsilon$, so $\sigma_{g_0}^2(t) = \E\big[\varepsilon^2 \,\big|\, g_0(X) = t\big] = \sigma^2$, constant in $t$ by $\varepsilon \indep g_0(X)$; hence $g_0 \in \cG_I$.

Since $\varepsilon \indep g_0(X)$ and $\E[\varepsilon]=0$, we have $\E[g_0(X)\varepsilon]=0$, and \Cref{ass:ortho} gives $\E[g(X)\varepsilon]=0$ for every $g \in \cG_I$. Hence, for any $g \in \cG_I$,
\begin{align*}
    \E\Big[\big(Y-g(X)\big)^2\Big] &=\E\Big[\big(Y- g_0(X)+g_0(X)-g(X)\big)^2\Big]\\
    &=\E\Big[\big(Y- g_0(X)\big)^2\Big] + \E\Big[\big(g_0- g(X)\big)^2\Big] + 2\E\Big[\big(Y- g_0(X)\big)\big(g_0- g(X)\big)\Big]\\
    &=\E\Big[\big(Y- g_0(X)\big)^2\Big] + \E\Big[\big(g_0- g(X)\big)^2\Big] + 2\E\Big[\varepsilon\big(g_0- g(X)\big)\Big]\\
    & = \E\Big[\big(Y- g_0(X)\big)^2\Big] + \E\Big[\big(g_0- g(X)\big)^2\Big]\\
    & \ge \E\Big[\big(Y- g_0(X)\big)^2\Big].
\end{align*}
Equality in the last step forces $\E\big[\big(g_0(X)-g(X)\big)^2\big] = 0$, that is, $g(X) = g_0(X)$ almost surely.
Thus $g_0$ attains the minimum, and every minimiser $g$ satisfies $\|g-g_0\|_{P_X}=0$.
\end{proof}
\subsection{Continuity of the penalty}\label{app:continuity}

The limiting statement of \Cref{prop:identification} and the consistency results of \Cref{sec:asymptotics} use the following continuity property of the penalty. Its only moment requirement is $\E[Y^2] < \infty$, which holds automatically under \Cref{ass:structural}, since $Y = g_0(X) + \varepsilon$ with $g_0(X) \in L^2(P_X)$ and $\E[\varepsilon^2] < \infty$.

\begin{lemma}[Continuity of $\cT$ in $L^2(P_X)$]\label{lem:continuity}
Assume $\E[Y^2] < \infty$. Let $g^\ast \in \cG$ and let $(g_k)_{k\ge 1}$ be any sequence in $\cG$ with $\|g_k - g^\ast\|_{P_X} \to 0$ as $k \to \infty$. Then $\cT(g_k) \to \cT(g^\ast)$.
\end{lemma}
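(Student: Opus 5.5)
We work directly with the defining formula \eqref{eq:cvm}, writing $\cT(g)=\E\big[\nu(g,F_g)^2\big]$, where $\nu(g,s)=\E\big[R_g\,\mathbbm{1}\{F_g\le s\}\big]-\bar\sigma_g^2 H_g(s)$ as in \Cref{ass:regularity}(d). Here the outer expectation is over an independent copy of $F_g$, equivalently over $P_g$. The plan is to show that $\nu(g_k,\cdot)$ converges to $\nu(g^\ast,\cdot)$ uniformly in $s$, and then that the integrating measures $P_{g_k}$ converge weakly to $P_{g^\ast}$. Continuity of the limit integrand then closes the argument.

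\emph{Step 1: risks and weighted risks.} Since $\|g_k-g^\ast\|_{P_X}\to0$ and $\E[Y^2]<\infty$, the residuals $Y-g_k(X)\to Y-g^\ast(X)$ in $L^2(P)$. Hence $R_{g_k}\to R_{g^\ast}$ in $L^1(P)$, using $|a^2-b^2|\le|a-b|\,|a+b|$ and Cauchy--Schwarz, and in particular $\bar\sigma_{g_k}^2\to\bar\sigma_{g^\ast}^2$. Also $F_{g_k}\to F_{g^\ast}$ in probability. Because $H_{g^\ast}$ is continuous (\Cref{def:Gclass}), Pólya's theorem gives $\sup_s|H_{g_k}(s)-H_{g^\ast}(s)|\to0$.

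\emph{Step 2: uniform convergence of $\nu$.} We split
\begin{align*}
\E\big[R_{g_k}\mathbbm{1}\{F_{g_k}\le s\}\big]-\E\big[R_{g^\ast}\mathbbm{1}\{F_{g^\ast}\le s\}\big]
=\E\big[(R_{g_k}-R_{g^\ast})\mathbbm{1}\{F_{g_k}\le s\}\big]+\E\big[R_{g^\ast}\big(\mathbbm{1}\{F_{g_k}\le s\}-\mathbbm{1}\{F_{g^\ast}\le s\}\big)\big].
\end{align*}
The first term is bounded by $\|R_{g_k}-R_{g^\ast}\|_{L^1}$ uniformly in $s$. The second term is the main obstacle, because indicators are discontinuous. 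The indicators can differ only if $|F_{g^\ast}-s|\le\delta$ or $|F_{g_k}-F_{g^\ast}|>\delta$. So the second term is at most
\[
\E\big[R_{g^\ast}\mathbbm{1}\{|F_{g^\ast}-s|\le\delta\}\big]+\E\big[R_{g^\ast}\mathbbm{1}\{|F_{g_k}-F_{g^\ast}|>\delta\}\big].
\]
The second piece tends to $0$ for fixed $\delta$ by uniform integrability of the single integrable variable $R_{g^\ast}$. For the first piece, the finite measure $A\mapsto\E[R_{g^\ast}\mathbbm{1}\{F_{g^\ast}\in A\}]$ is absolutely continuous with respect to $P_{g^\ast}$, which is atomless. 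Its distribution function $s\mapsto\E[R_{g^\ast}\mathbbm{1}\{F_{g^\ast}\le s\}]$ is therefore continuous and bounded, hence uniformly continuous on $\R$. Consequently $\sup_s\E[R_{g^\ast}\mathbbm{1}\{|F_{g^\ast}-s|\le\delta\}]\to0$ as $\delta\to0$. Letting $k\to\infty$ and then $\delta\to0$, and combining with Step 1, gives $\sup_s|\nu(g_k,s)-\nu(g^\ast,s)|\to0$.

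\emph{Step 3: changing the integrating measure.} We write
\[
\cT(g_k)-\cT(g^\ast)=\E\big[\nu(g_k,F_{g_k})^2-\nu(g^\ast,F_{g_k})^2\big]+\E\big[\nu(g^\ast,F_{g_k})^2-\nu(g^\ast,F_{g^\ast})^2\big].
\]
The functions $\nu$ are uniformly bounded by $2\sup_k\bar\sigma_{g_k}^2<\infty$, so the first term tends to $0$ by Step 2. For the second term, $s\mapsto\nu(g^\ast,s)^2$ is bounded and continuous by Step 2's continuity observation and continuity of $H_{g^\ast}$. Moreover $F_{g_k}\to F_{g^\ast}$ in distribution, so the term vanishes by the portmanteau theorem, or by dominated convergence along subsequences converging almost surely. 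Therefore $\cT(g_k)\to\cT(g^\ast)$.
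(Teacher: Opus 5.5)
Your proof is correct and has the same architecture as the paper's. You first show $\sup_s|\nu(g_k,s)-\nu(g^\ast,s)|\to0$, using that the $R_{g^\ast}$-weighted law of $F_{g^\ast}$ is atomless because $P_{g^\ast}$ is, and then you replace the argument $F_{g_k}$ by $F_{g^\ast}$ using boundedness and continuity of $\nu(g^\ast,\cdot)$.

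The differences are only local. For the weighted term you get the uniform bound directly from the $\delta$-splitting and uniform continuity of $s\mapsto\E[R_{g^\ast}\mathbbm{1}\{F_{g^\ast}\le s\}]$. The paper instead proves pointwise convergence under a tilted measure by truncation and then upgrades it with P\'olya's theorem. Your uniform-continuity step holds because this map is a continuous distribution function of a finite measure, so monotone with finite limits; boundedness and continuity alone would not suffice. In the final step you use weak convergence of $P_{g_k}$, where the paper uses convergence in probability together with uniform integrability.
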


\begin{proof}
For each $g \in \cG$ and $s \in \R$, define the population deviation
\begin{align*}
\nu(g,s) \coloneqq \E\big[R_g\, \mathbbm{1}\{F_g \le s\}\big] - \bar\sigma_g^2\, P(F_g \le s),
\end{align*}
so that $\cT(g) = \E\big[\nu\big(g, g(X)\big)^2\big]$ and $|\nu(g,s)| \le 2\bar\sigma_g^2$ for all $s$. Since $g_k \to g^\ast$ in $L^2(P_X)$, we have $g_k(X) \to g^\ast(X)$ in probability, and hence, for every $s$,
\begin{align}\label{eq:indicator_conv}
\mathbbm{1}\{g_k(X) \le s\} \to \mathbbm{1}\{g^\ast(X) \le s\} \quad\text{in probability},
\end{align}
because $P\big(g^\ast(X) = s\big) = 0$ by the continuity of $H_{g^\ast}$ (\Cref{def:Gclass}).

We first prove the uniform convergence $\sup_s |\nu(g_k,s) - \nu(g^\ast,s)| \to 0$. For any fixed $s$,
\begin{align*}
\big|\nu(g_k,s) - \nu(g^\ast,s)\big|
\le{}& \underbrace{\Big|\E\big[R_{g_k}\mathbbm{1}\{F_{g_k} \le s\}\big] - \E\big[R_{g^\ast}\mathbbm{1}\{F_{g^\ast} \le s\}\big]\Big|}_{\textcircled{1}} \\
&+ \underbrace{\big|\bar\sigma_{g_k}^2 - \bar\sigma_{g^\ast}^2\big|}_{\textcircled{2}}
+ \underbrace{\bar\sigma_{g_k}^2\, \big|P(F_{g_k} \le s) - P(F_{g^\ast} \le s)\big|}_{\textcircled{3}}.
\end{align*}
For $\textcircled{1}$, since $0 \le \mathbbm{1}\{F_{g_k} \le s\} \le 1$ pointwise,
\begin{align*}
\textcircled{1} \le \E\big|R_{g_k} - R_{g^\ast}\big| + \Big|\E\big[R_{g^\ast}\mathbbm{1}\{F_{g_k} \le s\}\big] - \E\big[R_{g^\ast}\mathbbm{1}\{F_{g^\ast} \le s\}\big]\Big|.
\end{align*}
Expanding $R_{g_k} - R_{g^\ast}$ and applying the Cauchy--Schwarz inequality,
\begin{align*}
\E\big|R_{g_k} - R_{g^\ast}\big|
&= \E\Big[\big|g^\ast(X) - g_k(X)\big|\, \big|2Y - g_k(X) - g^\ast(X)\big|\Big] \\
&\le \big\|g_k - g^\ast\big\|_{P_X} \Big(2\,\E[Y^2]^{1/2} + \E\big[g_k(X)^2\big]^{1/2} + \E\big[g^\ast(X)^2\big]^{1/2}\Big),
\end{align*}
which converges to $0$, the second factor being bounded since $\E[Y^2] < \infty$ and $g_k$ converges in $L^2(P_X)$. This bound does not depend on $s$, and it also gives $\textcircled{2} \le \E|R_{g_k} - R_{g^\ast}| \to 0$; in particular $\bar\sigma_{g_k}^2 \to \bar\sigma_{g^\ast}^2$.

For the remaining part of $\textcircled{1}$, define a tilted probability measure $\tilde{P}$ with density
\begin{align*}
\frac{\mathrm{d}\tilde{P}}{\mathrm{d}P} = \frac{R_{g^\ast}}{\bar\sigma_{g^\ast}^2},
\end{align*}
which is non-negative with $\E_P[R_{g^\ast}/\bar\sigma_{g^\ast}^2] = 1$. Under $\tilde{P}$,
\begin{align*}
\frac{\E\big[R_{g^\ast}\mathbbm{1}\{F_{g_k} \le s\}\big]}{\bar\sigma_{g^\ast}^2} = \tilde{P}\big(g_k(X) \le s\big)
\quad\text{and}\quad
\frac{\E\big[R_{g^\ast}\mathbbm{1}\{F_{g^\ast} \le s\}\big]}{\bar\sigma_{g^\ast}^2} = \tilde{P}\big(g^\ast(X) \le s\big),
\end{align*}
the distribution functions of $g_k(X)$ and $g^\ast(X)$ under $\tilde{P}$.

For fixed $s$, put
\begin{align*}
D_{k,s}=\big|\mathbbm{1}\{g_k(X)\leq s\}-\mathbbm{1}\{g^\ast(X)\leq s\}\big|.
\end{align*}
By \eqref{eq:indicator_conv}, $D_{k,s}\to0$ in probability. Since $0\leq D_{k,s}\leq1$, it follows that $\E[D_{k,s}]\to0$. Moreover, for every $M>0$,
\begin{align*}
\E[R_{g^\ast}D_{k,s}]
\leq M\E[D_{k,s}]
+\E\big[R_{g^\ast}\mathbbm{1}\{R_{g^\ast}>M\}\big].
\end{align*}
Taking the limit superior as $k\to\infty$ gives
\begin{align*}
\limsup_{k\to\infty}\E[R_{g^\ast}D_{k,s}]
\leq \E\big[R_{g^\ast}\mathbbm{1}\{R_{g^\ast}>M\}\big].
\end{align*}
Since $R_{g^\ast}\in L^1(P)$, the right-hand side tends to $0$ as $M\to\infty$. Letting first $k\to\infty$ and then $M\to\infty$ gives $\E[R_{g^\ast}D_{k,s}]\to0$. Hence $\tilde P(g_k(X)\leq s)\to\tilde P(g^\ast(X)\leq s)$ for every $s$.
Since $\tilde{P}$ is absolutely continuous with respect to $P$ and $P(g^\ast(X) = s) = 0$ for every $s$, the distribution of $g^\ast(X)$ under $\tilde{P}$ has no atoms. Its distribution function is therefore continuous, and P\'olya's theorem \citep[Lemma~2.11]{van1998asymptotic} yields
\begin{align*}
\sup_s \Big|\E\big[R_{g^\ast}\mathbbm{1}\{F_{g_k} \le s\}\big] - \E\big[R_{g^\ast}\mathbbm{1}\{F_{g^\ast} \le s\}\big]\Big| \to 0.
\end{align*}
For $\textcircled{3}$,
$|P(F_{g_k}\leq s)-P(F_{g^\ast}\leq s)|\leq\E[D_{k,s}]\to0$ for every $s$, and $H_{g^\ast}$ is continuous. P\'olya's theorem again gives $\sup_s |P(F_{g_k} \le s) - P(F_{g^\ast} \le s)| \to 0$, while the factor $\bar\sigma_{g_k}^2$ is bounded. Combining the three terms proves the uniform convergence.

By the triangle inequality,
\begin{align*}
\Big|\nu\big(g_k, g_k(X)\big) - \nu\big(g^\ast, g^\ast(X)\big)\Big|
\le \sup_s \big|\nu(g_k,s) - \nu(g^\ast,s)\big| + \Big|\nu\big(g^\ast, g_k(X)\big) - \nu\big(g^\ast, g^\ast(X)\big)\Big|.
\end{align*}
The map $s \mapsto \nu(g^\ast,s) = \bar\sigma_{g^\ast}^2\big\{\tilde{P}(g^\ast(X) \le s) - H_{g^\ast}(s)\big\}$ is bounded and continuous, being a difference of two continuous distribution functions. Since $g_k(X) \to g^\ast(X)$ in probability, the second term converges to $0$ in probability, and with the first term the whole difference does. The bounds $|\nu(g_k,\cdot)| \le 2\bar\sigma_{g_k}^2$ and $|\nu(g^\ast,\cdot)| \le 2\bar\sigma_{g^\ast}^2$, together with $\bar\sigma_{g_k}^2 \to \bar\sigma_{g^\ast}^2$, make the squared variables uniformly bounded and hence uniformly integrable. Convergence in probability and uniform integrability give convergence in $L^1$, so
\begin{align*}
\E\Big[\nu\big(g_k, g_k(X)\big)^2\Big] \to \E\Big[\nu\big(g^\ast, g^\ast(X)\big)^2\Big],
\end{align*}
that is, $\cT(g_k) \to \cT(g^\ast)$.
\end{proof}

\subsection{Proof of \Cref{prop:identification}}\label{app:identification}
\begin{proof}
Since \eqref{eq:estimator} need not have a unique minimiser, fix one selection $\lambda \mapsto g_\lambda$ throughout; the argument applies to any such selection. Under \Cref{ass:structural} the prediction error of $g_0$ is $\varepsilon$, independent of $g_0(X)$, so $\sigma_{g_0}^2$ is constant and $g_0 \in \cG_I$; hence $\cT(g_0) = 0$ by \Cref{prop:completeness}, and $\E\big[\big(Y - g_0(X)\big)^2\big] = \E[\varepsilon^2] = \sigma^2$. Now let $\tilde g \in \cG$ be an $L^2(P_X)$-limit point of the selected family, and let $(g_{\lambda_k})_{k\ge 1}$ be a subsequence with $g_{\lambda_k} \to \tilde g$ in $L^2(P_X)$ and $\lambda_k \to \infty$. By the optimality of $g_\lambda$ in \eqref{eq:estimator}, comparing it with $g_0$,
\begin{align*}
\E\big[\big(Y - g_\lambda(X)\big)^2\big] + \lambda\, \cT(g_\lambda) \le \E\big[\big(Y - g_0(X)\big)^2\big] + \lambda\,\cT(g_0) = \sigma^2.
\end{align*}
As $\cT(g_\lambda) \ge 0$ and $\E\big[\big(Y - g_\lambda(X)\big)^2\big] \ge 0$, this yields both
\begin{align*}
\cT(g_\lambda) \le \frac{\sigma^2}{\lambda} \quad\xrightarrow{\ \lambda\to\infty\ }\quad 0
\qquad\text{and}\qquad
\E\big[\big(Y - g_\lambda(X)\big)^2\big] \le \sigma^2 .
\end{align*}
Along the subsequence $\cT(g_{\lambda_k}) \to 0$, and since $g_{\lambda_k} \to \tilde g$ in $L^2(P_X)$, \Cref{lem:continuity} gives $\cT(\tilde g) = \lim_k \cT(g_{\lambda_k}) = 0$; hence $\tilde g \in \cG_I$ by \Cref{prop:completeness}.

It remains to show $\tilde g = g_0$. The risk is continuous along the subsequence. With $R_g = \big(Y-g(X)\big)^2$, the identity $a^2-b^2=(a-b)(a+b)$ and the Cauchy--Schwarz inequality give
\begin{align*}
\Big|\E[R_{g_{\lambda_k}}] - \E[R_{\tilde g}]\Big|
\le \E\big|R_{g_{\lambda_k}} - R_{\tilde g}\big|
&= \E\Big[\big|\tilde g(X) - g_{\lambda_k}(X)\big|\,\big|2Y - g_{\lambda_k}(X) - \tilde g(X)\big|\Big] \\
&\le \big\|g_{\lambda_k} - \tilde g\big\|_{P_X}\,\big\|2Y - g_{\lambda_k}(X) - \tilde g(X)\big\|_{P},
\end{align*}
and the second factor is bounded since $\E[Y^2] < \infty$ and $g_{\lambda_k} \to \tilde g$ in $L^2(P_X)$, so $\E[R_{g_{\lambda_k}}] \to \E[R_{\tilde g}]$ and therefore $\E\big[\big(Y - \tilde g(X)\big)^2\big] \le \sigma^2 = \E\big[\big(Y - g_0(X)\big)^2\big]$.
Since $\tilde g\in\cG_I$, the reverse risk inequality follows from \Cref{prop:unique}. Thus $\tilde g$ is a minimiser and $\|\tilde g-g_0\|_{P_X}=0$.
\end{proof}

\subsection{Proof of \Cref{prop:misspecified}}\label{app:misspecified}
\begin{proof}
Fix $g^\dagger\in\cG_I$. Since $\cT(g^\dagger)=0$, optimality of $g_\lambda$ gives
\begin{align*}
\cL(g_\lambda)+\lambda\cT(g_\lambda)
\leq \cL(g^\dagger).
\end{align*}
Hence $\cT(g_\lambda)\leq\cL(g^\dagger)/\lambda\to0$ and $\cL(g_\lambda)\leq\cL(g^\dagger)$. Let $\tilde g\in\cG$ be an $L^2(P_X)$-limit point. By \Cref{lem:continuity}, $\cT(\tilde g)=0$, so $\tilde g\in\cG_I$. The risk is continuous under the same convergence, as shown in \Cref{app:identification}; therefore $\cL(\tilde g)\leq\cL(g^\dagger)$. Since $g^\dagger$ was arbitrary, $\tilde g\in\argmin_{g\in\cG_I}\cL(g)$.
\end{proof}

\subsection{Proof of \Cref{thm:uniform_t}}\label{app:uniform_t}
\begin{proof}[Proof of part (a)]
    For each $g \in \cG$ and $s \in \R$, define 
\begin{align*}
\nu(g,s) &\coloneqq \E\big[R_g\,\mathbbm{1}\{F_g \le s\}\big] - \bar{\sigma}_g^2\,P(F_g \le s), \\
\nu_n(g,s) &\coloneqq P_n\big[R_g\,\mathbbm{1}\{F_g \le s\}\big] - \bar{R}_g\,P_n\big[\mathbbm{1}\{F_g\leq s\}\big],
\end{align*}
where $\bar{R}_g = P_n[R_g]$. Thus, 
\begin{align*}
    \hat{\cT}_n(g) &= \frac{1}{n}\sum_{i=1}^n \nu_n\big(g,\, g(X_i)\big)^2, \\
    \cT(g) &= \E_P\Big[\nu\big(g,\, g(X)\big)^2\Big].
\end{align*}
Decompose $\hat{\cT}_n(g)-\cT(g)$:
\begin{align*}
\hat{\cT}_n(g) - \cT(g)
&= \underbrace{\frac{1}{n}\sum_{i=1}^{n} \Big[\nu_n\big(g,g(X_i)\big)^2 - \nu\big(g,g(X_i)\big)^2\Big]}_{\textcircled{1}}
+ \underbrace{\frac{1}{n}\sum_{i=1}^{n} \nu\big(g,g(X_i)\big)^2 - \E_P\Big[\nu\big(g,g(X)\big)^2\Big]}_{\textcircled{2}}.
\end{align*}
For \textcircled{1}:
\begin{align*}
|\textcircled{1}|\le\frac{1}{n}\sum_{i=1}^{n}\bigg\{\Big|\nu_n\big(g, g(X_i)\big) - \nu\big(g, g(X_i)\big)\Big| \cdot
\Big|\nu_n\big(g, g(X_i)\big)+\nu\big(g, g(X_i)\big)\Big|\bigg\}.
\end{align*}
Bounding pointwise by suprema:
\begin{align*}
    \sup_{g\in \cG}|\textcircled{1}| \leq \underbrace{\sup_{g\in\cG,s\in\R}\Big|\nu_n(g,s)-\nu(g,s)\Big|}_{A_n} \cdot 2 \underbrace{\sup_{g\in\cG,s\in\R}\bigg(\Big|\nu_n(g,s)\Big|+\Big|\nu(g,s)\Big|\bigg)}_{B_n}.
\end{align*}
For $A_n$, further decompose $\nu_n(g,s)-\nu(g,s)$:
\begin{align*}
    \nu_n(g,s) - \nu(g,s) &= \Big\{P_n\big[R_g \mathbbm{1}\{g(X) \le s\}\big] - \E\big[R_g \mathbbm{1}\{g(X) \le s\}\big]\Big\} \nonumber\\
    &~~~~- \Big\{P_n[R_g] - \E[R_g]\Big\} \cdot P_n\big[\mathbbm{1}\{g(X) \le s\}\big]\nonumber\\
    &~~~~- \bar\sigma_g^2 \cdot \Big\{P_n\big[\mathbbm{1}\{g(X) \le s\}\big] - \E\big[\mathbbm{1}\{g(X) \le s\}\big]\Big\}.
\end{align*}
Each of the three empirical-process terms is indexed by a function class with finite pseudo-dimension and an integrable envelope (\Cref{lem:pdim,lem:envelop}), so \Cref{lem:PGC_VC} gives:
\begin{itemize}
    \item $\sup_{g\in\cG,s\in \R} \Big|P_n\big[R_g \mathbbm{1}\{g(X) \le s\}\big] - \E\big[R_g \mathbbm{1}\{g(X) \le s\}\big]\Big|\overset{P}{\to}0$;
    \item $\sup_{g\in\cG}\big|P_n[R_g] - \E[R_g]\big| \overset{P}{\to}0$;
    \item $\sup_{g\in\cG,s\in\R}\Big|P_n\big[\mathbbm{1}\{g(X) \le s\}\big] - \E\big[\mathbbm{1}\{g(X) \le s\}\big]\Big|\overset{P}{\to}0$.
\end{itemize}
The remaining factors are bounded. Indeed, $P_n\big[\mathbbm{1}\{g(X) \le s\}\big] \leq 1$ and $\bar{\sigma}_g^2 \le \E\Big[\big(|Y|+G\big)^2\Big]<\infty$. Combining these results yields $A_n\overset{P}{\to}0$.

For $B_n$, for every $g \in \cG$ and $s \in \R$,
\begin{align*}
    \big|\nu(g, s)\big| &\le \Big|\E\big[R_g\,\mathbbm{1}\{F_g \le s\}\big]\Big| + \bar\sigma_g^2P(F_g \le s) \\
    &\le \E[R_g] + \bar\sigma_g^2 \\
    &= 2\bar\sigma_g^2 \\
    &= 2\E\big[\big(Y - g(X)\big)^2\big] \\
    &\le 2\E\big[\big(|Y| + G(X)\big)^2\big] <\infty,
\end{align*}
where the last step uses \Cref{lem:envelop}(b), and
\begin{align*}
    \big|\nu_n(g,s)\big| \le P_n\big[R_g\, \mathbbm{1}\{g(X) \le s\}\big] + \bar{R}_g \le 2\,P_n\big[\big(|Y|+G(X)\big)^2\big],
\end{align*}
and $P_n\big[\big(|Y|+G(X)\big)^2\big] \overset{P}{\to} \E\big[\big(|Y|+G(X)\big)^2\big] < \infty$ by the weak law of large numbers and \Cref{ass:regularity}(b). Hence $B_n = O_P(1)$. Combining it with  $A_n\overset{P}{\to}0$, we get
\begin{align*}
    \sup_{g\in\cG}\bigg|\frac{1}{n}\sum_{i=1}^{n} \Big[\nu_n\big(g,g(X_i)\big)^2 - \nu\big(g,g(X_i)\big)^2\Big]\bigg|\overset{P}{\to}0.
\end{align*}

\textcircled{2} is an empirical-process deviation for the function class $\{\nu\big(g,g(x)\big)^2 : g \in \cG\}$. Under \Cref{ass:regularity}(d), we have:
\begin{align*}
    \sup_{g\in\cG} \Big|P_n \big[\nu\big(g,g(x)\big)^2\big] - \E\big[\nu\big(g,g(x)\big)^2\big]\Big| \overset{P}{\to} 0.
\end{align*}
Therefore, we obtain $\sup_{g\in\cG}\Big|\hat{\cT}_n(g)-\cT(g)\Big|\overset{P}{\to}0$.
The second empirical-process bound used in the control of $A_n$ also gives
\begin{align*}
\sup_{g\in\cG}\big|P_n[R_g]-\E[R_g]\big|\overset{P}{\longrightarrow}0.
\end{align*}
For each fixed $\lambda\geq0$, the triangle inequality now gives
\begin{align*}
&\sup_{g\in\cG}\Big|
P_n[R_g]+\lambda\hat\cT_n(g)-\cL_{\rm CvM}(g;\lambda)
\Big|\\
&\qquad\leq
\sup_{g\in\cG}\big|P_n[R_g]-\E[R_g]\big|
+\lambda\sup_{g\in\cG}\big|\hat\cT_n(g)-\cT(g)\big|
\overset{P}{\longrightarrow}0.
\end{align*}
\end{proof}
\begin{proof}[Proof of part (b)]
By \Cref{lem:pdim,lem:envelop}, the class $\cF_2=\{(x,y)\mapsto(y-g(x))^2:g\in\cG\}$ has pseudo-dimension $O(V)$ and the $L^2(P)$ envelope $(|Y|+G)^2$. Hence \Cref{lem:max_ineq} and Markov's inequality give
\begin{align*}
\sup_{g\in\cG}\Big|P_n\big[(Y-g(X))^2\big]-\cL(g)\Big|
=O_P\big(\sqrt{V/n}\big).
\end{align*}

For the convergence rate of $\sup_{g\in\cG}\big|\hat\cT_n(g) - \cT(g)\big|$, we reuse the decomposition $\hat\cT_n(g) - \cT(g) = \textcircled{1} + \textcircled{2}$ from the proof of part (a), where
\begin{align*}
    \textcircled{1} = \frac{1}{n}\sum_{i=1}^n\Big[\nu_n(g,g(X_i))^2 - \nu(g,g(X_i))^2\Big],\quad
    \textcircled{2} = \frac{1}{n}\sum_{i=1}^n\nu(g,g(X_i))^2 - \E\big[\nu(g,g(X))^2\big].
\end{align*}

For \textcircled{1}, following the same factorisation as in part (a), $\sup_{g\in \cG}|\textcircled{1}| \le 2\,A_n\,B_n$ with $A_n = \sup_{g,s}|\nu_n(g,s)-\nu(g,s)|$ and $B_n = \sup_{g,s}\big(|\nu_n(g,s)|+|\nu(g,s)|\big)$. The proof of part (a) establishes $B_n = O_P(1)$. For $A_n$, the decomposition 
\begin{align*}
    \nu_n(g,s)-\nu(g,s)
    &= \big\{P_n[R_g\,\mathbbm{1}\{F_g\le s\}] - \E[R_g\,\mathbbm{1}\{F_g\le s\}]\big\}\\
    &\quad - \big\{P_n[R_g] - \E[R_g]\big\}\,P_n[\mathbbm{1}\{F_g\le s\}]\\
    &\quad - \bar\sigma_g^2\,\big\{P_n[\mathbbm{1}\{F_g\le s\}] - P(F_g\le s)\big\},
\end{align*}
indexed by the three classes $\cF_1, \cF_2, \cF_3$ of \Cref{lem:pdim}. Each of them has pseudo-dimension $O(V)$. The corresponding envelopes are $(|Y|+G)^2$, $(|Y|+G)^2$ and $1$, all in $L^2(P)$ by \Cref{ass:regularity}(b). Applying \Cref{lem:max_ineq} to each class and combined with $P_n[\mathbbm{1}\{F_g\le s\}]\le 1$ and $\bar\sigma_g^2\le\E\big[\big(|Y|+G(X)\big)^2\big]$, we obtain
\begin{align*}
    \E\Big[\sup_{g,s}\big|P_n[R_g\,\mathbbm{1}\{F_g\le s\}] - \E[R_g\,\mathbbm{1}\{F_g\le s\}]\big|\Big] &= O\big(\sqrt{V/n}\,\big), \\
    \E\Big[\sup_{g,s}\big|\big\{P_n[R_g] - \E[R_g]\big\}\,P_n[\mathbbm{1}\{F_g\le s\}]\big|\Big] &= O\big(\sqrt{V/n}\,\big), \\
    \E\Big[\sup_{g,s}\big|\bar\sigma_g^2\,\big\{P_n[\mathbbm{1}\{F_g\le s\}] - P(F_g\le s)\big\}\big|\Big] &= O\big(\sqrt{V/n}\,\big) .
\end{align*}
Hence  $\E[A_n] =  O\big(\sqrt{V/n}\,\big)$. By Markov's inequality, $A_n = O_P\big(\sqrt{V/n}\,\big)$. Thus, $\sup_{g\in \cG}|\textcircled{1}|\leq 2A_nB_n = O_P\big(\sqrt{V/n}\,\big)$.

For $\textcircled{2}$,  \Cref{ass:reg_rate} assumes that the class $\{x\mapsto\nu(g,g(x))^2:g\in\cG\}$ has finite pseudo-dimension $V_2$. Since $|\nu(g,s)| \le 2\bar\sigma_g^2 \le 2\,\E\big[\big(|Y|+G(X)\big)^2\big]$ for every $g \in \cG$ and $s \in \R$, the class has the constant envelope $4\Big(\E\big[\big(|Y|+G(X)\big)^2\big]\Big)^2$, finite by \Cref{ass:regularity}(b). \Cref{lem:max_ineq} gives
\begin{align*}
    \E\big[\sup_{g\in\cG}|\textcircled{2}|\big] = O\big(\sqrt{V_2/n}\,\big) = O\big(\sqrt{V/n}\,\big).
\end{align*}
By Markov's inequality, $\sup_{g\in\cG}|\textcircled{2}| = O_P\big(\sqrt{V/n}\,\big)$.

Combining the rates for $\sup_{g\in \cG}|\textcircled{1}|$ and $\sup_{g\in \cG}|\textcircled{2}|$, we obtain $\sup_{g\in\cG}\big|\hat\cT_n(g) - \cT(g)\big|=O_P\big(\sqrt{V/n}\,\big)$.

For every fixed $\lambda\geq0$, the triangle inequality and the two rates above yield
\begin{align*}
\sup_{g\in\cG}\Big|
P_n[(Y-g(X))^2]+\lambda\hat\cT_n(g)
-\cL_{\rm CvM}(g;\lambda)
\Big|
=O_P\big((1+\lambda)\sqrt{V/n}\big).
\end{align*}
\end{proof}

\subsection{Proof of \Cref{prop:fixed_lambda}}\label{app:fixed_lambda}
\begin{proof}
Let
\begin{align*}
\Delta_{n,\lambda}
\coloneqq
\sup_{g\in\cG}
\Big|
P_n[(Y-g(X))^2]+\lambda\hat\cT_n(g)
-\cL_{\rm CvM}(g;\lambda)
\Big|.
\end{align*}
For fixed $\lambda$, \Cref{thm:uniform_t}(a) gives $\Delta_{n,\lambda}=o_P(1)$.

For any $g_\lambda\in\cG_\lambda^\ast$, empirical optimality gives
\begin{align*}
0
\leq
\cL_{\rm CvM}(\hat g_{n,\lambda};\lambda)
-\cL_{\rm CvM}(g_\lambda;\lambda)
\leq 2\Delta_{n,\lambda}.
\end{align*}
This proves the objective-gap claim.

Define $d_{P_X}(g,A)=\inf_{f\in A}\|g-f\|_{P_X}$. By \Cref{lem:compact}, $\cG$ is compact in $L^2(P_X)$. The risk and $\cT$ are continuous on $\cG$, the latter by \Cref{lem:continuity}, so $\cG_\lambda^\ast$ is nonempty and compact. Fix $\varepsilon>0$ and consider
\begin{align*}
K_\varepsilon
=
\{g\in\cG:d_{P_X}(g,\cG_\lambda^\ast)\geq\varepsilon\}.
\end{align*}
If $K_\varepsilon$ is nonempty, compactness and continuity give
\begin{align*}
c_\varepsilon
\coloneqq
\min_{g\in K_\varepsilon}\cL_{\rm CvM}(g;\lambda)
-\min_{g\in\cG}\cL_{\rm CvM}(g;\lambda)
>0.
\end{align*}
Hence
\begin{align*}
\Pr\!\big(d_{P_X}(\hat g_{n,\lambda},\cG_\lambda^\ast)\geq\varepsilon\big)
\leq
\Pr(2\Delta_{n,\lambda}\geq c_\varepsilon)
\longrightarrow0.
\end{align*}
The claim is immediate if $K_\varepsilon$ is empty. When $\cG_\lambda^\ast=\{g_\lambda\}$, the distance to the set equals $\|\hat g_{n,\lambda}-g_\lambda\|_{P_X}$.
\end{proof}

\subsection{Proof of \Cref{thm:consistency_g}}\label{app:consistency_g}
\begin{proof}[Proof of part (a)]
We first show that the penalty at $\hat{g}_{n,\lambda_n}$ converges to $0$, and then turn this into consistency through the functional $M$.

\emph{Step 1: $\cT(\hat{g}_{n,\lambda_n}) \overset{P}{\to} 0$.} The estimator $\hat{g}_{n,\lambda_n}$ minimises the empirical objective, so comparing to $g_0$:
\begin{align}\label{eq:objective_inequality}
    P_n\Big[\big(Y - \hat{g}_{n,\lambda_n}(X)\big)^2\Big] + \lambda_n\hat{\cT}_n(\hat{g}_{n,\lambda_n})\le P_n\Big[\big(Y - g_0(X)\big)^2\Big] + \lambda_n\hat{\cT}_n(g_0).
\end{align}
 Dividing both sides by $\lambda_n > 0$:
\begin{align}
    \hat{\cT}_n\big(\hat{g}_{n,\lambda_n}\big) &\le \frac{1}{\lambda_n }P_n\Big[\big(Y - \hat{g}_{n,\lambda_n}(X)\big)^2\Big] + \hat{\cT}_n(\hat{g}_{n,\lambda_n}) \nonumber\\
    &\le \underbrace{\frac{P_n\Big[\big(Y - g_0(X)\big)^2\Big]}{\lambda_n}}_{\textcircled{1}} +\underbrace{\hat\cT_n(g_0)}_{\textcircled{2}}.\label{eq:penalty_division}
\end{align}
Under \Cref{ass:structural}, $\E[\varepsilon^2] = \sigma^2 < \infty$, so the weak law of large numbers gives $P_n\Big[\big(Y - g_0(X)\big)^2\Big]\overset{P}{\to} \sigma^2$. As $\lambda_n \to \infty$, 
 $\textcircled{1} =o_P(1)$.

For \textcircled{2}, $g_0 \in \cG_I$ under \Cref{ass:structural}, so $\cT(g_0) = 0$ by \Cref{prop:completeness}, and \Cref{thm:uniform_t}(a) gives $\hat{\cT}_n(g_0)\overset{P}{\to} 0$. Combining \textcircled{1} and \textcircled{2} in \eqref{eq:penalty_division} gives $\hat{\cT}_n(\hat{g}_{n,\lambda_n})\overset{P}{\to} 0$.

It remains to pass from the empirical penalty to the population one. Adding and subtracting,
\begin{align*}
    0 \le \cT(\hat{g}_{n,\lambda_n}) \le \hat{\cT}_n(\hat{g}_{n,\lambda_n}) + \sup_{g\in\cG}\big|\cT(g)-\hat{\cT}_n(g)\big|.
\end{align*}
The first term on the right converges to $0$ in probability by the preceding paragraph, and the second by \Cref{thm:uniform_t}(a). The upper bound is therefore $o_P(1)$ while the lower bound is $0$, so $\cT(\hat{g}_{n,\lambda_n}) \overset{P}{\to} 0$.

\emph{Step 2: consistency.} Recall the risk $\cL(g) = \E\big[\big(Y - g(X)\big)^2\big]$ and the functional $M(g) = \cT(g) + \big(\cL(g) - \cL(g_0)\big)_+$ from \Cref{sec:asymptotics}.  By \eqref{eq:objective_inequality} and dropping the non-negative term $\lambda_n\hat{\cT}_n(\hat{g}_{n,\lambda_n})$,
\begin{align}\label{eq:convg_emp}
P_n\Big[\big(Y - \hat{g}_{n,\lambda_n}(X)\big)^2\Big]
\le
P_n\Big[\big(Y - g_0(X)\big)^2\Big] + \lambda_n \hat{\cT}_n(g_0).
\end{align}
Let $\Delta_n \coloneqq \sup_{g \in \cG}\big|P_n[(Y-g(X))^2]-\cL(g)\big|$. By \Cref{thm:uniform_t}(a) with $\lambda=0$, $\Delta_n\overset{P}{\to}0$. Combining this with \eqref{eq:convg_emp},
\begin{align}
\label{eq:basic_inequality_derivation}
\cL(\hat{g}_{n,\lambda_n})
&\le P_n\Big[\big(Y-\hat{g}_{n,\lambda_n}(X)\big)^2\Big] + \Delta_n
\le P_n\Big[\big(Y-g_0(X)\big)^2\Big] + \lambda_n \hat{\cT}_n(g_0) + \Delta_n \nonumber\\
&\le \cL(g_0) + 2\Delta_n + \lambda_n \hat{\cT}_n(g_0).
\end{align}
The residual of $g_0$ is $\varepsilon = Y - g_0(X)$, so $\E[\varepsilon^4] \le 8\,\E[Y^4] + 8\,\E\big[G(X)^4\big] < \infty$ by \Cref{ass:regularity}(b), and hence $\Var(R_{g_0}) = \Var(\varepsilon^2) < \infty$. \Cref{cor:null_general} applied at the fixed function $g_0$ therefore gives $\hat{\cT}_n(g_0)=O_P(1/n)$, and $\lambda_n = o(n)$ yields $\lambda_n \hat{\cT}_n(g_0) = o_P(1)$. Hence
\begin{align}\label{eq:convg_L}
\cL(\hat{g}_{n,\lambda_n}) \le \cL(g_0) + o_P(1).
\end{align}
Therefore,
\begin{align*}
    \big(\cL(\hat{g}_{n,\lambda_n})-\cL(g_0)\big)_+ \overset{P}{\to} 0.
\end{align*}

Meanwhile, by Step 1,
\begin{align}\label{eq:convg_T}
\cT(\hat{g}_{n,\lambda_n}) \overset{P}{\to} 0.
\end{align}

Fix $\varepsilon > 0$ and let
\begin{align*}
K_\varepsilon \coloneqq \Big\{g \in \cG : \|g - g_0\|_{P_X} \ge \varepsilon\Big\}.
\end{align*}
Since $\cG$ is compact in $L^2(P_X)$ by \Cref{lem:compact}, whose conditions hold under \Cref{ass:regularity}(a)--(c), and $K_\varepsilon$ is closed in $\cG$, $K_\varepsilon$ is itself compact. By \Cref{lem:continuity}, whose moment condition $\E[Y^2] < \infty$ holds under \Cref{ass:structural}, $\cT$ is continuous on $\cG$ in $L^2(P_X)$. The risk $\cL$ is also continuous in $L^2(P_X)$: for any $g,g'\in\cG$,
\begin{align*}
\big|\cL(g) - \cL(g')\big|
&=
\bigg|\E\Big[\big(g'(X)-g(X)\big)\big(2Y - g(X) - g'(X)\big)\Big]\bigg|\\
&\le \|g-g'\|_{P_X} \cdot \big\|2Y - g - g'\big\|_{P}\\
&\le\|g-g'\|_{P_X}\cdot \big\|2|Y| + |g(X)| + |g'(X)|\big\|_{P} \\
&  \le 2\|g-g'\|_{P_X} \cdot \Big(\E\big[(|Y|+G(X))^2\big]\Big)^{1/2},
\end{align*}
where we use Cauchy--Schwarz and \Cref{ass:regularity}(b). Therefore $M$ is continuous on the compact $K_\varepsilon$ and attains its minimum on $K_\varepsilon$.

We claim that $M(g) > 0$ on $K_\varepsilon$. If not, $M(g) = 0$ for some $g\in K_\varepsilon$. Then both $\cT(g) = 0$ and $\cL(g) \le \cL(g_0)$. By \Cref{prop:completeness}, $\cT(g) = 0$ implies $g \in \cG_I$.
By \Cref{prop:unique}, $\cL(g)\leq\cL(g_0)$ then gives $\|g-g_0\|_{P_X}=0$, contradicting $g\in K_\varepsilon$. Hence
\begin{align*}
c_\varepsilon \coloneqq \min_{g \in K_\varepsilon} M(g) > 0.
\end{align*}

If $\|\hat{g}_{n,\lambda_n} - g_0\|_{P_X} \ge \varepsilon$, then $\hat{g}_{n,\lambda_n} \in K_\varepsilon$, so $M(\hat{g}_{n,\lambda_n}) \ge c_\varepsilon$. Therefore
\begin{align*}
\Pr\big(\|\hat{g}_{n,\lambda_n} - g_0\|_{P_X} \ge \varepsilon\big)
\le
\Pr\big(M(\hat{g}_{n,\lambda_n}) \ge c_\varepsilon\big).
\end{align*}
By \eqref{eq:convg_L} and \eqref{eq:convg_T},
\begin{align*}
M(\hat{g}_{n,\lambda_n}) = \cT(\hat{g}_{n,\lambda_n}) + \big(\cL(\hat{g}_{n,\lambda_n})-\cL(g_0)\big)_+ \overset{P}{\to} 0,
\end{align*}
so $\Pr\big(M(\hat{g}_{n,\lambda_n})\ge c_\varepsilon\big) \to 0$. As $\varepsilon>0$ is arbitrary, $\|\hat g_{n,\lambda_n}-g_0\|_{P_X}^2 \to 0$ in probability.
\end{proof}

\begin{proof}[Proof of part (b)]
    Denote the two uniform fluctuations
\begin{align*}
    \Delta_n \coloneqq \sup_{g\in\cG}\Big|P_n\big[\big(Y-g(X)\big)^2\big] - \cL(g)\Big|,\quad
    \delta_n \coloneqq \sup_{g\in\cG}\big|\hat\cT_n(g) - \cT(g)\big|.
\end{align*}
By \Cref{thm:uniform_t}(b), the criterion rate at $\lambda=0$ gives $\Delta_n=O_P\big(\sqrt{V/n}\big)$, while the penalty rate gives $\delta_n=O_P\big(\sqrt{V/n}\big)$. By \Cref{cor:null_general}, $\hat\cT_n(g_0) = O_P(1/n)$. By \eqref{eq:basic_inequality_derivation} in the proof of part (a), $\cL(\hat{g}_{n,\lambda_n}) - \cL(g_0) \le 2\Delta_n + \lambda_n\hat{\cT}_n(g_0)$, hence
\begin{align}\label{eq:rate_L}
    \big(\cL(\hat{g}_{n,\lambda_n}) - \cL(g_0)\big)_+ \le 2\Delta_n + \lambda_n\,\hat\cT_n(g_0) = O_P\big(\sqrt{\frac{V}{n}} + \frac{\lambda_n}{n}\big).
\end{align}
Returning to \eqref{eq:penalty_division} from part (a), Step 1 gives $P_n\big[\big(Y-g_0(X)\big)^2] = O_P(1)$, thus 
\begin{align*}
    \hat\cT_n(\hat{g}_{n,\lambda_n})  = O_P\Big(\frac{1}{\lambda_n}+\frac{1}{n}\Big)
\end{align*}
Together with $\delta_n = O_P\big(\sqrt{V/n}\big)$, we obtain
\begin{align}\label{eq:rate_T}
    \cT(\hat{g}_{n,\lambda_n}) \le \hat\cT_n(\hat{g}_{n,\lambda_n})  + O_P\big(\sqrt{V/n}\,\big) = O_P\Big(\frac{1}{\lambda_n} + \frac{1}{n} + \sqrt{\frac{V}{n}}\Big) = O_P\Big(\frac{1}{\lambda_n} + \sqrt{\frac{V}{n}}\Big).
\end{align}
Adding \eqref{eq:rate_L} and \eqref{eq:rate_T}, we have
\begin{align}\label{eq:rate_M}
    M(\hat{g}_{n,\lambda_n}) = O_P\Big(\frac{1}{\lambda_n} + \sqrt{\frac{V}{n}}+\frac{\lambda_n}{n}\Big)
\end{align}
\Cref{ass:curvature} states that there exist $c_0,\rho > 0$ such that
\begin{align*}
    M(g) \ge c_0\,\|g - g_0\|_{P_X}^2 \qquad \text{for every } g \in \cG \text{ with } \|g - g_0\|_{P_X} \le \rho.
\end{align*}

Define the event $E_n \coloneqq \{\|\hat{g}_{n,\lambda_n} - g_0\|_{P_X} \le \rho\}$ and set $a_n\coloneqq \lambda_n^{-1} + \sqrt{V/n} + \lambda_n/n$. By part (a), $\|\hat{g}_{n,\lambda_n} - g_0\|_{P_X} \xrightarrow{P} 0$, so as $n\to\infty$,
\begin{align}\label{eq:E_n_consistent}
    \Pr(E_n^c) \longrightarrow 0.
\end{align}
On $E_n$, \Cref{ass:curvature} applied at $g = \hat{g}_{n,\lambda_n}$ yields $c_0\|\hat{g}_{n,\lambda_n} - g_0\|_{P_X}^2 \le M(\hat{g}_{n,\lambda_n})$, hence
\begin{align}\label{eq:curvature_event}
    \mathbbm{1}_{E_n}\,\|\hat{g}_{n,\lambda_n} - g_0\|_{P_X}^2 \le c_0^{-1}\,M(\hat{g}_{n,\lambda_n}) \qquad \text{a.s.}
\end{align}

We now show $\|\hat{g}_{n,\lambda_n} - g_0\|_{P_X}^2 = O_P(a_n)$. Fix $\varepsilon > 0$. By \eqref{eq:rate_M}, $M(\hat{g}_{n,\lambda_n}) = O_P(a_n)$, so there exist $K_\varepsilon < \infty$ and $N_\varepsilon \in \mathbb{N}$ such that
\begin{align*}
    \Pr\big(M(\hat{g}_{n,\lambda_n}) > c_0\,K_\varepsilon\,a_n\big) \le \frac{\varepsilon}{2} \qquad \text{for all } n \ge N_\varepsilon.
\end{align*}
By \eqref{eq:E_n_consistent}, there exists $N'_\varepsilon \in \mathbb{N}$ such that $\Pr(E_n^c) \le \varepsilon/2$ for all $n \ge N'_\varepsilon$. For $n \ge \max(N_\varepsilon, N'_\varepsilon)$, decomposing on $E_n$ and $E_n^c$ and using \eqref{eq:curvature_event},
\begin{align*}
    \Pr\big(\|\hat{g}_{n,\lambda_n} - g_0\|_{P_X}^2 > K_\varepsilon\,a_n\big)
    &\le \Pr(E_n^c) + \Pr\big(\{c_0^{-1}M(\hat{g}_{n,\lambda_n}) > K_\varepsilon\,a_n\}\cap E_n\big) \\
    &\le \Pr(E_n^c) + \Pr\big(M(\hat{g}_{n,\lambda_n}) > c_0\,K_\varepsilon\,a_n\big)
    \le \frac{\varepsilon}{2} + \frac{\varepsilon}{2} = \varepsilon.
\end{align*}
Since $\varepsilon > 0$ was arbitrary, the definition of $O_P$ gives
\begin{align*}
    \|\hat{g}_{n,\lambda_n} - g_0\|_{P_X}^2 = O_P(a_n) = O_P\big(\frac{1}{\lambda_n} + \sqrt{\frac{V}{n}} + \frac{\lambda_n}{n}\big),
\end{align*}
which is \eqref{eq:rate_main}. 
\end{proof}

\subsection{Proof of \Cref{cor:null_general}}\label{app:null}
\begin{proof}
Recall $V_n$ from \eqref{eq:marked_process}. Under $H_0$, the regression function of $R_g$ on $F_g$ is the constant $\bar\sigma_g^2$. We may therefore apply the constant-model case of \citet{stute1997nonparametric} with covariate $F_g$, response $R_g$, model $m(\cdot,\theta)=\theta$, and estimator $\bar R_g$. Their marked empirical process then coincides with $V_n$.

The limit theory rests on two conditions, both imposed on the model being checked by \citet{stute1997nonparametric} and distinct from the assumptions of the present paper. We verify each for the constant model.

The first, Assumption~1 of that paper, asks the estimator of the model parameter to be asymptotically linear, that is, $\sqrt{n}(\theta_n - \theta_0) = n^{-1/2}\sum_{i=1}^n \varphi(X_i,Y_i) + o_P(1)$ for an influence function $\varphi$ with mean zero and finite variance. The condition is what controls the estimation error, since the residuals entering $R_n^1$ are formed at $\theta_n$ rather than at $\theta_0$, and the sampling error of $\theta_n$ therefore propagates into the process; writing it as an average lets it be absorbed into the same limit. Here $\theta_n = \bar{R}_g$ estimates $\theta_0 = \bar{\sigma}_g^2$, and the expansion holds with no remainder at all, since
\begin{align*}
\sqrt{n}\,\big(\bar{R}_g - \bar{\sigma}_g^2\big) = \frac{1}{\sqrt{n}} \sum_{i=1}^{n} \big(R_{g,i} - \bar{\sigma}_g^2\big),
\end{align*}
so the influence function is $\varphi = R_g - \bar{\sigma}_g^2$.

The second, Assumption~2 of that paper, asks $m(\cdot,\theta)$ to be smooth in $\theta$, which enters through a linearisation of $m(X,\theta_n)$ around $\theta_0$. For the constant model $m(\cdot,\theta) = \theta$ this map is the identity, and the condition is immediate.

The sample paths of $V_n$ are step functions, so $V_n$ is a random element of $D[-\infty, \infty]$, the space of right-continuous functions with left limits in which \citet{stute1997nonparametric} states the limit theory. Since $\Var(R_g) < \infty$ gives $\E[R_g^2] < \infty$, \citet[Corollary~1.3]{stute1997nonparametric} shows that $V_n$ converges in distribution to a mean-zero Gaussian process $G^0$ with covariance
\begin{align*}
K^0(s,t) &= T_g(s \wedge t) - H_g(s)\,T_g(t) - H_g(t)\,T_g(s) + H_g(s)\,H_g(t)\,T_g(\infty),\\
T_g(s) &= \int_{-\infty}^{s} \Var(R_g \,|\, F_g = u)\,\mathrm{d}H_g(u).
\end{align*}
The function $T_g$ is the cumulative conditional variance of the marks, and coincides with \citet[display~(1.4)]{stute1997nonparametric} for this pair. In particular, $T_g(\infty) = \E\big[\Var(R_g \,|\, F_g)\big] = \Var(R_g)$, since $\E[R_g \,|\, F_g]$ is constant under $H_0$.

It remains to pass from $V_n$ to the statistic. Write $\hat{H}_{n,g}$ for the empirical distribution function of $F_{g,1},\dots,F_{g,n}$, so that
\begin{align*}
\hat{T}_n(g) = \widehat{\Var}_n(R_g)^{-1} \int V_n(s)^2 \,\mathrm{d}\hat{H}_{n,g}(s),
\end{align*}
in which the integrating measure is empirical and has to be replaced by $H_g$. Since $H_g$ is continuous, the Glivenko--Cantelli theorem gives $\|\hat{H}_{n,g} - H_g\|_\infty \to 0$ almost surely. The limit $G^0$ has continuous sample paths, since $G^0(s) = W\big(T_g(s)\big) - H_g(s)\,W\big(T_g(\infty)\big)$ for a standard Brownian motion $W$, as one checks by computing the covariance of the right-hand side and comparing with $K^0$, and both $T_g$ and $H_g$ are continuous, $T_g$ because the measure it induces has a density with respect to the atomless $\mathrm{d}H_g$. A sequence converging weakly and a sequence converging in probability to a constant converge jointly, so $(V_n, \hat{H}_{n,g}) \Rightarrow (G^0, H_g)$. Weak convergence in $D[-\infty,\infty]$ to a continuous limit is uniform, and the map $(v,F) \mapsto \int v^2\,\mathrm{d}F$ is continuous at pairs of continuous arguments, so the continuous mapping theorem gives $\int V_n^2\,\mathrm{d}\hat{H}_{n,g} \Rightarrow \int (G^0)^2\,\mathrm{d}H_g$.

The weak law of large numbers, using $\E[R_g^2] < \infty$, gives $\widehat{\Var}_n(R_g) \overset{P}{\to} \Var(R_g)$. By Slutsky's lemma, $\hat{T}_n(g)$ converges in distribution to $\Var(R_g)^{-1} \int G^0(s)^2\,\mathrm{d}H_g(s)$, and convergence in distribution implies $\hat{T}_n(g) = O_P(1)$, so $\hat{\cT}_n(g) = \hat{T}_n(g)\,\widehat{\Var}_n(R_g)/n = O_P(1/n)$. If instead $\Var(R_g) = 0$, then $R_g = \bar{\sigma}_g^2$ almost surely, the process $V_n$ is identically $0$, and $\hat{\cT}_n(g) = 0$.
\end{proof}

\subsection{Proof of \Cref{cor:null_g0}}\label{app:null_g0}
\begin{proof}
Under \Cref{ass:structural}, $R_{g_0} = \varepsilon^2$ and $\varepsilon \indep g_0(X)$. The conditional variance is then constant, $\Var(R_{g_0} \,|\, F_{g_0}) = \Var(\varepsilon^2)$, so $T_{g_0} = \Var(\varepsilon^2)\, H_{g_0}$, with $T_{g_0}$ and $K^0$ as in \Cref{app:null}. Substituting into $K^0$ gives
\begin{align*}
K^0(s,t) = \Var(\varepsilon^2)\,\big(H_{g_0}(s \wedge t) - H_{g_0}(s)\,H_{g_0}(t)\big).
\end{align*}
Recall that a standard Brownian bridge $B$ is the mean-zero Gaussian process on $[0,1]$ with covariance $\Cov\big(B(u), B(w)\big) = u \wedge w - u\,w$. Because $H_{g_0}$ is nondecreasing, $H_{g_0}(s) \wedge H_{g_0}(t) = H_{g_0}(s \wedge t)$, so $K^0$ is the covariance of the process $\sqrt{\Var(\varepsilon^2)}\, B\big(H_{g_0}(\cdot)\big)$, and $G^0$ has this distribution. Since $H_{g_0}$ is continuous (\Cref{def:Gclass}), $H_{g_0}(F_{g_0}) \sim \mathrm{Uniform}(0,1)$, and the limit of $\hat{T}_n(g_0)$ in \Cref{cor:null_general} becomes
\begin{align*}
\frac{1}{\Var(\varepsilon^2)} \int_0^1 \Var(\varepsilon^2)\, B(u)^2\,\mathrm{d}u = \int_0^1 B(u)^2\,\mathrm{d}u,
\end{align*}
the classical Cram\'er--von Mises distribution \citep{anderson1952asymptotic}.
\end{proof}

\subsection{Proof of \Cref{lem:envcvm_causal_in}}\label{app:envcvm_causal_in}
\begin{proof}
Fix an environment $e$. Then
\begin{align*}
Y-g_0(X) = \varepsilon_e.
\end{align*}
Because $\varepsilon_e \indep g_0(X)$, we have
\begin{align*}
\E_{P_e}\big[(Y-g_0(X))^2 \,\big|\, g_0(X)\big]
=
\E_{P_e}\big[\varepsilon_e^2 \,\big|\, g_0(X)\big]
=
\E_{P_e}[\varepsilon_e^2],
\end{align*}
which is constant in $g_0(X)$. Therefore $g_0 \in \cG_e$. Since $e$ was arbitrary,
$g_0 \in \cG_\cap$.
\end{proof}

\subsection{Derivation for \Cref{ex:heterogeneity_linear}}\label{app:heterogeneity_linear}
Let $Z_e=\varepsilon_e+X^{e\top}(\beta_0-\beta)$ and $T_e=X^{e\top}\beta$. Joint Gaussianity and centring give
\begin{align*}
Z_e\,|\, T_e=t
\sim
\mathcal N\bigg(
\frac{\Cov_{P_e}(Z_e,T_e)}{\Var_{P_e}(T_e)}t,
\Var_{P_e}(Z_e)-\frac{\Cov_{P_e}(Z_e,T_e)^2}{\Var_{P_e}(T_e)}
\bigg).
\end{align*}
The conditional variance is constant in $t$. Because $T_e$ is nondegenerate Gaussian, $\E[Z_e^2\,|\, T_e=t]$ is constant in $t$ if and only if $\Cov_{P_e}(Z_e,T_e)=0$. Finally,
\begin{align*}
\Cov_{P_e}(Z_e,T_e)
=\Cov_{P_e}(\varepsilon_e,X^{e\top}\beta)
+\beta^\top\Sigma_e(\beta_0-\beta),
\end{align*}
which proves the equivalence in \Cref{ex:heterogeneity_linear}.

\subsection{Proof of \Cref{thm:envcvm_identification}}\label{app:envcvm_identification}
\begin{proof}
\Cref{lem:envcvm_causal_in} gives $g_0 \in \cG_e$ for every $e$, hence $g_0 \in \cG_\cap$. Conversely, take any $g \in \cG_\cap$. Then $\cT^e(g) = 0$ for every $e$. If $g \ne g_0$, \Cref{ass:heterogeneity} gives an environment $e^\ast$ with $\cT^{e^\ast}(g) > 0$, contradicting $g \in \cG_{e^\ast}$. Hence $g = g_0$, and $\cG_\cap = \{g_0\}$.

Since each $\cT^e$ is nonnegative and $w_e > 0$, $\cT^{\rm ew}(g) = 0$ if and only if $\cT^e(g) = 0$ for every $e$, that is, if and only if $g \in \cG_\cap$. Therefore $g_0$ is the unique zero of $\cT^{\rm ew}$ over $\cG$.
\end{proof}

\subsection{Proof of \Cref{prop:envcvm_uniform}}\label{app:envcvm_uniform}
\begin{proof}
Write $a_e^g \coloneqq \hat{\cT}_n^e(g) - \cT^e(g)$. By the triangle inequality and non-negativity of $w_e$,
\begin{align*}
\sup_{g\in \cG} \big|\hat{\cT}_n^{\rm ew}(g) - \cT^{\rm ew}(g)\big|
= \sup_{g\in \cG} \Big|\sum_e w_e a_e^g\Big|
\le \sum_e w_e \sup_{g\in \cG} |a_e^g|.
\end{align*}
\Cref{thm:uniform_t}(b) applied to environment $e$ gives $\sup_{g\in \cG} |a_e^g| = O_P\big(\sqrt{V/n_e}\big)$, and $V$ is a fixed constant of the class, so $\sup_{g\in \cG} |a_e^g| = O_P(n_e^{-1/2})$. Summing over the finitely many environments, the right-hand side is $\sum_e w_e \cdot O_P(n_e^{-1/2}) = O_P\big(\sum_e w_e n_e^{-1/2}\big)$ as claimed.
\end{proof}

\subsection{Proof of \Cref{prop:envcvm_consistency}}\label{app:envcvm_consistency}
\begin{proof}
The estimator $\hat{g}_{n,\lambda_n}^{\rm ew}$ minimises the empirical objective, so comparing to $g_0$,
\begin{align}\label{eq:ew_basic}
\cL_n^{\rm ew}(\hat{g}_{n,\lambda_n}^{\rm ew}) + \lambda_n \hat{\cT}_n^{\rm ew}(\hat{g}_{n,\lambda_n}^{\rm ew})
\le
\cL_n^{\rm ew}(g_0) + \lambda_n \hat{\cT}_n^{\rm ew}(g_0).
\end{align}
\emph{Step 1: the penalty at the truth is negligible.} By \Cref{lem:envcvm_causal_in}, $g_0 \in \cG_e$ for every $e$, so $\cT^e(g_0) = 0$ for every $e$ and hence $\cT^{\rm ew}(g_0) = 0$. Applied within environment $e$, where the sample is i.i.d.\ from $P_e$, \Cref{cor:null_general} gives $\hat{\cT}_n^e(g_0) = O_P(1/n_e)$. Averaging,
\begin{align*}
\hat{\cT}_n^{\rm ew}(g_0) = \sum_{e=1}^{K} w_e\,O_P(1/n_e) = O_P(K/n),
\end{align*}
the last equality using $w_e = n_e/n$. Since $\lambda_n = o(\min_e n_e)$ and $\min_e n_e \le n/K$, we get $\lambda_n = o(n/K)$ and therefore
\begin{align}\label{eq:ew_truth_negligible}
\lambda_n\,\hat{\cT}_n^{\rm ew}(g_0) = o_P(1).
\end{align}

\emph{Step 2: two uniform laws.} The empirical risk $\cL_n^{\rm ew}(g) = \sum_e w_e\, P_{n_e}^e\big[(Y-g(X))^2\big]$ is a weighted average of per-environment empirical risks, so
\begin{align*}
\sup_{g\in\cG}\big|\cL_n^{\rm ew}(g) - \cL^{\rm ew}(g)\big|
\le
\sum_e w_e \sup_{g\in\cG}\Big|P_{n_e}^e\big[(Y-g(X))^2\big] - \E_{P_e}\big[(Y-g(X))^2\big]\Big|
\overset{P}{\to} 0,
\end{align*}
each environment-wise uniform law holding on the i.i.d.\ sample from $P_e$ by \Cref{lem:pdim,lem:envelop,lem:PGC_VC}. The same averaging applied to \Cref{thm:uniform_t}(a) within each environment gives
\begin{align}\label{eq:ew_uniform}
\delta_n \coloneqq \sup_{g\in\cG}\big|\hat{\cT}_n^{\rm ew}(g) - \cT^{\rm ew}(g)\big| \overset{P}{\to} 0.
\end{align}

\emph{Step 3: the penalty at the estimator converges to $0$.} Rearranging \eqref{eq:ew_basic} and using $\cL_n^{\rm ew} \ge 0$,
\begin{align*}
\lambda_n\,\hat{\cT}_n^{\rm ew}(\hat{g}_{n,\lambda_n}^{\rm ew})
\le
\cL_n^{\rm ew}(g_0) - \cL_n^{\rm ew}(\hat{g}_{n,\lambda_n}^{\rm ew}) + \lambda_n\,\hat{\cT}_n^{\rm ew}(g_0)
\le
\cL_n^{\rm ew}(g_0) + \lambda_n\,\hat{\cT}_n^{\rm ew}(g_0).
\end{align*}
The first term is $O_P(1)$, since Step 2 gives $\cL_n^{\rm ew}(g_0) \to \cL^{\rm ew}(g_0) < \infty$ in probability, and the second is $o_P(1)$ by \eqref{eq:ew_truth_negligible}. Hence $\lambda_n\,\hat{\cT}_n^{\rm ew}(\hat{g}_{n,\lambda_n}^{\rm ew}) = O_P(1)$, that is,
\begin{align*}
\hat{\cT}_n^{\rm ew}(\hat{g}_{n,\lambda_n}^{\rm ew}) = O_P(1/\lambda_n) = o_P(1),
\end{align*}
because $\lambda_n \to \infty$. Combining with \eqref{eq:ew_uniform},
\begin{align}\label{eq:ew_T_to_zero}
0 \le \cT^{\rm ew}(\hat{g}_{n,\lambda_n}^{\rm ew}) \le \hat{\cT}_n^{\rm ew}(\hat{g}_{n,\lambda_n}^{\rm ew}) + \delta_n \overset{P}{\to} 0 .
\end{align}

\emph{Step 4: from the penalty to the estimator.} Fix $\varepsilon > 0$ and set $K_\varepsilon \coloneqq \{g \in \cG : \|g - g_0\|_{L^2(P_X^{\mathrm{mix}})} \ge \varepsilon\}$.
The envelope and pseudo-dimension conditions under $P_{\mathrm{mix}}$ follow from the corresponding conditions in the finitely many environments. Together with the mixture closedness assumed in \Cref{prop:envcvm_consistency}, \Cref{lem:compact} makes $\cG$ compact in $L^2(P_X^{\mathrm{mix}})$. The closed subset $K_\varepsilon$ is therefore compact. Because
\begin{align*}
\|g - g'\|_{L^2(P_X^{\mathrm{mix}})}^2 = \sum_e w_e\,\|g - g'\|_{L^2(P_{X,e})}^2 ,
\end{align*}
convergence in $L^2(P_X^{\mathrm{mix}})$ implies convergence in every $L^2(P_{X,e})$, so \Cref{lem:continuity} applied within each environment makes every $\cT^e$, and therefore $\cT^{\rm ew}$, continuous on $\cG$ in $L^2(P_X^{\mathrm{mix}})$. A continuous function on a compact set attains its minimum, so $c_\varepsilon \coloneqq \min_{g \in K_\varepsilon} \cT^{\rm ew}(g)$ is attained. This minimum is positive. If $\cT^{\rm ew}(g) = 0$ for some $g \in K_\varepsilon$, then $\cT^e(g) = 0$ for every $e$ because the $\cT^e$ are nonnegative and the $w_e$ positive, so $g \in \cG_\cap$, and \Cref{thm:envcvm_identification} forces $g = g_0$, contradicting $g \in K_\varepsilon$.

Consequently $\|\hat{g}_{n,\lambda_n}^{\rm ew} - g_0\|_{L^2(P_X^{\mathrm{mix}})} \ge \varepsilon$ entails $\cT^{\rm ew}(\hat{g}_{n,\lambda_n}^{\rm ew}) \ge c_\varepsilon$, whence
\begin{align*}
\Pr\Big(\big\|\hat{g}_{n,\lambda_n}^{\rm ew} - g_0\big\|_{L^2(P_X^{\mathrm{mix}})} \ge \varepsilon\Big)
\le
\Pr\Big(\cT^{\rm ew}(\hat{g}_{n,\lambda_n}^{\rm ew}) \ge c_\varepsilon\Big)
\longrightarrow 0
\end{align*}
by \eqref{eq:ew_T_to_zero}. As $\varepsilon > 0$ was arbitrary, $\hat{g}_{n,\lambda_n}^{\rm ew} \to g_0$ in $L^2(P_X^{\mathrm{mix}})$ in probability, and by the displayed decomposition of the norm, in every $L^2(P_{X,e})$ as well.
\end{proof}

\subsection{Proof of \Cref{prop:joint_bound}}\label{app:joint_bound}
\begin{proof}
The inequality $\cL_\lambda(g;\gamma)\geq\cL_{\mathrm{DRIG}}(g;\gamma)$ follows from $\cT^{\rm ew}(g)\geq0$. For $g_\beta(x)=x^\top\beta$, put $a=(-\beta^\top,1)^\top$ and $b=(\mathrm{Id}-\mathrm B)^{-\top}a$. Under the linear DRIG model,
\begin{align*}
\sum_e\omega_e\big\{\E_{P_e}[R_{g_\beta}]-\E_{P_{e_0}}[R_{g_\beta}]\big\}
=b^\top\bigg\{\sum_e\omega_e(S_e-S_{e_0})\bigg\}b
\geq0.
\end{align*}
Therefore $\cL_{\mathrm{DRIG}}(g_\beta;\gamma)$ is nondecreasing in $\gamma$. At $\gamma=1$ it equals the pooled risk, so for $\gamma\geq1$,
\begin{align*}
\cL_\lambda(g_\beta;\gamma)
&=\cL_{\mathrm{DRIG}}(g_\beta;\gamma)+\lambda\cT^{\rm ew}(g_\beta)\\
&\geq \E_{P_{\mathrm{mix}}}[R_{g_\beta}]+\lambda\cT^{\rm ew}(g_\beta).
\end{align*}
Combining the two inequalities proves \eqref{eq:max_domination}.
\end{proof}

\subsection{Proof of \Cref{prop:drig_envcvm_limit}}\label{app:drig_envcvm_limit}

\begin{proof}
Let $\tilde g \in \cG_\cap$, a nonempty set by hypothesis; then $\cT^{\rm ew}(\tilde g) = 0$ by definition of $\cG_\cap$. Since $g_{\gamma,\lambda}$ minimises $\mathcal{L}_\lambda(\,\cdot\,;\gamma)$ over $\cG$ and $\tilde g \in \cG$,
\begin{align}\label{eq:basic_ineq_lambda}
\mathcal{L}_{\mathrm{DRIG}}(g_{\gamma,\lambda}; \gamma) + \lambda\, \cT^{\rm ew}(g_{\gamma,\lambda}) = \mathcal{L}_\lambda(g_{\gamma,\lambda}; \gamma) \le \mathcal{L}_\lambda(\tilde g; \gamma) = \mathcal{L}_{\mathrm{DRIG}}(\tilde g; \gamma).
\end{align}
Continuity of $\mathcal{L}_{\mathrm{DRIG}}(\,\cdot\,; \gamma)$ on the compact set $\cG$ yields $\underline{\cL} \coloneqq \min_{g \in \cG} \mathcal{L}_{\mathrm{DRIG}}(g; \gamma) > -\infty$. Rearranging \eqref{eq:basic_ineq_lambda} and using $\mathcal{L}_{\mathrm{DRIG}}(g_{\gamma,\lambda}; \gamma) \ge \underline{\cL}$,
\begin{align}\label{eq:Tew_to_zero}
0 \le \cT^{\rm ew}(g_{\gamma,\lambda}) \le \frac{\mathcal{L}_{\mathrm{DRIG}}(\tilde g; \gamma) - \underline{\cL}}{\lambda} \xrightarrow[\lambda \to \infty]{} 0.
\end{align}

Let $\lambda_k \to \infty$, and let $\bar g \in \cG$ be the limit in $L^2(P_X^{\mathrm{mix}})$ of some subsequence of $(g_{\gamma,\lambda_k})_k$, which exists because $\cG$ is compact. Along this subsequence we have $g_{\gamma,\lambda_k} \to \bar g$, thus the continuity of $\cT^{\rm ew}$ gives $\cT^{\rm ew}(g_{\gamma,\lambda_k}) \to \cT^{\rm ew}(\bar g)$, while \eqref{eq:Tew_to_zero} gives $\cT^{\rm ew}(g_{\gamma,\lambda_k}) \to 0$. The two limits agree, so $\cT^{\rm ew}(\bar g) = 0$. As $\cT^{\rm ew} = \sum_e w_e \cT^e$ with $w_e > 0$ and each $\cT^e \ge 0$, this forces $\cT^e(\bar g) = 0$ for every $e$, that is, $\bar g \in \cG_\cap$.

The term $\lambda_k\, \cT^{\rm ew}(g_{\gamma,\lambda_k})$ is nonnegative. Dropping it from the left of \eqref{eq:basic_ineq_lambda} gives $\mathcal{L}_{\mathrm{DRIG}}(g_{\gamma,\lambda_k}; \gamma) \le \mathcal{L}_{\mathrm{DRIG}}(\tilde g; \gamma)$. As
$k \to \infty$ along the same subsequence, continuity of $\mathcal{L}_{\mathrm{DRIG}}(\,\cdot\,; \gamma)$ gives
\begin{align*}
\mathcal{L}_{\mathrm{DRIG}}(\bar g; \gamma) \le \mathcal{L}_{\mathrm{DRIG}}(\tilde g; \gamma) \qquad \text{for every } \tilde g \in \cG_\cap.
\end{align*}
With $\bar g \in \cG_\cap$, this makes $\bar g$ a minimiser of $\mathcal{L}_{\mathrm{DRIG}}(\,\cdot\,; \gamma)$ over $\cG_\cap$, so $\bar g = g^\ast_\gamma$ by uniqueness. Every subsequence of $(g_{\gamma,\lambda_k})_k$ that converges in $L^2(P_X^{\mathrm{mix}})$ therefore has limit $g^\ast_\gamma$. Since the sequence lies in the compact set $\cG$ and has $g^\ast_\gamma$ as its only cluster point, it converges, $g_{\gamma,\lambda_k} \to g^\ast_\gamma$ in $L^2(P_X^{\mathrm{mix}})$. As $\lambda_k \to \infty$ is chosen  arbitrarily, $g_{\gamma,\lambda} \to g^\ast_\gamma$ in $L^2(P_X^{\mathrm{mix}})$ as $\lambda \to \infty$.

\end{proof}

\section{Wild bootstrap for the goodness-of-fit test}\label{app:bootstrap}
The null limit of \Cref{cor:null_general} involves the unknown conditional variance $\Var(R_g \,|\, F_g)$, and the wild bootstrap reproduces it from the data \citep{mammen1993bootstrap,stute1998bootstrap}. The procedure has three steps.

\emph{Step 1: multipliers.} Draw $\xi_1, \dots, \xi_n$ i.i.d., independent of the data, with $\E[\xi_1] = 0$, $\Var(\xi_1) = 1$, and $\E|\xi_1|^{2+\delta} < \infty$ for some $\delta > 0$, which the standard normal and Rademacher choices both satisfy.

\emph{Step 2: bootstrap statistic.} Multiply each centred residual in \eqref{eq:marked_process} by its $\xi_i$ and apply the same bridge correction,
\begin{align*}
V_n^\xi(s) = \frac{1}{\sqrt{n}}\sum_{i=1}^{n} \xi_i\,(R_{g,i} - \bar{R}_g)\,\mathbbm{1}\{F_{g,i} \le s\} - \hat{H}_n(s)\,\frac{1}{\sqrt{n}}\sum_{i=1}^{n} \xi_i\,(R_{g,i} - \bar{R}_g),
\end{align*}
where $\hat{H}_n$ is the empirical distribution function of $F_{g,1}, \dots, F_{g,n}$, and set $\hat{T}_n^\xi(g) = \big(n\,\widehat{\Var}_n(R_g)\big)^{-1}\sum_{j=1}^{n} V_n^\xi(F_{g,j})^2$.

\emph{Step 3: critical value.} Repeat Step 2 over $B_n$ independent multiplier draws and let $\hat c_\alpha$ be the empirical $(1-\alpha)$ quantile of the bootstrap statistics. For the asymptotic level statement, take $B_n\to\infty$, or interpret $\hat c_\alpha$ as the exact conditional bootstrap quantile. The test rejects when $\hat{T}_n(g) > \hat{c}_\alpha$.

\begin{proposition}[Validity of the wild bootstrap]\label{prop:bootstrap_validity}
Let $g\in\cG$ be fixed and suppose $\E[R_g^2]<\infty$ and $\Var(R_g)>0$. Under $H_0$, where $g\in\cG_I$, conditionally on the data, $\hat T_n^\xi(g)$ converges weakly in probability to the same limit as $\hat T_n(g)$ in \Cref{cor:null_general}. Hence the test using the conditional $(1-\alpha)$ quantile $\hat c_\alpha$ has asymptotic level $\alpha$ whenever the corresponding limit quantile is a continuity point. Under $H_1$, where $g\notin\cG_I$, $\hat c_\alpha=O_P(1)$ while $\hat T_n(g)\to\infty$, so the test is consistent.
\end{proposition}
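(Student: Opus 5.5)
The plan is to reduce the statement to a conditional multiplier central limit theorem for the marked process, and then argue exactly as in the passage from $V_n$ to $\hat T_n(g)$ in the proof of \Cref{cor:null_general}.

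Write $e_i=R_{g,i}-\bar R_g$. The bridge-corrected process is
\begin{align*}
V_n^\xi(s)=\frac{1}{\sqrt n}\sum_{i=1}^n \xi_i e_i\big(\mathbbm{1}\{F_{g,i}\le s\}-\hat H_n(s)\big).
\end{align*}
Conditionally on the data, $V_n^\xi$ is a sum of independent, mean-zero processes. Its conditional covariance is
\begin{align*}
\hat K_n(s,t)=\frac1n\sum_i e_i^2\big(\mathbbm{1}\{F_{g,i}\le s\}-\hat H_n(s)\big)\big(\mathbbm{1}\{F_{g,i}\le t\}-\hat H_n(t)\big).
\end{align*}

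\textbf{Step 1 (covariance).} Show that $\hat K_n(s,t)\to K^0(s,t)$ uniformly in probability. Replace $e_i$ by $R_{g,i}-\bar\sigma_g^2$, which costs $o_P(1)$ because $\bar R_g\to\bar\sigma_g^2$ and $\E[R_g^2]<\infty$. Replace $\hat H_n$ by $H_g$ using Glivenko--Cantelli. Then apply the uniform law of large numbers to the products $(R_g-\bar\sigma_g^2)^2\mathbbm{1}\{F_g\le s\}$; this is a VC class in $s$ with integrable envelope, so \Cref{lem:PGC_VC} applies. Under $H_0$, $\E[(R_g-\bar\sigma_g^2)^2\mid F_g]=\Var(R_g\mid F_g)$, so the limit of $\E[(R_g-\bar\sigma_g^2)^2(\mathbbm 1\{F_g\le s\}-H_g(s))(\mathbbm 1\{F_g\le t\}-H_g(t))]$ expands exactly to $K^0$ with $T_g$ as in \Cref{app:null}.

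\textbf{Step 2 (conditional weak convergence).} This is the step I expect to be the main obstacle. Finite-dimensional convergence follows from the Lindeberg CLT conditionally on the data, using Step 1 and $\max_i e_i^2/n\to0$ in probability; the latter follows from $\E[R_g^2]<\infty$. For tightness there are two routes. The first is the conditional multiplier CLT of van der Vaart and Wellner (Theorem~2.9.6/2.9.7) for the class $\{(r,f)\mapsto (r-\bar\sigma_g^2)\mathbbm 1\{f\le s\}\}$. This class is Donsker, being VC-subgraph with square-integrable envelope, and the theorem requires only $\|\xi\|_{2,1}<\infty$, which the assumption $\E|\xi|^{2+\delta}<\infty$ supplies. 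The extra terms coming from $\bar R_g$ and $\hat H_n$ are handled by writing them as $(\bar\sigma_g^2-\bar R_g)n^{-1/2}\sum\xi_i\mathbbm 1\{\cdot\}$, which is $o_P(1)$ times an $O_P(1)$ process, and as $\hat H_n(s)n^{-1/2}\sum \xi_i e_i$, a continuous functional of the same process. Alternatively, one can cite Stute, González Manteiga and Presedo Quindimil (1998) directly for the constant model. Either route gives $V_n^\xi\Rightarrow G^0$ in $D[-\infty,\infty]$ conditionally in probability, and $G^0$ has continuous paths as in \Cref{app:null}.

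\textbf{Step 3 (statistic and level).} Since $G^0$ is continuous, the map $(v,F)\mapsto\int v^2\,\mathrm dF$ is continuous at $(G^0,H_g)$. Hence $\int (V_n^\xi)^2\,\mathrm d\hat H_n\Rightarrow\int (G^0)^2\,\mathrm dH_g$ conditionally in probability. Dividing by $\widehat\Var_n(R_g)\to\Var(R_g)>0$ gives the first claim. Combining with \Cref{cor:null_general} through the standard quantile argument yields $\hat c_\alpha\to c_\alpha$ in probability at continuity points, and so $\Pr(\hat T_n(g)>\hat c_\alpha)\to\alpha$. When $B_n\to\infty$, the Monte Carlo quantile converges to the exact conditional quantile (Dvoretzky--Kiefer--Wolfowitz).

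\textbf{Step 4 ($H_1$).} Under $H_1$, Steps 1--2 never used $H_0$ for tightness. The conditional covariance still converges, now to $\E[(R_g-\bar\sigma_g^2)^2(\mathbbm 1\{F_g\le s\}-H_g(s))(\mathbbm 1\{F_g\le t\}-H_g(t))]$, which is bounded. Therefore $\hat T_n^\xi(g)$ converges conditionally to a finite limit, and $\hat c_\alpha=O_P(1)$. A cruder alternative is Markov's inequality applied to $\E_\xi[\hat T_n^\xi]=O_P(1)$. Meanwhile $\hat T_n(g)=n\hat\cT_n(g)/\widehat\Var_n(R_g)\to\infty$, because $\hat\cT_n(g)\to\cT(g)>0$ by \Cref{prop:completeness} together with the law of large numbers at the fixed $g$. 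So the rejection probability tends to one.
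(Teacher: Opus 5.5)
Your proposal is correct and follows essentially the paper's route. Both arguments rest on the conditional multiplier CLT (Theorem~2.9.6 of van der Vaart and Wellner) for the Donsker class of functions $(R_g-\bar\sigma_g^2)$ times half-line indicators, and on $o_P(1)\cdot O_P(1)$ bounds that absorb the centring at $\bar R_g$ and the replacement of $H_g$ by $\hat H_n$ in the bridge term. Both then apply the continuous mapping theorem to $\int (V_n^\xi)^2\,\mathrm d\hat H_n$, and under $H_1$ use conditional tightness together with $\hat\cT_n(g)\to\cT(g)>0$. That last convergence needs Glivenko--Cantelli uniformly over thresholds, not just a pointwise law of large numbers at fixed $g$.

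Your Step~1 and the Lindeberg finite-dimensional argument are redundant once the multiplier CLT is invoked. Your Markov alternative in Step~4 is sharper than you state: since $\E_\xi[V_n^\xi(s)^2]\le n^{-1}\sum_i (R_{g,i}-\bar R_g)^2=\widehat\Var_n(R_g)$, one gets $\E_\xi[\hat T_n^\xi(g)]\le 1$, and hence $\hat c_\alpha\le 1/\alpha$ for the exact conditional quantile.
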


For a fixed $g\in\cG_I$, the proposition justifies estimating the unknown null critical value by the conditional bootstrap quantile. At $g_0$, the known critical values in \Cref{cor:null_g0} may be used instead.

\begin{proof}
Write $\mu_g=\bar\sigma_g^2$ and, for $s\in\R$, define
\begin{align*}
f_s(r,f)
=
(r-\mu_g)\big(\mathbbm{1}\{f\leq s\}-H_g(s)\big).
\end{align*}
The class $\mathcal F_g=\{f_s:s\in\R\}$ is pointwise measurable and has the $L^2(P)$ envelope $2|R_g-\mu_g|$. Since the half-line indicators form a VC class, $\mathcal F_g$ is $P$-Donsker. The moment condition on the multipliers in Step~1 satisfies the condition of the multiplier central limit theorem \citep[Theorem~2.9.6]{vandervaart1996weak}. Therefore, conditionally on the data,
\begin{align*}
\widetilde V_n^\xi(s)
=
\frac{1}{\sqrt n}\sum_{i=1}^n
\xi_i(R_{g,i}-\mu_g)
\big(\mathbbm{1}\{F_{g,i}\leq s\}-H_g(s)\big)
\end{align*}
converges weakly in probability to the Gaussian process $G^0$ in \Cref{cor:null_general}.

The empirical centering and bridge correction have no effect on this limit. Indeed,
\begin{align*}
\sup_{s\in\R}\big|V_n^\xi(s)-\widetilde V_n^\xi(s)\big|
&\leq
\|\hat H_n-H_g\|_\infty
\left|
\frac{1}{\sqrt n}\sum_{i=1}^n\xi_i(R_{g,i}-\mu_g)
\right|\\
&\quad+
|\bar R_g-\mu_g|
\sup_{s\in\R}
\left|
\frac{1}{\sqrt n}\sum_{i=1}^n
\xi_i\big(\mathbbm{1}\{F_{g,i}\leq s\}-\hat H_n(s)\big)
\right|
=o_P(1)
\end{align*}
conditionally in probability. Here $\|\hat H_n-H_g\|_\infty=o_P(1)$ and $|\bar R_g-\mu_g|=o_P(1)$, while both multiplier terms are conditionally $O_P(1)$ by the same multiplier theorem.

Finally,
\begin{align*}
\hat T_n^\xi(g)
=
\frac{1}{\widehat{\Var}_n(R_g)}
\int V_n^\xi(s)^2\,\mathrm{d}\hat H_n(s).
\end{align*}
Since $\widehat{\Var}_n(R_g)\to\Var(R_g)>0$ and $\hat H_n\to H_g$ uniformly, the continuous mapping theorem gives the asserted conditional convergence and hence the quantile conclusion. Under $H_1$, the multiplier process remains conditionally tight, so $\hat c_\alpha=O_P(1)$. Moreover, $\hat\cT_n(g)\to\cT(g)>0$ by the Glivenko--Cantelli theorem for half-lines and \Cref{prop:completeness}; hence $\hat T_n(g)\to\infty$.
\end{proof}

\section{Experimental details}\label{sec:exp_details}
\subsection{\texorpdfstring{Nonlinear design (\Cref{sec:nonlinear})}{Nonlinear design}}\label{app:nonlinear}
For the nonlinear benchmark, we use a two-layer fully connected MLP with hidden
dimension $8$, ReLU activations, and Adam optimisation with weight decay
$5\times 10^{-3}$. For gradient-based training, we replace the indicator in
\eqref{eq:sample_cvm} by the smoothing approximation in \eqref{eq:anneal} and
anneal $\tau$ from $0.3$ to $0.01$ during training.

\subsection{Linear non-Gaussian comparison (\Cref{sec:comparison_linear})}
As in \Cref{sec:linear}, $Z \sim \cN(0, 1)$, the noise variables $\varepsilon_Y$ and $\varepsilon_{ch}$ are bimodal, $Y = 2Z + \varepsilon_Y$ and $X_1 = Z$. The anchor variable $A \in \{-1, 0, 1\}$ is drawn uniformly and independently of $(Z, \varepsilon_Y, \varepsilon_{ch})$, and enters only the child mechanism, $X_2 = \frac{1}{2}Y + \varepsilon_{ch} + \delta A$. We use $n_{\mathrm{train}} = 2000$, $n_{\mathrm{test}} = 1000$, $\delta = 0.3$; the test set is fixed across replicates, with $A = 0$ at test time.

ERM is OLS on the pooled training data. Anchor regression and DRIG use the discrete anchor variable $A$, with $\gamma \in \{1, 5, 10, 50, 100\}$; DRIG treats $A = 0$ as the observational environment. The proposed estimator is fitted at $\lambda \in \{0, 10, 10^2, 10^3, 10^4, 10^5, 10^6\}$ by exhaustive grid search on a $101 \times 101$ grid in $(\beta_1, \beta_2) \in [-1, 4]^2$, using the exact empirical penalty \eqref{eq:sample_cvm}. Results are means over $100$ replications.

\subsection{\texorpdfstring{The causal target is minimax in the linear design}{The causal target is minimax in the linear design}}\label{app:linear_minimax}
Write $L_e(\beta)$ for the risk of $g_\beta(x) = \beta_1x_1 + \beta_2x_2$ under the $e$th test mechanism of \Cref{sec:linear}, with $e = 2$ the tripled child noise and $e = 3$ the strong sign flip. Under the second, $X_2 = 2Z + \varepsilon_Y + 3\varepsilon_{ch}$, so the prediction error of $g_\beta$ is $(2-\beta_1-2\beta_2)Z + (1-\beta_2)\varepsilon_Y - 3\beta_2\varepsilon_{ch}$; under the third, $X_2 = -2Z-\varepsilon_Y+\varepsilon_{ch}$ and the prediction error is $(2-\beta_1+2\beta_2)Z + (1+\beta_2)\varepsilon_Y - \beta_2\varepsilon_{ch}$. The three noise variables are independent with mean $0$, $\Var(Z) = 1$, $\Var(\varepsilon_Y) = 4.01$ and $\Var(\varepsilon_{ch}) = 2.26$, so averaging the two risks collapses the cross terms in $\beta_2$ and leaves
\begin{align*}
\frac{L_2(\beta) + L_3(\beta)}{2} = 4.01 + (\beta_1-2)^2 + 19.31\,\beta_2^2 .
\end{align*}
Hence $\max_e L_e(\beta) \ge \big(L_2(\beta)+L_3(\beta)\big)/2 \ge 4.01$, with equality only at $\beta = (2,0)$. All three risks equal $4.01$ there, so $g_0$ attains the bound and is the unique minimax linear prediction function over the three mechanisms, though \Cref{sec:linear} records that it is optimal within none of them separately.

\subsection{Single-cell RPE1 (\Cref{sec:singlecell})}\label{app:singlecell}

We preprocess the RPE1 Perturb-seq data of \citet{replogle2022mapping} following \citet{chevalley2025causalbench}, then follow the ten-gene prediction setup of \citet{shen2026causality}, using the expression of EIF1 (Ensembl ID ENSG00000173812) as the response and the expression levels of the remaining nine genes as covariates; all gene-expression values are centred by the per-gene non-targeting mean. The nine covariate genes have Ensembl IDs \mbox{ENSG00000187514}, \mbox{ENSG00000075624}, \mbox{ENSG00000147604}, \mbox{ENSG00000110700}, \mbox{ENSG00000172757}, \mbox{ENSG00000133112}, \mbox{ENSG00000067225}, \mbox{ENSG00000108518}, and \mbox{ENSG00000125691}. The training pool is the observational sample (cells under the non-targeting guide, $11{,}485$ cells) together with $10$ interventional samples, one CRISPR knockdown for each of the $10$ genes, of sizes between $101$ and $552$ cells and $1{,}865$ cells in total, giving $n = 13{,}350$ in our processed data. The full processed file holds $162{,}733$ cells, the remainder being the cells excluded in preprocessing and those from the interventions on genes outside the $10$, which supply the test environments. Following \citet{shen2026causality}, the $50$ test environments are single-gene knockdowns of genes outside these $10$, on which observational OLS has the highest mean-squared error.

One of the $10$ training environments intervenes on the response gene $Y$ itself and therefore need not satisfy \Cref{ass:structural_multi}. We retain it so that the training data coincide with those of \citet{shen2026causality}. The RPE1 analysis is accordingly an empirical stress test, and we do not assert that every hypothesis of \Cref{thm:envcvm_identification} holds for these data.

We compare $6$ estimators, with $\cG$ the linear class. These are OLS on the pooled training data (OLS-pool), the proposed estimator at $\lambda \in \{1, 10, 10^2, 10^3, 10^4, 10^5, 10^6\}$ in two versions, one fitted on the pooled data as if there were no environment labels and one the environment-weighted CvM, DRIG and anchor regression at $\gamma \in \{1, 5, 10, 20, 50, 100\}$ with the environment indicator as the anchor variable, and the combined DRIG and CvM penalty at $\gamma \in \{5, 10, 20\}$ and $\lambda \in \{500, 1000, 2000\}$.

The configuration shown for each method in \Cref{fig:singlecell_full} is the one attaining the smallest worst-case test MSE over that method's grid, which is the same oracle-envelope rule used for \Cref{tab:singlecell_few_envs} and is not a deployable tuning procedure. For the combined estimator the number of training epochs enters that grid too, selected from a trace logged every $200$ epochs over $5000$ and giving $3600$. The remaining iteratively fitted methods are read at the end of training. Sweeping the CvM and the environment-weighted CvM fits over the same epoch schedule shows that both attain their smallest worst-case test MSE at the last logged epoch, so early stopping does not improve either fit.

For the sweep in \Cref{tab:singlecell_few_envs}, at each $K \in \{1, 2, 3, 5, 7, 10\}$ and each of $20$ random seeds we draw $K$ of the $10$ interventional environments uniformly at random, pool them with the observational sample, and subsample proportionally so that the training pool has about $5000$ rows; all methods see the same rows. The $K = 0$ row follows the same protocol with no interventional environment drawn, so the training pool is a subsample of about $5000$ observational cells, and the two estimators it admits are OLS on those cells and the CvM estimator \eqref{eq:finite_estimator} fitted on them. The $\lambda$ selected there is $10^5$ or $10^6$ in every seed, at the top of the grid, so the row records what the criterion attains and not that $\lambda$ is easy to choose without labels. Extending the grid by one decade does not change the value. At $\lambda = 10^7$ the mean worst-case test MSE is $0.740$ against $0.741$ over the grid ending at $10^6$, so the reported figure has levelled off rather than been truncated at the boundary. The hyperparameter grids are as above. At every $(K, \mathrm{seed})$ we record, for each method, the configuration over its grid that gives the smallest worst-case test MSE. The test set is as in \Cref{fig:singlecell_full} and is held fixed across $K$ and across seeds.

\end{document}